\newif\ifec\ectrue
\def\ECversion{false}
\newif\ifreview\reviewfalse
\pgfplotsset{compat=1.18}
\tikzstyle{vertex} = [shape=circle,draw=black]
\tikzstyle{namedVertex} = [shape=circle,draw=black]
\tikzstyle{namedVertexF} = [shape=circle,draw=black,fill=white]
\tikzstyle{namedVertexW} = [shape=circle,draw=white,fill=white]
\tikzstyle{edge} = [draw,->,ultra thick]
\tikzstyle{labeledNodeS}=[circle, color=black!75!white, draw, inner sep = 0.1em, minimum size = 1.5em, scale=1.25]
\tikzstyle{normalEdge}=[very thick, >=stealth]
\definecolor{colA}{rgb}{.8,0,0}
\definecolor{colB}{rgb}{0,0,.8}
\definecolor{colC}{rgb}{0,.5,.3}
\definecolor{colD}{rgb}{.8,0,0}
\definecolor{colE}{rgb}{.8,.8,0}
\definecolor{colF}{rgb}{.8,0,.8}
\colorlet{fcolAr}{colA!70!white}
\colorlet{fcolBr}{colB!70!white}
\colorlet{fcolCr}{colC!70!white}
\colorlet{fcolDr}{colD!70!white}
\colorlet{fcolEr}{colE!70!white}
\colorlet{fcolFr}{colF!70!white}
\newcommand{\colName}[1]{{%
    \ifthenelse{\equal{#1}{A}}{\color{red}XXX}{}%
    \ifthenelse{\equal{#1}{B}}{\color{colB}blue}{}%
    \ifthenelse{\equal{#1}{C}}{\color{colC}green}{}%
    \ifthenelse{\equal{#1}{D}}{\color{colD}red}{}%
    \ifthenelse{\equal{#1}{E}}{\color{colE}yellow}{}%
    \ifthenelse{\equal{#1}{F}}{\color{colF}pink}{}%
}}
\newcommand{\colNo}[1]{{%
    \ifthenelse{\equal{#1}{A}}{\color{red}XXX}{}%
    \ifthenelse{\equal{#1}{B}}{\color{colB}1}{}%
    \ifthenelse{\equal{#1}{C}}{\color{colC}2}{}%
    \ifthenelse{\equal{#1}{D}}{\color{colD}3}{}%
    \ifthenelse{\equal{#1}{E}}{\color{colE}4}{}%
    \ifthenelse{\equal{#1}{F}}{\color{colF}5}{}%
}}
\tikzstyle{fcolA}=[color=colA,opacity=.7]
\tikzstyle{fcolB}=[color=colB,opacity=.7]
\tikzstyle{fcolC}=[color=colC,opacity=.7]
\tikzstyle{fcolD}=[color=colD,opacity=.7]
\tikzstyle{fcolE}=[color=colE,opacity=.7]
\tikzstyle{fcolF}=[color=colF,opacity=.7]
\providecommand{\QUEUEWIDTH}{18pt}
\providecommand{\QUEUEUHEIGHT}{10pt}
\providecommand{\FLOWUWIDTH}{3.5pt}
\newif\ifdebug\debugfalse
\definecolor{DFcolA}{rgb}{.8,0,0}
\definecolor{DFcolB}{rgb}{0,0,.8}
\definecolor{DFcolC}{rgb}{0,.8,0}
\definecolor{DFcolD}{rgb}{.8,.8,0}
\definecolor{DFcolE}{rgb}{.8,0,.8}
\definecolor{DFcolF}{rgb}{0,.8,.8}
\colorlet{DFfcolAr}{DFcolA!70!white}
\colorlet{DFfcolBr}{DFcolB!70!white}
\colorlet{DFfcolCr}{DFcolC!70!white}
\colorlet{DFfcolDr}{DFcolD!70!white}
\colorlet{DFfcolEr}{DFcolE!70!white}
\colorlet{DFfcolFr}{DFcolF!70!white}
\tikzstyle{DFfcolA}=[color=DFcolA,opacity=.7]
\tikzstyle{DFfcolB}=[color=DFcolB,opacity=.7]
\tikzstyle{DFfcolC}=[color=DFcolC,opacity=.7]
\tikzstyle{DFfcolD}=[color=DFcolD,opacity=.7]
\tikzstyle{DFfcolE}=[color=DFcolE,opacity=.7]
\tikzstyle{DFfcolF}=[color=DFcolF,opacity=.7]
\providecommand{\flowcolorlist}{DFfcolCr,DFfcolFr,DFfcolCr,DFfcolDr,DFfcolEr,DFfcolFr}
\providecommand{\QUEUEWIDTH}{.3}
\providecommand{\QUEUEUHEIGHT}{1}
\providecommand{\FLOWUWIDTH}{4pt}
\providecommand{\FLOWSTYLE}{parallel}
\newcommand{\drawFlow}[7][]{%
	\setsepchar[/]{,}
	% Read string containing (comma-separated) rates of commodities in segment into list:
	\readlist\flowcoms{#7}
	% Turn list of flow color names from string (comma-separated) into list for use with listofitems-package:
	\readlist\collist\flowcolorlist
	
	% Compute total rate of flow segment (by adding rates of all commodities):
	\def\totwidth{0}
	\foreachitem \width \in \flowcoms{\pgfmathadd{\totwidth}{\width}\xdef\totwidth{\pgfmathresult}}
	\def\cumwidth{\totwidth}
	
	% Draw flow segment (commodities are drawn from outer to inner):
	\foreachitem \width \in \flowcoms{
		% Do not draw commodities with rate 0
		\ifthenelse{\equal{\width}{0.0} \OR \equal{\width}{0}}{}{
			\itemtomacro\collist[\widthcnt]\currentcol
            \ifthenelse{\equal{\FLOWSTYLE}{parallel}}{
                \path(#2)to[#1]coordinate[pos=0](tempEdgeStart)coordinate[pos=1](tempEdgeEnd) (#3);
                %\pgfmathsetlength{\flowshift}{(.5*\totwidth-\cumwidth+.5*\width)*\FLOWUWIDTH}
                \draw[\currentcol,line width=\width*\FLOWUWIDTH,dash pattern={on 1pt off #5*#4 on #6*#4 off #4}, dash phase=1pt]($(tempEdgeStart)!{(.5*\totwidth-\cumwidth+.5*\width)*\FLOWUWIDTH}!90:(tempEdgeEnd)$) to[#1] ($(tempEdgeEnd)!{(.5*\totwidth-\cumwidth+.5*\width)*\FLOWUWIDTH}!-90:(tempEdgeStart)$);
            }{
                \draw[\currentcol,line width=\cumwidth*\FLOWUWIDTH,dash pattern={on 1pt off #5*#4 on #6*#4 off #4}, dash phase=1pt](#2) to[#1] (#3);
            }
			\pgfmathsubtract{\cumwidth}{\width}
			\xdef\cumwidth{\pgfmathresult}
		}
	}
}
\newcommand{\drawEdgeFlow}[5][]{%
	% Check whether list of blocks is empty (listofitems throws an error if trying to read in an empty list):
	\ifthenelse{\equal{#5}{}}{}{
		\setsepchar[,]{I}
		% Turn string containing a list of blocks into an actual list of blocks:
		\readlist\flowblocks#5 
		% Draw each block:
		\foreachitem\flowblock\in\flowblocks{%
			\setsepchar[,]{/}
			% Read in block description and turn into list of three elements <strart>, <length> and <com>:
			\readlist\flowparts\flowblock 
			\itemtomacro\flowparts[3]\flowsplit
			% Draw flow block:
			\drawFlow[#1]{#2}{#3}{#4}{\flowparts[1]}{\flowparts[2]}{\flowsplit}
		}
	}
}
\newcommand{\drawQueueSegment}[3]{
	\setsepchar[/]{,}
	% Read string containing (comma-separated) shares of commodities in segment into list:
	\readlist\flowcoms{#3}
	% Turn list of flow color names from string (comma-separated) into list for use with listofitems-package:
	\readlist\collist\flowcolorlist
	
	% bottom left corner of queue segment
	\coordinate(tempA)at($(#1)+(-.5*\QUEUEWIDTH,0)$);
	% draw queue segment consisting of several parallel rectangles (one per commodity):
	\foreachitem \width \in \flowcoms {
		\itemtomacro\collist[\widthcnt]\currentcol
		\fill[\currentcol](tempA) rectangle +(\width*\QUEUEWIDTH,#2*\QUEUEUHEIGHT);        
		% bottom right corner of last rectangle is bottom left corner of next:
		\coordinate(tempA)at($(tempA)+(\width*\QUEUEWIDTH,0)$);
	}
}
\newenvironment{placeQueue}[2]{%
	\begin{scope}[rotate around={#2:(#1)}]
		\coordinate(queuebase)at(#1);%
		%\fill[green](#1)circle(.05); % draw green circle at bottom of queue (for positioning queue)
		% TODO: Make it possible to activate this from outside!
	}{%
	\end{scope}%
}
\newcommand{\drawEdgeQueue}[4][]{%
	% Check whether list of blocks is empty (listofitems throws an error if trying to read in an empty list):
	\ifthenelse{\equal{#4}{}}{}{
		\begin{placeQueue}{#2}{#3}
			\setsepchar[,]{I}
			% Read string containing (I-separated) queue segment into list:
			\readlist\queueblocks#4 
			% Draw each queue segment
			\foreachitem\queueblock\in\queueblocks{%
				\setsepchar[,]{/}
				% Read in segment description and turn into list of two elements <height> and <com>:
				\readlist\queueparts\queueblock 
				\itemtomacro\queueparts[2]\queuesplit
				% Draw queue segment
				\drawQueueSegment{queuebase}{\queueparts[1]}{\queuesplit}
				\coordinate (queuebase)at($(queuebase)+(0,\queueparts[1]*\QUEUEUHEIGHT)$);
			}                      
		\end{placeQueue}
	}
}
\renewcommand{\drawEdgeFlow}[5][]{%
	\drawFlow[#1]{#2}{#3}{#4}{0}{1}{1}%
}
\NewDocumentCommand{\CharF}{O{}}{%
	\ifthenelse{\equal{#1}{}}%
	{1}%
	{1_{#1}}%
}
\newcommand{\refsym}[1]{%
	\ensuremath{(%
		\ifthenelse{\equal{#1}{1}}{\ast}%
		{\ifthenelse{\equal{#1}{2}}{\#}%
		{\ifthenelse{\equal{#1}{3}}{\triangle}%
		{\ifthenelse{\equal{#1}{4}}{\bigcirc}%
		{\ifthenelse{\equal{#1}{5}}{\diamond}
		{\color{red}NaN}}}}}%
	)}%
}
\newcommand{\symoverset}[2]{\toverset{\refsym{#1}}{#2}}
\newcommand{\toverset}[2]{\overset{\text{#1}}{#2}}
\newcommand{\Croverset}[2]{\overset{\text{\Crefshort{#1}}}{#2}}
\newcommand{\Crefshort}[1]{%
	{%
		\Crefname{definition}{Def.}{Def.}%
		\Crefname{lemma}{Lem.}{Lem.}%
		\Crefname{proposition}{Prop.}{Prop.}%
		\Crefname{corollary}{Cor.}{Cor.}%
		\Crefname{theorem}{Thm.}{Thm.}%
		\Crefname{observation}{Obs.}{Obs.}%
		\Crefname{claim}{Cl.}{Cl.}%
		\Crefname{section}{Sec.}{Sec.}%
		\Crefname{subsection}{Sec.}{Sec.}%
		\Crefname{example}{Ex.}{Ex.}%
		\Crefname{equation}{Eq.}{Eq.}%
		\Cref{#1}%
	}%
}
\DeclareRobustCommand{\Crefnosort}[1]{%
  \begingroup\@cref@sortfalse\Cref{#1}\endgroup
}
\newcommand{\R}{\mathbb{R}}
\newcommand{\Rnn}{\mathbb{R}_{\geq 0}}
\newcommand{\N}{\mathbb{N}}
\newcommand{\objfunc}{\vartheta}
\newcommand{\eflow}{\varphi}
\newcommand{\wflow}{\psi}
\NewDocumentCommand{\sysop}{O{\ }}{system optimal{#1}}
\NewDocumentCommand{\umini}{O{\ }}{\ensuremath{\ell^u}-minimal{#1}}
\NewDocumentCommand{\mini}{O{\ }}{\ensuremath{\ell}-minimal{#1}}
\newcommand{\tmin}{\tau^{\min}}
\newcommand{\tmax}{\tau^{\max}}
\newcommand{\GAS}{{\tilde{\GA}}}
\newcommand{\GVS}{{\tilde{\GV}}}
\newcommand{\ssource}{{\tilde{\source}}}
\newcommand{\sdest}{{\tilde{\dest}}}
\newcommand{\swir}{\tilde{\wir}}
\newcommand{\sRoutes}{\tilde{\Routes}}
\newcommand{\swa}{\tilde{\wa}}
\newcommand{\gop}{\g^*}
\newcommand{\Gop}{\G^*}
\newcommand{\M}{M}
\newcommand{\q}{q}
\newcommand{\G}{G}
\newcommand{\opp}{\nabla}
\newcommand{\hfu}{\xi}
\definecolor{LightCyan}{rgb}{0.88,1,1}
\def\mybig#1{{\hbox{$\left#1\vbox to23\p@{}\right.\n@space$}}}
\newcommand{\dup}[2]{\langle#1,#2\rangle}
\theoremstyle{definition}
\newtheorem{definition}{Definition}[section]
\newtheorem{assumption}[definition]{Assumption}
\theoremstyle{plain}
\newtheorem{theorem}[definition]{Theorem}
\newtheorem{lemma}[definition]{Lemma}
\newtheorem{corollary}[definition]{Corollary}
\newtheorem{claim}{Claim}
\newtheorem{subclaim}{Subclaim}[claim]
\Crefname{claim}{Claim}{Claims}
 \theoremstyle{remark}
\newtheorem{remark}[definition]{Remark}
\newtheorem{example}[definition]{Example}
\newenvironment{introthm}[2][]{\begin{framed}\textbf{\Cref{#2}}\ifthenelse{\equal{#1}{}}{}{ (#1)}:\\}{\end{framed}}
\newlist{thmparts}{enumerate}{1}
\setlist[thmparts]{
	label=\alph*)
    % label=\arabic*.
}
\apptocmd{\cref@getref}{\xdef\@lastusedlabel{#1}}{}{error}
	\StrCount{\@lastusedlabel}{:}[\LastColonPos]%
	\StrCount{\@lastusedlabel}{:}[\LastColonPos]%
	\StrCount{\@lastusedlabel}{:}[\LastColonPos]%
\newenvironment{proofClaim}[1][]{\ifthenelse{\equal{#1}{}}{\begin{proof}}{\begin{proof}[#1]}}{\end{proof}}
\newcommand{\caseitem}[1]{\def\casedescr{#1}%
	\item}
\newlist{proofbycases}{enumerate}{1}
\setlist[proofbycases]{
	leftmargin=0em,
	labelwidth=-.5em,
    parsep=0pt,
    listparindent=\parindent,
	label=\boldmath\bfseries\sffamily\arabic*. Case: \protect\casedescr:,
	ref=\arabic*,
	align=left
}
\newcommand{\proofitem}[1]{\def\pidescr{#1}%
	\item}
\newlist{structuredproof}{enumerate}{3}
\setlist[structuredproof]{
	leftmargin=1em,
    labelwidth=.3em,
    parsep=0pt,
    listparindent=\parindent,
	label=\boldmath\bfseries\sffamily\protect\pidescr:,
	align=left
}
 \setlist[structuredproof,2]{
     leftmargin=1em
 }
 \setlist[structuredproof,3]{
     leftmargin=1em
 }
\newenvironment{proofbyinduction}{\begin{description}[leftmargin=0em,parsep=0pt,listparindent=\parindent]}{\end{description}}
\newcommand{\inductionclaim}{\item[Induction Claim: ]}
\newcommand{\basecase}[1]{\item[Base Case ({\boldmath\bfseries#1}):]}
\newcommand{\inductionstep}[1]{\item[Induction Step ({\boldmath\bfseries#1}):]}
\newcommand{\myparagraph}[1]{\paragraph{#1.}}
\newcommand{\fixed@sra}{$\vrule height 2\fontdimen22\textfont2 width 0pt\shortrightarrow$}
\newcommand{\shortarrow}[1]{%
  \mathrel{\text{\rotatebox[origin=c]{\numexpr#1*45}{\fixed@sra}}}
}
\newcommand{\norm}[1]{\lVert #1 \rVert}
\newcommand{\abs}[1]{\lvert#1\rvert}
\newcommand{\wlg}{w.l.o.g.\ }
\newcommand{\wrt}{w.r.t.\ }
\newcommand{\di}{\;\mathrm{d}}
\newcommand{\prices}{p}
\newcommand{\arc}{e}
\newcommand{\f}{f}
\newcommand{\g}{g}
\newcommand{\vot}{\gamma}
\newcommand{\exit}{T}
\newcommand{\arr}{A}
\newcommand{\trav}{D}
\newcommand{\hori}{H}
\newcommand{\Routes}{\mathcal{W}}
\newcommand{\wa}{w}
\newcommand{\inflow}{r}
\newcommand{\wir}{\Lambda}
\DeclareMathOperator{\id}{id}
\newcommand{\GA}{E}
\newcommand{\GV}{V}
\newcommand{\source}{s}
\newcommand{\dest}{{d}}
\newcommand{\edgesFrom}[1]{\delta^+(#1)}
\newcommand{\edgesTo}[1]{\delta^-(#1)}
\NewDocumentCommand{\stwalk}{O{\source}O{\dest}}{\ensuremath{#1,#2}-walk}
\NewDocumentCommand{\stpath}{O{\source}O{\dest}}{\ensuremath{#1,#2}-path}
\newcommand{\refrunind}[2]{%
\ifthenelse{\equal{#2}{1}}{
	\ensuremath{%
		\ifthenelse{\equal{#1}{1}}{n}%
		{\ifthenelse{\equal{#1}{2}}{m}%
		{\ifthenelse{\equal{#1}{3}}{j}%
		{\ifthenelse{\equal{#1}{4}}{l}%
		{\ifthenelse{\equal{#1}{5}}{i}
		{\color{red}NaN}}}}}%
	}%
 }
{
\ensuremath{%
		\ifthenelse{\equal{#1}{1}}{N}%
		{\ifthenelse{\equal{#1}{2}}{M}%
		{\ifthenelse{\equal{#1}{3}}{J}%
		{\ifthenelse{\equal{#1}{4}}{L}%
		{\ifthenelse{\equal{#1}{5}}{I}
		{\color{red}NaN}}}}}%
	}%
}
}
\newcommand{\n}[1]{\ifthenelse{\equal{#1}{}}{\refrunind{1}}{\refrunind{#1}{1}}}
\newcommand{\capn}[1]{\ifthenelse{\equal{#1}{}}{\refrunind{1}}{\refrunind{#1}{2}}}
\newcommand{\Pf}{P}
\newcommand{\op}{\nabla^u}
\newcommand{\wto}{\rightharpoonup}
\NewDocumentCommand{\seql}{O{1}O{\Routes}O{L(\hori)}}{\otimes^{#1}_{#2}#3}
\newcommand{\ofeas}{\mathcal{M}}
\newcommand{\edom}[1]{\mathcal D_{#1}}
\newcommand{\minim}[1]{$\ell^{#1}$-minimal}
\NewDocumentCommand{\sink}{O{\ }}{destination#1}
\newcommand{\Sink}{Destination }
\Crefname{@theorem}{Theorem}{Theorems}
\Crefname{assumption}{Assumption}{Assumptions}
\newcommand{\oref}[1]{%
  \@ifundefined{r@#1}{%
   {\cite[\Cref*{FD-ext-#1}]{GHS24FD}}%
  }{% 
   \Cref{#1}%
  }%
}
\newcommand{\FDref}[1]{\cite[\Cref*{FD-ext-#1}]{GHS24FD}}
\newcommand{\Tollref}[1]{\cite[\Cref*{Tolls-ext-#1}]{GHS24TollSODA}}
\newcommand{\lgcom}[2][]{\ifthenelse{\equal{#1}{journal}}{\relax}{\todo[color=green!70!blue!60]{LG: #2}}} % lukas
\newcommand{\lgcomI}[2]{\todo[color=green!70!blue!60,inline,caption={#1},size=\small]{LG: #2}} % lukas
\newcommand{\jscom}[1]{\todo[color=blue!50!white]{JS: #1}} % julian
    \newcommand{\BigPicture}[2][0]{#2}
    \newcommand{\BigPicture}[2][0]{#2}
\title{Are System Optimal Dynamic Flows Implementable by Tolls?}
    \author{Lukas Graf, Tobias Harks and Julian Schwarz} 
\affil{\small University of Passau, Faculty of Computer Science and Mathematics, 94032 Passau\\
    \href{mailto:julian.schwarz@uni-passau.de}{\{\texttt{lukas.graf,tobias.harks,julian.schwarz\}@uni-passau.de}}}
\begin{document}

\maketitle
 \begin{abstract} 
 
 A seminal result of [Fleischer et al.~\cite{Fleischer04} and Karakostas and Kolliopulos~\cite{Karakostas04}, FOCS 2004] states that system optimal multi-commodity static network
flows are always implementable as tolled Wardrop equilibrium flows even if users have heterogeneous value-of-time sensitivities. Their proof uses LP-duality to characterize the general implementability of network flows by tolls.
For the much more complex setting of \emph{dynamic flows},  [Graf et al.~\cite{GHS24TollSODA}, SODA 2025] identified  necessary and sufficient conditions for a dynamic $\source$-$\dest$ flow to be implementable as a tolled dynamic equilibrium.  They used the machinery of (infinite-dimensional) strong duality
to obtain their characterizations. Their work, however, does not answer the question of whether system optimal dynamic network flows are implementable by tolls. 

We consider this question for a general dynamic flow model involving multiple commodities with individual source-\sink pairs, fixed inflow rates and heterogeneous valuations of travel time and money spent.
We present both a positive and a, perhaps surprising, negative result: 
For the negative result, we provide a network with multiple source and \sink pairs in which under the Vickrey queuing model no system optimal flow is implementable -- even if all users  value travel times and spent money the same. 
Our counter-example even shows that the ratio of the achievable equilibrium travel times by using tolls and of the system optimal travel times can be unbounded. 
For the single-source, single-\sink case, we show that if the traversal time functions are suitably well-behaved (as is the case, for example, in the Vickrey queuing model),   
any system optimal flow is implementable. 
  \end{abstract}

\clearpage

\section{Introduction}   

Toll pricing has a long history in the 
economics literature as a mechanism to implement traffic equilibria with desirable properties (see Pigou~\cite{Pigou20} and Knight~\cite{Knight1924} for early works on the subject). In the context of traffic flows it has so far been mainly applied 
to static traffic assignment problems aka the static Wardrop flow model. 
For this model, the results by Cole, Dodis and Roughgarden~\cite{ColeDR03}, Fleischer, Jain and Mahdian~\cite{Fleischer04}, Karakostas and Kollioupous~\cite{Karakostas04} and Yang and Yuang~\cite{Yang04}
give complete characterizations of static edge flows that are implementable as toll-based Wardrop equilibria even if the users have heterogeneous preferences over travel times and paid tolls. In all of these works, the implementability of \emph{system optimal} (static) flows is of particular concern, that is, flows that minimize the total travel time experienced by the users. 
In the most general form, Fleischer et al.~\cite{Fleischer04},  
showed that the implementability of a flow via 
anonymous tolls is characterized by the ``minimality'' of a flow.
A flow $g$  is minimally feasible if there does not exist a flow $g'$ with $g'_\arc \leq g_\arc$ for all edges $\arc$ of the network with the inequality being strict for at least one edge. Under a monotonicity assumption of travel time functions it follows that every static system optimal multi-commodity  flow is minimally feasible and, thus, implementable. This universally positive result even holds for \emph{non-separable} and \emph{player-specific} travel time functions (Harks and Schwarz~\cite[Corollary~6.2]{HarksS23}). 

For dynamic network flows, much less is known regarding the implementability of dynamic system optimal flows using anonymous dynamic tolls. 
 Recently, Graf, Harks and Schwarz~\cite{GHS24TollSODA} derived 
necessary and sufficient conditions for the implementability of dynamic $\source$-$\dest$ flows by means of strong duality of an associated  infinite-dimensional optimization problem. This characterization might give hope that the above positive results for static flows might carry over to the dynamic setting, however, 
 it does not answer the question of whether
or not system optimal dynamic network flows are implementable by tolls. This is the topic of this paper.

\subsection{Related Work}
The inefficiency of equilibria as measured by the price of anarchy and price of stability is a central topic in the algorithmic game theory literature. In the realm of the Vickrey flow model, Koch and Skutella~\cite{Koch11} showed that the price of anarchy with respect to the earliest arrival time-objective  is unbounded for dynamic equilibrium flows.
Correa et al.~\cite{CCOPoAforNashFlows} proved that the price of anarchy with respect to the makespan objective is bounded by $e/(e-1)$ under a  monotonicity conjecture.

A well-established mechanism to improve the quality of equilibria is to use dynamic (anonymous) tolls on the edges of the network.
This approach is known as \emph{road pricing} or \emph{congestion pricing} 
 in the traffic engineering literature and there is a vast literature on the subject, see Lombardi et al.~\cite{Lombardi21} and Yang and Huang~\cite{Yang2005} for a bibliographic overview. 
Frascaria and Olver~\cite{FrascariaO22} proved that for the Vickrey queuing model,  $\source$,$\dest$ networks and departure time choice of particles, there exist tolls implementing an earliest arrival flow, that is, a flow that maximizes the total volume of users arriving at the \sink for every point in time. 
The proof relies on an infinite dimensional linear program for which feasible dual variables are derived as prices which  then implement the earliest arrival flow as a dynamic equilibrium. They assume that the underlying earliest arrival flow has no queues and thus, the underlying travel times are \emph{time-independent}. 

Wie and Tobin~\cite{WieTobin1998} considered a dynamic traffic assignment model based on an optimal control formulation. They derived \emph{non-anonymous} 
tolls implementing a system optimum, that is, these tolls depend on the users' \sink and only apply to the case of homogeneous users. Moreover, the Vickrey model does not satisfy their required differentiability conditions for their
used flow model, and, thus, their results only apply to their specific model.

Ma, Ban and Szeto~\cite{Ma2017} considered a so-called double queue formulation (queues at the entry and exit of an edge) and also argued
 that an optimal solution with respect to a slightly different objective (accounting also for emissions) can be implemented by tolls. They argue that a corresponding optimal control formulation can be used to derive dual variables. However, no formal proof of the existence of an optimal control solution, nor
of any regularity condition  is given to support these claims.
Yang and Meng~\cite{Yang1998}  and also Yang and Huang~\cite[Chapter 13]{Yang2005} studied a model where time is discretized leading to a space-time expanded network. This way,  the toll problem is reduced to a static problem and they derived tolls implementing an optimal (discrete-time) flow using the standard 
finite-dimensional linear programming formulations.

Other mechanisms to improve the quality of dynamic equilibrium flows include the design of travel information (cf.~Griesbach et al.\cite{GHKKInformationDesign})
and the design of the network capacities (cf.~Bhaskar et al.~\cite{BhaskarFA15}).
\subsection{Our Results} 
In this work, we consider the question of whether or not dynamic system optimal flows are implementable as tolled dynamic equilibria. We study this question for the heterogeneous user model and a quite general
physical flow model that includes the Vickrey queuing model~\cite{Vickrey69} and the linear edge-delay model~\cite{Friesz93} as special cases. We further consider both, multi-source, multi-\sink instances and the special case of single-source, single-\sink instances as studied before in Graf et al.~\cite{GHS24TollSODA}.
Our main results consist of two complementary theorems: A negative result on implementability in multi-source, multi-\sink networks and a positive result on implementability in single-source, single-\sink networks. 
These results and their proof ideas are outlined in the following: 
\begin{introthm}[Informal]{thm: CounterSysopImpl}
There exists a multi-source, multi-\sink network utilizing the Vickrey queuing model, in which \emph{every dynamic system optimal flow} is not implementable by dynamic tolls, even if users have homogeneous valuations of travel time and money spent. 
\end{introthm}
As a byproduct of the above theorem, we can even show that the ratio of
the achievable equilibrium travel times by using tolls and of the system optimal travel
times can be unbounded. This translates to the 
following result: The \emph{price of stability achievable by tolls in multi-source, multi-\sink network} is unbounded.
This negative result shows for the first time
that there is a real gap between 
the implementability of system optimal flows
for static and dynamic flow models.

The construction of the counter-example  is based on a specific five commodity multi-source, multi-\sink network (cf.~\Cref{fig: CE_SystemOptimumImplementable}). 
The task then is to prove both, a suitable lower bound on the total travel time of \emph{all} implementable flows, and,  an upper bound on the total travel time of system optimal flows. 
For the lower bound, we derive a general necessary condition (\Cref{thm: ImplementableImpliesOptimal}) for the implementability of a flow in a multi-source, multi-\sink network using an associated infinite dimensional linear program. This optimization problem and the resulting necessary condition is a generalization of the ones proposed in~\cite{GHS24TollSODA}
for the single-source, single-\sink case. 
For deriving the upper bound, we identify a particular flow with strictly lower total travel time than the lower bound for implementable flows. Perhaps surprisingly, this 
flow has a commodity which sends particles along an outgoing edge from its corresponding \sink[.] 

Afterwards, by adjusting the edge travel times and commodities' inflow rates suitably, we obtain a whole family of instances for which the price of stability achievable by tolls is unbounded.

Our second main theorem then considers the case of single-source, single-\sink networks.  
For such networks, Graf et al.\ characterized the implementability of a flow  by the property that the flow  does not send positive flow along edges leaving the \sink (cf.~\cite[Thm.~6.2]{GHS24TollSODA}). This condition alone, however, is not enough to guarantee the implementability of system optimal flows,  because it is not clear whether or not sending flow along an outgoing edge from a commodity's \sink could be beneficial for the overall total travel time (as is the case in the multi-source, multi-\sink counter-example). In fact, we show in \Cref{exa: CE_SystemOptimumWithoutMonotonicity} that this effect is still possible while keeping the assumptions of~\cite[Thm.~6.2]{GHS24TollSODA} intact. 
This example, however, crucially relies on a non-monotonic behavior of travel time functions (roughly speaking, higher flow volume on an edge leads to lower travel times). 
Under suitable \emph{monotonicity conditions} regarding the travel time functions (which are, in particular, fulfilled by the Vickrey queuing model  and the linear edge-delay model) 
%we first prove a necessary optimality condition for system optimal flows in the 
prove, then, the following positive result:
\begin{introthm}[Informal]{cor: SsSysOptImpl}
    For single-source, single-\sink networks, every dynamic system optimal flow is implementable by dynamic tolls, even if users have heterogeneous valuations of travel time and money spent. 
\end{introthm}
The proof is rather intricate and builds on two key insights that we establish: First, we show that any flow traveling along a cycle that includes the \sink can be rerouted so that the particles remain at the \sink[,] resulting in a strictly better but \emph{infeasible flow} where particles may wait at nodes. Second, we demonstrate that such infeasible flows can always be transformed into feasible flows, that is, flows without waiting at nodes, while preserving or even further reducing the total travel time.

\section{Model}\label{sec: Model} 

We consider multi-source, multi-\sink networks given by a directed graph $G=(\GV,\GA)$ with nodes $V$ and edges $\GA \subseteq V \times V$. 
We use $\edgesFrom{v}$ to refer to the set of edges leaving a node $v$ and $\edgesTo{v}$ for the set of edges entering $v$. Furthermore, 
we denote  for any pair $v_1,v_2 \in \GV$ by $\hat{\Routes}_{v_1,v_2}$ the countable set of (finite) \stwalk[v_1][v_2]s in~$G$. Here, a walk $\wa$ is a tuple 
of edges $\hat{\wa} = (\arc_1,\ldots,\arc_k) \in \hat{\Routes}$ with $\arc_j=(v_j,v_{j+1})\in \GA$ for all $j\in [k]:=\{1,\ldots,k\}$ for some $(v_j)_{j\in[k+1]}\in \GV^{k+1}$. We use $\hat\wa[j] \coloneqq e_j$ to refer to the $j$-th edge on walk~$\hat\wa$. Note, that we explicitly allow walks to contain cycles and include the same edge multiple times. 
We call a walk $\hat c=(\gamma_1,\ldots,\gamma_m)$ a cycle if $\gamma_1 \in \edgesFrom{v}$ and $\gamma_m \in \edgesTo{v}$ for some node  $v \in \GV$. In case that $v= \dest$, we call $\hat c$ a $\dest$-cycle.  Moreover, for a walk $\hat\wa$ and $j\leq |\hat\wa|$, we denote by  $\hat\wa_{\geq j}$  
the sub-walk of $\hat\wa$ starting with $\hat\wa[j]$.

Next, we are given a fixed planning horizon $\hori'=[0,t_f'] \subseteq \R$ during which flow particles can traverse the network. 
For technical reasons, we assume that flow functions are defined on some larger (but still finite) time horizon $\hori=[0,t_f]$ with $t_f > t_f'$ while still only being supported on~$\hori'$. 
Since dynamic flows will be described by Lebesgue-integrable functions on~$\hori$, we equip $\hori$ with its Borel $\sigma$-algebra $\mathcal{B}(\hori)$. We denote by $\sigma$ the Lebesgue measure on $(\hori,\mathcal{B}(\hori))$ 
and by $L(\hori)$ and $L^\infty(\hori)$ the space of ($\sigma$-equivalence classes of)  $\sigma$-integrable and essentially bounded, respectively, real-valued functions over $\hori$  equipped with the standard norm induced topology and the partial order induced by $L_+(\hori)$, respectively, $L_+^\infty(\hori)$, i.e.\ the subsets of nonnegative integrable functions. 
For any countable set~$M$, we denote by $\seql[1][M][L(\hori)]$ the set of 
vectors $(h_m)_{m \in M} \in L(\hori)^M$ whose sum $\sum_{m \in M}h_m \in L(\hori)$ is well-defined and exists, i.e.
\begin{align*}
    \seql[1][M][L(\hori)] &:= \big\{ h \in L(\hori)^{M} \mid \norm{h} := \sum_{m \in M} \norm{h_m} < \infty \big\}.
\end{align*}
 This defines again a Banach space (cf.~\cite[Section 16.11]{guide2006infinite}) whose topological dual is 
 \begin{align*}
     \seql[\infty][M][L^\infty(\hori)]:= \big\{ f \in L^\infty(\hori)^M \mid \norm{f}_\infty := \sup_{m \in M} \norm{f_m}_\infty < \infty \big\}. 
\end{align*} 
We denote the bilinear form between this dual pair  by $\dup{f}{h} \coloneqq \sum_{m \in M}\int_\hori f_m\cdot h_m\di\sigma$ for $f\in L^\infty(\hori)^M,h\in L(\hori)^M$. Here, we use $\int_\hori f\di\sigma$ to denote the integral of $f$ over~$\hori$ with respect to the Lebesgue measure~$\sigma$.\footnote{We use this notation instead of writing $\int_{\hori} f(t)\di t$ to stay consistent with the proofs of some of the more technical lemmas where we also have to consider integrals with respect to other measures.}
Analogously, we define $\seql[1][M][L(\hori)^\GA]$ where we use $\norm{\g} := \sum_{\arc \in \GA}\norm{\g_\arc}$ for $\g \in L(\hori)^\GA$.

Finally, we have a finite set of commodities/populations~$I$ where each commodity comes with its own value-of-time (VoT) parameter $\gamma_i > 0$ denoting this commodity's conversion factor of (travel) time to  money. Each commodity $i \in I$ has a fixed network inflow rate $\inflow_i\in L_+(\hori)$ with $\inflow_i \neq 0$ specifying for (almost) every point in time at what rate particles of that commodity enter the network at the commodity specific source node~$\source_i$. The particles then aim to traverse the network by choosing an \stwalk[\source_i][\dest_i] where $\dest_i$ is the commodity specific \sink node of which we assume that it is connected to $\source_i$. We denote by $\Routes \coloneqq \cup_{i \in I} \Routes_i \coloneqq \bigcup_{i \in I}\hat{\Routes}_{\source_i,\dest_i} \times\{i\}$ the set of walk-commodity pairs. By some abuse of notation we will also refer to elements $\wa=(\hat\wa,i) \in \Routes$ as walks and use $\wa[j] \coloneqq \hat\wa[j]$ to denote the $j$-th edge of~$\wa$, write $v\in \wa$ and $\arc \in \wa$ to say that there exists  some $j \in [k]$ and $\hat{v},v'\in \GV$ with $(\hat{v},v') = \wa[j]$ and $v \in \{\hat{v},v'\}$, respectively $\arc=\wa[j]$, and use $\abs{\wa} \in \N_0$ for the length (=number of edges) of~$\wa$.

\subsection{Dynamic Flows}

The main concept underlying dynamic flows are the traversal time functions:  

\myparagraph{Traversal time functions} Within our model any vector of edge inflow rates $\g\in L_+(\hori)^\GA$ induces    corresponding nonnegative  \emph{edge traversal time} functions
$\trav_\arc(\g_\arc,\cdot):\hori \to \R_+,\arc \in \GA$ 
 with 
$\trav_\arc(\g_\arc,t)$ denoting the time needed to traverse $\arc$ 
when entering the latter at time~$t$. 
For the sake of readability, we will often just write $\trav_\arc(\g,\cdot)$ instead of $\trav_\arc(\g_\arc,\cdot)$.  
To any such edge traversal time function, we also define two related functions: 
Firstly, the \emph{edge exit time} function $\exit
_\arc(\g,t):= t + \trav_\arc(\g,t)$ denoting the time a particle exits edge~$\arc$ when entering at~$t$. 
Secondly, the \emph{edge arrival time} function $\arr_{\wa,j}(\g,\cdot)$ denoting the time a particle arrives at the tail of the $j$-th edge of some walk~$\wa$ when entering~$\wa$ at time~$t$. More precisely, for an arbitrary walk $\wa$ we define $\arr_{\wa,1}(\g,\cdot):= \id$ and then, recursively, $\arr_{\wa,j}(\g,\cdot):= \exit_{\wa[{j-1}]}(\g,\cdot) \circ \arr_{\wa,j-1}(\g,\cdot)$ for $j\in\{2,\ldots,|\wa|\}$. Additionally, we define $\arr_{\wa,|\wa|+1}(\g,\cdot):= \exit_{\wa[|\wa|]}(\g,\cdot)\circ\arr_{\wa,|\wa|}(\g,\cdot)$
 denoting the arrival time at the \sink[.]

With this, we can now formally describe dynamic flows. We will use two types of these flows:
Aggregated edge flows and multi-commodity walk flows: 

\myparagraph{Multi-Commodity Walk Flows}
%For a countable collection of  walks $\Routes'$, a 
A  \emph{walk flow} or walk-inflow function is a vector $h\in L_+(  \hori)^{{\Routes}}$
with $h_{{\wa}}(t)$ representing the walk inflow rate at time $t\in \hori$ into the walk ${\wa}\in  {\Routes}$. 
 In this regard, for a   walk  flow   $h\in L_+(  \hori)^\Routes$,   
 $h_{(\hat{\wa},i)}(t)$ represents the walk inflow rate of  users of commodity $i$ at time $t\in \hori$ into the walk $(\hat{\wa},i)\in \Routes_i$.  %\jscom{brauch ich überhaupt andere walk inflows außer welche mit $\Routes= \Routes'$}

\myparagraph{Aggregated Edge Flows} 
An (aggregated) \emph{edge flow} is a vector $\g \in L_+(\hori)^\GA$, where $\g_\arc(t)$ denotes the rate at which particles (of all commodities combined) enter edge~$\arc$ at time~$t$. For any such edge flow $\g \in L_+(\hori)^\GA$ and edge~$\arc \in \GA$ we define the \emph{cumulative edge inflow and outflow} $\G^+_\arc,\G^-_\arc:\hori\to \R$  via 
\begin{align*}
    \G^+_\arc (t) := \int_{[0,t]}\g_\arc \di\sigma \text{ and }     \G^-_\arc (t) := \int_{\exit_\arc(\g_\arc,\cdot)^{-1}([0,t])}\g_\arc \di\sigma.
\end{align*}
Furthermore, we define for any node $v \in \GV$ the \emph{cumulative node balance} $\opp_v \G:\hori\to \R$   via
\begin{align}\label{eq: DefCumuFlowBala}
    \opp_v \G(t) := \sum_{\arc \in \edgesFrom{v}} \G^+_\arc (t) -  \sum_{\arc \in \edgesTo{v}} \G^-_\arc (t).
\end{align}
 
We then connect the two types of flow by saying that an edge flow~$\g$ is \emph{induceable} by  a walk flow $h \in L_+(\hori)^\Routes$  if: 
  \begin{align}\label{eq: DefEdgeFlow}
    \int_0^t \g_\arc\di\sigma = \sum_{\wa \in \Routes}\sum_{j: \wa[j] = \arc}\int_{\arr_{\wa,j}(\g,\cdot)^{-1}([0,t])} h_\wa\di\sigma \text{  for all  $\arc \in \GA$ and $t \in \hori$.} 
\end{align}  
We denote by 
\begin{align*}
    \wir:= \Big\{ h\in L_+(  \hori)^\Routes \Big\vert\,\sum_{\wa\in \Routes_i} h_\wa=\inflow_i,i\in I,   \text{\eqref{eq: DefEdgeFlow} has a solution supported on $\hori'$.} \Big\} \subseteq \seql
\end{align*}
the set of \emph{admissible walk flows}. 
We assume that we have a network loading operator $\ell:\wir \to L_+(\hori)^\GA, h\mapsto \g$ sending each $h \in \wir$ to a solution $\g$  of \eqref{eq: DefEdgeFlow} supported on $\hori'$. We then say that $h \in \wir$ induces $\g \in L_+(\hori)^\GA$ if $\ell(h) = \g$  holds and call $\ell(\wir) := \{\ell(h) \in L_+(\hori)^\GA \mid h \in \wir\}$ the set of induced edge flows.  

With respect to these flows, 
we require the following assumptions on the travel time function: 
\begin{assumption}\label{ass: TravelGeneral} 
We assume that there exists a constant $\tmin>0$ such that 
for all $\g \in L_+(\hori)^\GA$ and $\arc \in \GA$
\begin{thmparts}
    \item $\trav_\arc(\g,\cdot):\hori \to \R_+$ is absolutely continuous and adheres to the first-in first-out principle (FIFO), that is, $\exit_\arc(\g,\cdot)$ is nondecreasing.\label[thmpart]{ass: TravelGeneral: ConFIFO}  
     \item the travel time is  lower bounded on $\hori'$ by $\tmin>0$, i.e.~$\trav(\g,t)\geq \tmin$ for all  times $t \in \hori'$. \label[thmpart]{ass: TravelGeneral: tmin} 
        \item the corresponding edge outflow rate $\g_\arc^-$ exists, i.e.~there exists $\g_\arc^- \in L_+(\hori)$ such that for all $t \in \hori$: 
    \begin{align}\label{eq: TravelGeneral: Inflow}
       \int_{\exit_\arc(\g_\arc,\cdot)^{-1}([0,t])} \g_\arc \di\sigma =  \int_{[0,t]}\g_\arc^- \di\sigma.
    \end{align} \label[thmpart]{ass: TravelGeneral: Inflow} 
\end{thmparts}
We additionally assume that there exists a contant $\tmax >0$ such that 
for all $\g \in \ell(\wir)$
\begin{thmparts}[resume]
     \item the travel time is  upper bounded on $\hori'$ by  $\tmax>0$, i.e.~$\trav(\g,t)\leq \tmax$ for all  times $t \in \hori'$. \label[thmpart]{ass: TravelGeneral: tmax} 

\end{thmparts}
\end{assumption}
Note that \labelcref{ass: TravelGeneral: ConFIFO} also implies that both $\exit_\arc(\g,\cdot)$ and $\arr_{\wa,j}(\g,\cdot)$ are absolutely continuous as well (\cite[Exercise 5.8.59]{Bogachev2007I}). 
Due to technical reasons, we also require that the travel times  lead  to arrival time functions whose range is contained in $\hori$:
\begin{assumption}\label{ass: AMapsHToH}
    For all $\g \in \ell(\wir)$, $\wa \in \Routes$, $j \in[\abs{\wa}+1]$   we assume that  $\arr_{\wa,j}(\g,\cdot)(\hori) \subseteq \hori $. 
\end{assumption}
Given arbitrary travel times satisfying  \Cref{ass: TravelGeneral}, we can achieve this additional property by suitably (absolutely continuously) decrease $\trav_\arc$ between $t_f'$ and ${t}_f$. Remark that this does not have an impact on the set of induced edge flows as these are required to be supported only on $\hori'=[0,t_f']$.

A well-studied dynamic flow model  that falls within  our model (and, in particular, satisfies \Cref{ass: TravelGeneral}) is the Vickrey queuing model  which we introduce next. 
\begin{definition}\label{ex:VickreyModel}
    In   the \emph{Vickrey queuing model}, each edge~$\arc \in \GA$ comes with a free flow travel time~$\tau_\arc > 0$ and a service rate~$\nu_\arc > 0$. The traversal time function~$\trav_\arc$ is then defined as the (unique) solution to a system of equations in terms of the corresponding edges flows $\g \in L_+(\hori)^\GA$:
         \begin{align}\label{eq: Vick}
              \trav_\arc(\g,t) = \tau_\arc + \frac{q_\arc(\g,t)}{\nu_e} \quad\text{ and }\quad q_\arc(\g, t) = \int_0^t \g_\arc\di\sigma-\int_{\exit_\arc(\g,\cdot)^{-1}([0,t+\tau_\arc])} g_\arc \di\sigma
         \end{align}
    together with the conditions that the queue is always non-negative and  that the derivative of $t \mapsto \int_{\exit_\arc(\g,\cdot)^{-1}([0,t])} g_\arc \di\sigma$ (i.e.\ the outflow rate of edge~$\arc$) is bounded by $\nu_\arc$ almost everywhere.\footnote{Note that, by \cite[Proposition~3.19e)]{GrafThesis}, this definition is equivalent to the more common definition of Vickrey flows in terms of in- and outflow rates.}
    Here, $\q_\arc(\g,t)$ denotes the flow volume in the queue of edge~$\arc$ at time~$t$ and, for~\eqref{eq: Vick}, we extend $g$ by~$0$ outside of~$\hori$.
\end{definition}

\myparagraph{Parameterized Network Loadings} 
In \cite{GHS24FD}, Graf~et al.\ introduced the concept of parametrized ($u$-based) network flows which describes for a given edge flow~$u$
how particles of a different walk-inflow~$h$  would hypothetically propagate throughout the network under the fixed travel times induced by~$u$. 
More precisely, given a travel time function $\trav$  and a vector $u\in L_+(\hori)^\GA$, we can define the corresponding parameterized travel time function $\trav^u:L_+(\hori)^\GA \times \hori \to \R_+$ via $\trav^{u}(\g,t):\equiv \trav(u,t)$ for all $\g \in  L_+(\hori)^\GA$ and $t\in \R_+$. 
We denote by $\ell^u$ the resulting network loading operator and by $\wir^u$  the set of walk inflows inducing an edge flow. 
Moreover, 
adopting the terminology from~\cite{GHS24FD}, we say 
that the walk inflow $h_\wa$ into a walk $\wa$ under the \emph{fixed} traversal time functions $\trav^u$ induces the (parameterized) flow $\f^\wa_j \in L(\hori)$ on the $j$-th edge of $\wa$ if  $\f^\wa_j$ fulfills:
\begin{align}\label{eq: DefUEdgeFlow}
    \int_0^t \f^\wa_j\di\sigma =  \int_{\arr_{\wa,j}(u,\cdot)^{-1}([0,t])}h_\wa\di\sigma \text{ for all } t \in \hori. 
\end{align} 
In the above situation \eqref{eq: DefUEdgeFlow}, we write $\ell^u_{\wa,j}(h_\wa):= \f^\wa_j$ for any $j \leq \abs{\wa}+1$ and say that the induced parameterized edge flow $\ell_{\wa,j}^u(h_\wa)$ exists. Here, for $j = \abs{\wa} +1$, the latter is to be interpreted as the network outflow rate of the induced flow. 
If, for some edge~$\arc \in \GA$, the induced (parameterized) edge flow $\ell_{\wa,j}(h_\wa)$ exists for all $j\leq \abs{\wa}$ with $\wa[j] = \arc$, we can define the total induced (parameterized) flow of
$h_\wa$ under $\trav^u$ on this edge via $\ell_{\wa,\arc}^u(h_\wa) := \sum_{j:\wa[j]  =\arc}\ell^u_{\wa,j}(h_\wa)$. 
Analogously, if $\ell_{\wa,\arc}^u(h_\wa)$ exists for all $\arc \in \wa$, we can define the induced (parameterized) flow on the whole  network via the vector   
$\ell^u_\wa(h_\wa):=(\ell^u_{\wa,\arc}(h_\wa))_{\arc \in \GA} \in L(\hori)^\GA$. 
We require the following assumption regarding the uniqueness of the network loading: 
\begin{assumption}\label{ass: LuImplL}
 For all $u \in L_+(\hori)^\GA$ supported on $\hori'$, the implication $\ell^u(h) = u \implies \ell(h) = u$ holds for all $h \in \wir^u$. 
\end{assumption}
The above assumption states that whenever a walk flow induces~$u$ under~$u$, then $u$ is also the actual network loading of~$h$. This is fulfilled whenever there is at most one way to define the network loading, i.e.\ if for any $h\in L_+(\hori)^\Routes$ with $\sum_{\wa \in \Routes_i}h_\wa = \inflow_i,i \in I$ there exist at most one $\g \in L_+(\hori)^\GA$ such that \eqref{eq: DefEdgeFlow}
is fulfilled. In this regard, note that Meunier and Wagner~\cite{MeunierW10} showed that this is the case 
under several assumptions regarding the travel time function $\trav$ which do not conflict with our assumptions. 
In particular, Meunier and Wagner~\cite{MeunierW10} show that  the Vickrey queuing model satisfies them.

Throughout this paper, we require several structural insights derived in \cite{GHS24FD,GHS24TollSODA} 
concerning the structure of network loadings and the implementability of flows. 
For the convenience of the reader, we collected the relevant definitions and statements in \Cref{sec:AppuBasedNetworkLoadings,sec:AppTolls}.

\subsection{Toll-Based Implementability and System Optima}\label{sec:CostBalancingTolls}

In addition to the physical aspects of dynamic flows described in the previous section, there is also a behavioural aspect: We think of a dynamic flow as being made up of individual flow particles that each individually and selfishly want to minimize their overall cost which is a weighted sum of travel time and tolls. A flow wherein every particle achieves this goal (with respect to the traversal times induced by that flow) is a dynamic equilibrium.

In order to formalize this, we denote by $\Psi_{(\hat{\wa},i)}(\g,t)$ the weighted total travel time a (hypothetical) particle belonging to commodity~$i$ and entering walk~$\wa = (\hat{\wa},i)$ at time~$t$ would experience under the edge flow $\g \in \ell(\wir)$, i.e.
%These are given by the total travel time along $\wa$ at departure time $t$ weighted by the commodity specific value-of-time parameter~$\vot_i$, i.e.
\begin{align*}
    \Psi_{(\hat{\wa},i)}(\g,t) := \vot_i\cdot \sum_{j \leq|{\wa}|} \trav_{{\wa}[j]}(\g,\arr_{{\wa},j}(\g,t)) = \vot_i\cdot \big(\arr_{\wa,\abs{\wa}+1}(\g,t)-t\big).
\end{align*}

Furthermore, given  a vector-valued  function $\prices: \hori \to \R^\GA_+$ 
 associating with each edge $\arc \in \GA$ and every entry time $t\in \hori$ nonnegative and finite tolls $\prices_\arc(t)$, 
 we define the resulting total toll along walk $\wa=(\hat{\wa},i)$ for the entry time~$t$ under~$\g$ via $\Pf^{{\prices}}_{(\hat{\wa},i)}(\g,t) := \sum_{j \leq|{\wa}|}  \prices_{{\wa}[j]}(\arr_{{\wa},j}(\g,t))$. Note that the edge tolls~ $\prices_\arc(t)$ are anonymous in the sense that any particle entering edge~$\arc$ at time~$t$ must pay the same toll~$\prices_\arc(t)$, regardless of its identity (i.e.\ the \sink[,] VoT or chosen route).

\begin{definition}
    For any  toll function $\prices: \hori \to \R^\GA_+$,  a \emph{$\prices$-dynamic user equilibrium} ($\prices$-DUE) 
is a walk inflow rate function $h\in \wir$ with corresponding edge flow $\g = \ell(h)$ which satisfies Wardrop's first principle, that is for all  $\wa \in \Routes_i,i\in I$ and almost all $t \in \hori$ it satisfies:
\begin{align*}
    h_\wa(t)>0 \implies \Psi_\wa(\g,t) + \Pf^\prices_\wa(\g,t) \leq \Psi_{\wa'}(\g,t) + \Pf^\prices_{\wa'}(\g,t)\text{ for all } \wa'\in \Routes_i.
\end{align*}
\end{definition}

Note that for $\prices = 0$, the notion of $\prices$-DUE coincides with the classical notion of a DUE, cf.~\cite{Friesz93,ZhuM00}.
\begin{definition}[Implementability]
A vector $\g \in L(\hori)^\GA$ is \emph{implementable} via tolls $\prices : \hori \to \R^\GA_+$ and walk inflow rates $h \in \wir$, 
if  $h$ is a  $\prices$-DUE which induces $\g$ with  bounded total costs $\dup{\prices}{\g}< \infty$. 
We say that $\g$ is \emph{implementable} if there exist tolls $\prices$ and walk inflow rates $h\in \wir$ such that $\g$ is implementable via $(\prices,h)$.
 \end{definition}

Flows of particular interest regarding implementability are \emph{system optima}, that is, flows that minimize the total travel time among all induced flows.

\begin{definition}[System Optimum]
    We call a vector $\gop\in L_+(\hori)^\GA$ \sysop if it minimizes the total travel time among all induced flows (supported on $\hori'$), i.e.\ it is an optimal solution to the following optimization problem: 
 \begin{align}
    \inf_{\g} \; & \dup{\trav(\g,\cdot)}{\g }\tag{$\mathrm{P}^{\mathrm{sys}}$} \label{opt: System}  \\
    &\textstyle\g \in \ell(\wir),\nonumber
\end{align}
% \begin{align}
%    \inf_{\g} \;  \dup{\trav(\g,\cdot)}{\g }\tag{$\mathrm{P}^{\mathrm{sys}}$} \quad \text{ s.t.: } \label{opt: System}  
%    \textstyle\g \in \Big\{\ell(\wir)\,\big \vert\, \g  = 0 \text{ on }\hori\setminus\hori' \Big\} ,
%\end{align}
where $\ell(\wir) := \{ \ell (h ) \in L_+(\hori)^\GA \mid   h \in \wir \}$ is the set of induced (aggregated) edge flows. 
\end{definition}

\iffalse 
As we will demonstrate in this paper, system optimality and implementability are tightly connected to the minimality of a flow: 
\begin{definition}[Minimality]\label{def: mini}
       A flow $\gop$ is called \mini if  
\begin{align*}
    \Big\{h \in  \wir \mid  \ell(h) < \gop \Big\} = \emptyset. 
\end{align*}
Here, we use the notation $\g<\g'$ for $\g,\g' \in L(\hori)^\GA$ meaning that $\g \leq \g'$ and $\g\neq \g'$ holds. 
\end{definition}
\fi

\section{Non-Implementability for Multi-Source, Multi-\Sink Networks}\label{sec: NonImpl}

In this section, we consider the general case of multi-source, multi-\sink networks and show 
that not every \sysop edge flow~$\gop$ is implementable, even for well-behaved flow propagation models like the Vickrey model 
and a common VoT-parameter for all commodities. 

In order to show this, we first derive a necessary condition for implementability in multi-source, multi-\sink networks generalizing one direction of the characterization given in 
\Tollref{thm: main:ImplementableImpliesOptimal} for  single-source, single-\sink networks. 
To this end, we define the \emph{master problem} \wrt a vector $u \in \ell(\wir)$: 
\begin{align}
    \inf_{h} \; & \sum_{\wa \in \Routes}\dup{\Psi_\wa (u,\cdot)}{h_\wa }\tag{${\mathrm{P}}(u)$}\label{opt: Master}  \\
    \text{s.t.: } &  \ell^u(h) \leq u \label{ineq: Master}\\
    &(h_\wa)_{\wa \in \Routes} \in  \wir^u.\nonumber
\end{align} 
Note that \ref{opt: Master} differs from the definition of the master problem proposed in \cite{GHS24TollSODA} since we 
allow for multi-commodity walk inflow rates $h \in \wir$.  

\begin{theorem}\label{thm: ImplementableImpliesOptimal}
     If  $u$ is implementable via $(\prices,h)$ for some walk inflow rates $h$ and nonnegative edge costs $\prices:\hori\to \R^\GA_+$,  %that induce bounded total costs $\dup{\prices}{u}<\infty$ \jscom{should we add this to the definition of implementability?}, 
     then~\ref{opt: Master} admits an optimal solution with tight inequality. 
\end{theorem}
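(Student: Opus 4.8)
The plan is to show that the equilibrium inflow $h$ itself is an optimal solution to \ref{opt: Master} and that it satisfies the constraint \eqref{ineq: Master} with equality. First I would verify feasibility: since $u$ is implementable via $(\prices,h)$, we have $h\in\wir$ with $\ell(h)=u$, and by \Cref{ass: LuImplL} (applied in the direction that the actual loading coincides with the $u$-parameterized loading whenever they agree — this needs the companion fact from \cite{GHS24FD} that $\ell(h)=u$ forces $\ell^u(h)=u$, which I would cite from \Cref{sec:AppuBasedNetworkLoadings}) we get $\ell^u(h)=u$, hence $h\in\wir^u$ and \eqref{ineq: Master} holds with equality. So the tightness claim reduces to showing $h$ is optimal, since then any optimal solution — in particular $h$ — witnesses an optimum with tight inequality.

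Next, the core of the argument is a duality / complementarity computation showing no feasible $h'$ can do better than $h$. Let $h'\in\wir^u$ be any feasible point for \ref{opt: Master}, with induced parameterized edge flow $\g' := \ell^u(h')\le u$. The key identity is that for the \emph{fixed} travel times $\trav^u$, the weighted travel time $\Psi_\wa(u,\cdot)$ along a walk is additive along edges and, crucially, the toll term $\Pf^\prices_\wa(u,\cdot)$ is also additive; moreover by \eqref{eq: DefUEdgeFlow} the mass that $h'_\wa$ places on the $j$-th edge of $\wa$ at a given time is exactly the pushforward $\ell^u_{\wa,j}(h'_\wa)$. Summing $\dup{\Psi_\wa(u,\cdot)+\Pf^\prices_\wa(u,\cdot)}{h'_\wa}$ over $\wa$ and regrouping edge-by-edge, one obtains
\begin{align*}
\sum_{\wa\in\Routes}\dup{\Psi_\wa(u,\cdot)+\Pf^\prices_\wa(u,\cdot)}{h'_\wa}
= \sum_{\arc\in\GA}\int_\hori\big(\vot\text{-weighted }\trav_\arc(u,\cdot)+\prices_\arc\big)\cdot \g'_\arc\di\sigma,
\end{align*}
where the $\vot$-weights are handled per commodity exactly as in the single-commodity computation of \Tollref{thm: main:ImplementableImpliesOptimal}; the multi-commodity case only adds a finite sum over $i\in I$ with the weights $\vot_i$, so the bookkeeping carries over. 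Since $\trav_\arc(u,\cdot)\ge 0$, $\prices_\arc\ge 0$, and $\g'_\arc\le u_\arc$, this quantity is at most $\sum_{\arc\in\GA}\int_\hori(\cdots)\cdot u_\arc\di\sigma$, which by the same identity read backwards equals $\sum_{\wa}\dup{\Psi_\wa(u,\cdot)+\Pf^\prices_\wa(u,\cdot)}{h_\wa}$ (using $\ell^u(h)=u$). On the other hand, since $h$ is a $\prices$-DUE, Wardrop's principle gives that for each commodity $i$ the cost $\Psi_\wa(u,\cdot)+\Pf^\prices_\wa(u,\cdot)$ on any walk in the support of $h$ is minimal among $\Routes_i$, so shifting mass in $h'$ onto used walks can only decrease the sum; comparing the two displays yields $\sum_\wa\dup{\Psi_\wa(u,\cdot)}{h'_\wa}\ge\sum_\wa\dup{\Psi_\wa(u,\cdot)}{h_\wa}$, i.e.\ $h$ is optimal. (Bounded total costs $\dup{\prices}{u}<\infty$ is needed to ensure all these integrals are finite and the regrouping is legitimate.)

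The main obstacle I anticipate is the edge-by-edge regrouping step: making precise that $\sum_{\wa}\sum_{j:\wa[j]=\arc}\ell^u_{\wa,j}(h'_\wa)$ is a well-defined element of $L(\hori)$ that agrees with $\g'_\arc$, and that Fubini/Tonelli applies so the walk-indexed sum of integrals equals the edge-indexed integral of the sum. This is exactly the kind of $u$-based network-loading manipulation developed in \cite{GHS24FD}, so I would lean on the collected statements in \Cref{sec:AppuBasedNetworkLoadings} (in particular the analogues of the lemmas that express $\ell^u(h')$ as the sum of per-walk contributions and guarantee integrability via $\norm{h'}<\infty$ together with \Cref{ass: TravelGeneral: tmax}). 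A secondary subtlety is that \ref{opt: Master} is an infimum and we must exhibit an actual optimal solution; this is handled for free once we show the feasible point $h$ attains the infimum, so no separate existence/compactness argument is required — the tightness of \eqref{ineq: Master} then holds because the specific optimizer $h$ satisfies $\ell^u(h)=u$.
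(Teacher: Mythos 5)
Your overall plan — use the DUE property to show the equilibrium $h$ itself is an optimal solution, then observe tightness from $\ell^u(h)=u$ — matches the paper. However, the central ``comparing the two displays'' step contains a genuine logical gap. Your first display, the claim that
\[
\sum_{\wa\in\Routes}\dup{\Psi_\wa(u,\cdot)+\Pf^\prices_\wa(u,\cdot)}{h'_\wa}
\;\leq\;
\sum_{\wa\in\Routes}\dup{\Psi_\wa(u,\cdot)+\Pf^\prices_\wa(u,\cdot)}{h_\wa}
\]
from $\g'\leq u$, does not follow for heterogeneous VoT. By \Cref{lem: aggCostsVSwalkCosts} the travel-time part on the left equals $\sum_{i\in I}\gamma_i\dup{\trav(u,\cdot)}{\g'^i}$, a \emph{per-commodity} expression; the aggregate bound $\g'_\arc\leq u_\arc$ does not give $\g'^i_\arc\leq u^i_\arc$ for each commodity, so there is no way to dominate it by the analogous expression for $u$. (The toll part alone does obey $\dup{\prices}{\g'}\leq\dup{\prices}{u}$, since $\prices$ is anonymous.) Fortunately this inequality is also unnecessary. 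The paper runs the argument in the opposite direction: Wardrop's principle — which you correctly identify in the next sentence — yields
\[
\sum_{\wa}\dup{\Psi_\wa(u,\cdot)}{h_\wa} + \dup{\prices}{u}
\;\leq\;
\sum_{\wa}\dup{\Psi_\wa(u,\cdot)}{h'_\wa} + \dup{\prices}{\ell^u(h')},
\]
using $\dup{\Pf^\prices_\wa(u,\cdot)}{\cdot}$ summed over $\wa$ equals $\dup{\prices}{\ell^u(\cdot)}$ on each side (\Cref{lem: aggCostsVSwalkCosts}), and \emph{then} the constraint $\ell^u(h')\leq u$ together with $\prices\geq0$ gives $\dup{\prices}{\ell^u(h')-u}\leq 0$, which cancels the toll term in the desired direction. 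That is the whole argument; the extra domination step you insert obscures (and, as written, invalidates) the logic for the general heterogeneous-VoT setting the theorem covers.

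Two smaller points. First, you cite \Cref{ass: LuImplL} to justify $\ell(h)=u\implies\ell^u(h)=u$, but that assumption states the converse implication; the paper simply asserts $\ell^u(h)=u$ directly, which follows from the defining fixed-point relation \eqref{eq: DefEdgeFlow} read under the frozen travel times $\trav^u(\cdot)=\trav(u,\cdot)$. Second, the paper's proof does not quantitatively expand the Wardrop argument as an edge-by-edge regrouping; instead it proves the walk-level generalized-cost inequality directly by contradiction (if it failed for some commodity, two walks with strictly ordered costs and misallocated positive mass would contradict the $\prices$-DUE property). This avoids the Fubini/measurability concerns you flagged for the per-edge reassembly, which you only need for the adjoint identity of \Cref{lem: aggCostsVSwalkCosts} (already established in \cite{GHS24TollSODA}).
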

%Before we come to the proof, we remark that the converse direction in the   above \namecref{thm: ImplementableImpliesOptimal} does hold as well under the assumption of the master problem admitting zero duality gap. The proof, however, is not straight forward and we will provide it in a future paper. 
%Moreover, we remark that the above theorem also holds for the travel times not being lower bounded by a $\tmin>0$, i.e.~\Cref{ass: TravelGeneral: tmin} is not required to hold. 
\begin{proof}
     Let $u$ be implementable via tolls $\prices:\hori\to\R^\GA$  and $h \in \wir$, i.e.\ $h$ induces $u$ (under $\ell$) with $\dup{\prices}{u}<\infty$ and $h$ is a $\prices$-DUE. Then, $h$ also induces $u$ under $\ell^u$, i.e.~$h \in \wir^u$ and $\ell^u(h) = u$.  
    Optimality of~$h$ for~\ref{opt: Master} is then a direct consequence of the following claim: 
 \begin{claim}\label{claim:  ImplementableImpliesOptimal}
     For all $\tilde{h} \in \wir^u$ with corresponding aggregated edge flow $ \ell^u(\tilde{h})$, the following inequality holds (with the right hand side possibly being equal to $\infty$): 
     \begin{align}\label{eq: ineqOpth}
         \sum_{\wa \in \Routes} \dup{\Psi_\wa(u,\cdot)}{h_\wa} + \dup{\prices}{u}\leq \sum_{\wa \in \Routes} \dup{\Psi_\wa(u,\cdot)}{\tilde{h}_\wa} + \dup{\prices}{\ell^u(\tilde{h})}.
     \end{align}
 \end{claim}
  \begin{proofClaim} 
 We start by observing that \Cref{lem: aggCostsVSwalkCosts} lets us rewrite the desired inequality~\eqref{eq: ineqOpth} as:
    \begin{align*}
         \sum_{\wa \in \Routes} \dup{\Psi_\wa(u,\cdot) + \Pf^{\prices}_\wa(u,\cdot)}{h_\wa}  \leq \sum_{\wa \in \Routes} \dup{\Psi_\wa(u,\cdot)+\Pf^{\prices}_\wa(u,\cdot)}{\tilde{h}_\wa} .
     \end{align*}
 Now, assume for the sake of a contradiction that there 
 exists $\tilde{h} \in \wir^u$ for which this inequality does not hold. 
Then, there has to exist at least one commodity $i \in I$ with
    \begin{align*}
         \sum_{\wa \in \Routes_i} \dup{\Psi_\wa(u,\cdot) + \Pf^{\prices}_\wa(u,\cdot)}{h_\wa} > \sum_{\wa \in \Routes_i} \dup{\Psi_\wa(u,\cdot)+\Pf^{\prices}_\wa(u,\cdot)}{\tilde{h}_\wa}.
     \end{align*}
     Since $\sum_{\wa \in \Routes_i}h_\wa = \sum_{\wa \in \Routes_i}\tilde{h}_\wa$, there 
     have to exist two walks $\wa,\tilde{\wa} \in \Routes_i$ and a set $\mathfrak T \in \mathcal{B}(\hori)$ with $\sigma(\mathfrak T)>0$ such that $\Psi_\wa(u,t) + \Pf^{\prices}_\wa(u,t)>\Psi_{\tilde{\wa}}(u,t) + \Pf^{\prices}_{\tilde{\wa}}(u,t)$ and $h_\wa(t) > \tilde{h}_\wa(t)\geq 0$ for a.e.\ $t \in \mathfrak T$. 
This, however, contradicts the fact that $h$ is a $\prices$-DUE. 
 \end{proofClaim}
By
subtracting the constant term $\dup{\prices}{u}< \infty$ from both sides in \eqref{eq: ineqOpth} and taking the infimum over all $\tilde h \in \wir^u$, we get 
\begin{align*} 
         \sum_{\wa \in \Routes} \dup{\Psi_\wa(u,\cdot)}{h_\wa} 
            &\leq \inf_{\tilde{h} \in \wir^u} \sum_{\wa \in \Routes} \dup{\Psi_\wa(u,\cdot)}{\tilde{h}_\wa} + \dup{\prices}{\ell^u(\tilde{h})-u}\\ 
            &\leq   \inf_{\substack{\tilde{h} \in \wir^u:\\\ell^u(\tilde{h})\leq u}}  \sum_{\wa \in \Routes} \dup{\Psi_\wa(u,\cdot)}{\tilde{h}_\wa} + \dup{\prices}{\ell^u(\tilde{h})-u}\\ 
            &\leq   \inf_{\substack{\tilde{h} \in \wir^u:\\\ell^u(\tilde{h})\leq u}} \sum_{\wa \in \Routes} \dup{\Psi_\wa(u,\cdot)}{\tilde{h}_\wa},
     \end{align*}
     where we used for the last inequality that $\prices:\hori \to \R_+$ are nonnegative and hence $\dup{\prices}{\ell^u(\tilde{h})-u} \leq 0$. 
    Thus, $h$ is optimal for \ref{opt: Master}, which is exactly what we wanted to show. 
\end{proof}

In addition to the above insight, we also need some further notation regarding multi-commodity flows. 
To any $h \in \wir$, we can associate a \emph{multi-commodity edge flow} $(\g^i)_{i \in I} \in (L_+(\hori)^\GA)^I$ 
that partitions the induced edge flow $\ell(h) =: \g$ into a separate edge flow $\g^i$ for every commodity $\g^i$. 
\iffalse
More precisely, each $\g^i$ fulfills, \wrt the  traversal time function $\trav(\g,\cdot)$ corresponding to the aggregated flow  $\g$, flow conservation at all nodes $v \notin \{\source_i,\dest_i\}$ and has a nonnegative net outflow from $\source_i$ and inflow in $\dest_i$, i.e.~it fulfills for all $t \in \hori$:
\begin{align}\label{eq: DefMCFlow}
\sum_{\arc \in \delta^+(v)} \int_{[0,t]} (\g^i)_\arc \di \sigma -  \sum_{\arc \in \delta^-(v)} \int_{\exit_\arc(\g,\cdot)^{-1}([0,t])} (\g^i)_\arc \di \sigma= \int_{[0,t]} {\inflow}_v\di\sigma  
\end{align} 
where $\inflow_v$ denotes the net outflow rate of $v$ 
satisfying $\inflow_{\source_i} = \inflow_i \in L_+(\hori)$ at the source $\source_i$, $\inflow_{\dest_i} \in L_-(\hori)$ at the \sink $\dest_i$ and $\inflow_v = 0$ at all other nodes $v \neq \source_i,\dest_i$. 
\fi

With this notation and  \Cref{thm: ImplementableImpliesOptimal} at hand, we can show the promised statement of    not every \sysop $\gop$ being implementable. To do so, we will show that in the network depicted in \Cref{fig: CE_SystemOptimumImplementable} (for suitable parameters $M$ and $\varepsilon$), any implementable flow 
can not be system optimal. 

\begin{theorem}\label{thm: CounterSysopImpl}
    For the  5-commodity network  depicted in \Cref{fig: CE_SystemOptimumImplementable} with the Vickrey queuing model, there exist values $M \in \Rnn$ such that for every $\varepsilon\in (0,1]$ all system optimal flows are not implementable.
\end{theorem}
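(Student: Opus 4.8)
\emph{Setup and overall strategy.} Since the phenomenon is asserted already for homogeneous preferences, the plan is to work with the instance of \Cref{fig: CE_SystemOptimumImplementable} in which all value-of-time parameters coincide, normalized to $\vot_i=1$ for every commodity~$i$; the parameter $\eps\in(0,1]$ will enter the inflow rates (or free-flow times) of the instance and the whole argument will go through uniformly in~$\eps$, whereas the size parameter~$M$ is fixed large only at the very end. The proof combines an \emph{upper bound} $c_0=c_0(M,\eps)$ on the optimum of \ref{opt: System}, coming from one explicit feasible flow, with a strictly larger \emph{lower bound} $c_1=c_1(M,\eps)$ on the total travel time $\dup{\trav(u,\cdot)}{u}$ of \emph{every} implementable edge flow~$u$. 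Once $c_0<c_1$ is established (for a suitable~$M$ and all $\eps\in(0,1]$), any system optimal $\gop$ satisfies $\dup{\trav(\gop,\cdot)}{\gop}\le c_0<c_1$ and hence cannot be implementable.

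\emph{Upper bound.} I would write down the ``clever'' admissible walk flow $\tilde h\in\wir$ announced in the introduction: it routes most of each commodity's demand along its direct walk, but diverts a positive rate of one commodity, say commodity~$k$, along a cycle through $\dest_k$ (out of its own \sink and back) before terminating at $\dest_k$. Using the closed form \eqref{eq: Vick} one computes the queues and traversal times induced by $\tilde g:=\ell(\tilde h)$ and evaluates $\dup{\trav(\tilde g,\cdot)}{\tilde g}=:c_0$; the instance is arranged so that pushing commodity~$k$ out of $\dest_k$ reshapes the queues seen by the other commodities by more than the cost of the detour, so that $\tilde g$ undercuts the bound $c_1$ below. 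Checking $\tilde h\in\wir$, hence $\tilde g\in\ell(\wir)$, is a routine verification of the demand constraints and \eqref{eq: DefEdgeFlow}, and then the optimum of \ref{opt: System} is at most $c_0$.

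\emph{Lower bound via the master problem.} Let $u\in\ell(\wir)$ be implementable, witnessed by a walk inflow $h^0\in\wir$ with $\ell(h^0)=u$. As in the proof of \Cref{thm: ImplementableImpliesOptimal}, $\ell(h^0)=u$ already forces $h^0\in\wir^u$ and $\ell^u(h^0)=u$, so $h^0$ is a feasible solution of \ref{opt: Master} with tight inequality; moreover, by the aggregated-versus-walk-cost identity behind \Cref{lem: aggCostsVSwalkCosts} together with $\vot_i=1$, the objective of any $h\in\wir^u$ in \ref{opt: Master} equals $\dup{\trav(u,\cdot)}{\ell^u(h)}$, so all tight feasible solutions share the objective value $\dup{\trav(u,\cdot)}{u}$. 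Hence \Cref{thm: ImplementableImpliesOptimal} forces $h^0$ to be optimal for \ref{opt: Master}; equivalently, no feasible solution of \ref{opt: Master} has objective strictly below $\dup{\trav(u,\cdot)}{u}$. The heart of the proof is then the structural claim for the instance of \Cref{fig: CE_SystemOptimumImplementable}: \emph{every $u\in\ell(\wir)$ with $\dup{\trav(u,\cdot)}{u}<c_1$ admits some $h\in\wir^u$ with $\ell^u(h)\le u$ and $\ell^u(h)\neq u$} --- and then, since $\trav\ge\tmin>0$, the identity above makes the objective of~$h$ in \ref{opt: Master} strictly smaller than $\dup{\trav(u,\cdot)}{u}$, contradicting optimality of~$h^0$. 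Granting this claim, no implementable~$u$ has $\dup{\trav(u,\cdot)}{u}<c_1$, so no system optimal flow is implementable; the price-of-stability byproduct then follows by letting $M\to\infty$, which makes the ratio $c_1/c_0$ arbitrarily large.

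\emph{The main obstacle.} Everything above is bookkeeping once the two constants are in hand; the substantive work is the structural claim. The intended mechanism is that in the \emph{dynamic} model a total travel time below $c_1$ can only be attained by a flow~$u$ that --- like $\tilde g$ --- invests in a queue or a sink-cycle whose only payoff is a reduction of \emph{future} congestion; this payoff vanishes once the traversal times are frozen at $\trav^u\equiv\trav(u,\cdot)$ in \ref{opt: Master}, so the investment can be partially undone by a walk flow serving the same demand while leaving part of~$u$'s edge capacity unused. Turning this into a proof requires classifying the edge flows of total cost below~$c_1$ in the concrete five-commodity Vickrey instance precisely enough to exhibit such an~$h$ for each of them and to verify $\ell^u(h)\le u$ --- a case analysis over how~$u$ loads the few critical edges, drawing on the FIFO and outflow-rate properties of \Cref{ass: TravelGeneral}, the loading uniqueness of \Cref{ass: LuImplL}, and explicit queue estimates from \eqref{eq: Vick}. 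This is also where the quantitative gap $c_1-c_0>0$, and its growth as $M\to\infty$, must be pinned down.
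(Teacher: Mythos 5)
Your overall strategy matches the paper's: use an explicit walk flow that routes one commodity (commodity~\colNo{B} in the paper's notation) around a cycle through its own \sink to obtain an upper bound $c_0$ on the optimum of~\eqref{opt: System}, and use \Cref{thm: ImplementableImpliesOptimal} together with the master problem~\eqref{opt: Master} to show every implementable flow has total travel time at least some $c_1>c_0$, with the gap growing linearly in~$M$. You also correctly identify that the objective of a feasible $h$ for~\eqref{opt: Master} equals $\dup{\trav(u,\cdot)}{\ell^u(h)}$ under $\vot_i\equiv 1$, so that any feasible $h$ with $\ell^u(h)\lneq u$ strictly improves on the tight solution inducing~$u$. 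That part is sound and matches \Cref{claim: ImplemenatbleImplies}.

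The genuine gap is your \emph{structural claim}. You propose to show ``every $u\in\ell(\wir)$ with $\dup{\trav(u,\cdot)}{u}<c_1$ admits some $h\in\wir^u$ with $\ell^u(h)\lneq u$'' and then admit that turning this into a proof ``requires classifying the edge flows of total cost below $c_1$'' --- but you supply none of that classification, and it is essentially the entire technical content of the theorem. The paper does not try to classify low-cost flows at all. It goes the other direction: it first uses implementability (via the master-problem argument, as you do) to conclude that $u^{\colNo{B}}_{\arc_5}=0$ and hence $\trav_{\arc_5}(u,\cdot)=2-\varepsilon$ on $[0,4]$ (\Cref{claim: ImplemenatbleImplies}), and then traces a concrete Vickrey queue cascade through edges $\arc_2$, $\arc_3$, $\arc_4$, $\arc_5$ to show that at least one of commodities~\colNo{D} or~\colNo{F} then incurs an extra cost of order~$M/4$ (\Cref{claim: travarc,claim: ImplemenatbleImpliesHighTT}). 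That is, the paper proves ``implementable $\Rightarrow$ specific structural properties $\Rightarrow$ cost $\ge c_1$,'' whereas you want the contrapositive-flavored ``cost $<c_1$ $\Rightarrow$ master-problem improvable,'' which forces you to analyze all flows rather than the much smaller set constrained by implementability. Neither phrasing is wrong, but the paper's decomposition is the one that makes the case analysis tractable, and you have not carried out either version. Without the queue estimates of \Cref{claim: travarc,claim: ImplemenatbleImplies,claim: ImplemenatbleImpliesHighTT} and the explicit computations of \Cref{claim: PropertiesOfSysOpt,claim: PropsSysop,claim: CostDifferenceSysOpImplFlow} (which also establish that the detour flow is feasible and pins down $c_0$), the proof does not close. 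A minor additional slip: in the paper $\varepsilon$ enters only the free-flow travel times $\tau_{\arc_3},\tau_{\arc_4},\tau_{\arc_5}$, not the inflow rates.
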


\begin{figure}
    \centering
    \BigPicture[1]{
        \begin{adjustbox}{max width=\textwidth}
        \def\CESOfigversion{network}
            \input{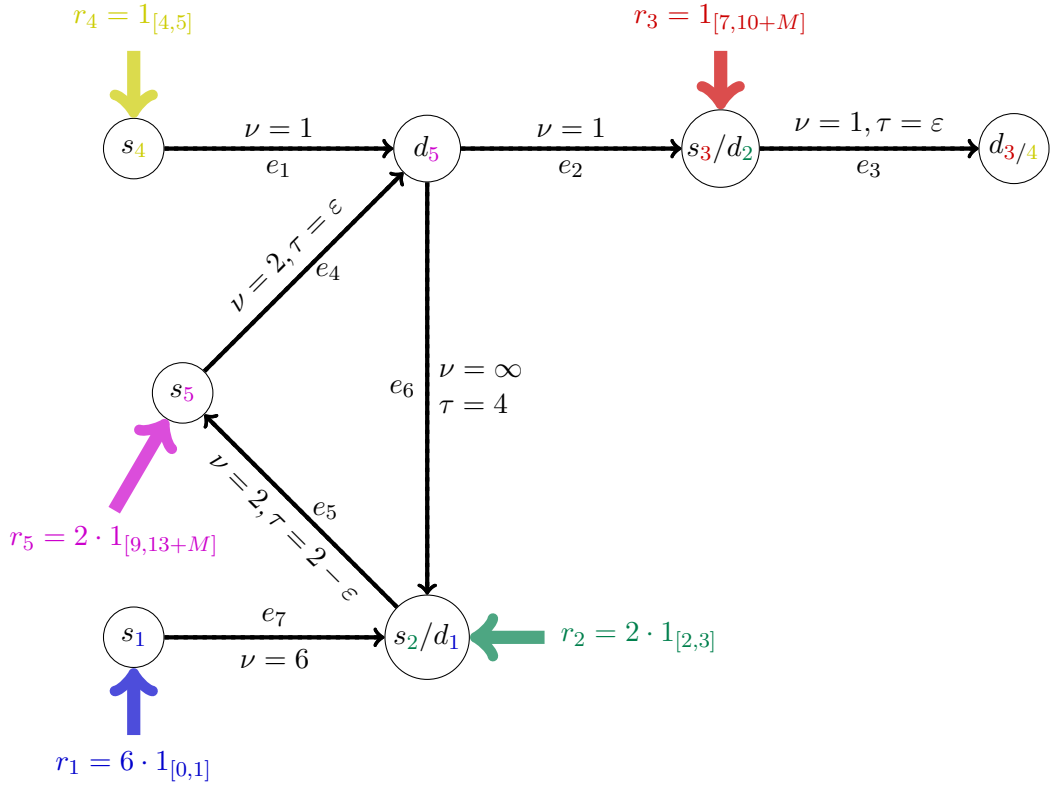}
        \end{adjustbox}
    }
    \caption{A 5-commodity network using the Vickrey model where the system optimum is not implementable (see \Cref{thm: CounterSysopImpl}). All free flow travel times not explicitly given (i.e.\ the ones on edges~$\arc_1$, $\arc_2$ and $\arc_7$) are~$1$. Capacities are as given on the edges.}
    \label{fig: CE_SystemOptimumImplementable}
\end{figure}

\begin{proof}
    We consider the network depicted in \Cref{fig: CE_SystemOptimumImplementable} with the Vickrey queuing model, the corresponding free flow travel times 
    $\tau_{\arc_1} = \tau_{\arc_2} = \tau_{\arc_7} = 1$, $\tau_{\arc_3} = \tau_{\arc_4} = \varepsilon$, $\tau_{\arc_5} = 2-\varepsilon$ and $\tau_{\arc_6} = 4$ as well the corresponding edge capacities $\nu_{\arc_1} = \nu_{\arc_2} = \nu_{\arc_3} = 1$, $\nu_{\arc_4}=\nu_{\arc_5} = 2$, $\nu_{\arc_6} = \infty$ and $\nu_{\arc_7} = 6$. There are $5$ commodities with identical VoT parameters $\gamma_i = 1,i \in I$, separate source, \sink[-]pairs $\source_i,\dest_i, i \in I$ and the respective network inflow rates $\inflow_\colNo{B} = 6\cdot 1_{[0,1]}$, $\inflow_\colNo{C} = 2\cdot 1_{[2,3]}$, $\inflow_\colNo{D} = 1\cdot 1_{[7,10+\M]}$, $\inflow_\colNo{E} = 1\cdot 1_{[4,5]}$ and $\inflow_\colNo{F} = 2\cdot1_{[9,13+M]}$. 
    Here, $\varepsilon \in (0,1]$ and $\M \in \R_{>0}$ are parameters for which we will show that (for any choice of $\varepsilon$) large enough values for $\M$ will lead to an example in which all \sysop $u$ are not implementable. Note that, for simplicity, we only describe the travel times on the flow carrying time horizon~$\hori'$ here as, during the following proof, we never have to consider travel times outside that time horizon anyway. In order to actually extend the travel time functions to the (suitably large chosen) full time horizon~$\hori$ one would apply the transformation described after \Cref{ass: AMapsHToH}.

    We will now show that in this network (for suitable choices of~$M$) no system optimal flow is implementable. We omit here most of the more technical computations and instead only highlight the main proof steps (in form of claims) and give high-level explanations for their correctness. The full formal proofs can be found in \Cref{app:ProofsOfCounterSysopImplClaims}.
    
    We start with two general properties of all flows in this network: 
    \begin{claim} \label{claim: travarc}
        For any $u$ induced by some $h \in \wir$ (under $\ell$), the following statements hold: 
        \begin{thmparts}
            \item \label[thmpart]{claim: travarc: 4>eps+infl} We have $\trav_{\arc_4}(u,t)\geq \varepsilon+\frac{1}{2}\int_9^t u^-_{\arc_5}\di\sigma$ for all $t \in [9,13+M]$. 
            \item \label[thmpart]{claim: travarc: 3>eps+infl} We have $\trav_{\arc_3}(u,t)\geq \varepsilon+\int_7^t u^{\colNo{E},-}_{\arc_2}\di\sigma$ for all $t \in [7,10+M]$. 
            \item The queue on $\arc_2$ never exceeds a size of~$3$. \label[thmpart]{claim: travarc: 2QueueUpperBound}
            \item $D_{\arc_4}(u,t) = \varepsilon$ for all $t \in [0,9]$. \label[thmpart]{claim: travarc: 4NoQueue}
            \item $\trav_{\arc_1}(u,t) = 1$, $\trav_{\arc_7}(u,t) = 1$ and $\trav_{\arc_6}(u,t)=4$ for all $t \in \hori'$. \label[thmpart]{claim: travarc: 1NoQueue}\label[thmpart]{claim: travarc: 7NoQueue}\label[thmpart]{claim: travarc: 6NoQueue}
        \end{thmparts}
    \end{claim}

    Here, \ref{claim: travarc: 4>eps+infl} holds since the \colName{F} flow (commodity~\colNo{F}) which enters the network during $[9,13+M]$ has to first use edge~$\arc_4$ and on its own already uses up all its capacity. Hence, any additional inflow (arriving via edge~$\arc_5$) creates a queue on edge~$\arc_4$, which then cannot deplete before the network inflow of \colName{F} flow stops. The reason for \ref{claim: travarc: 3>eps+infl} is completely analogous as here the \colName{D} flow (commodity~\colNo{D}) already uses up all of edge~$\arc_3$'s capacity during $[7,10+M]$. For \ref{claim: travarc: 2QueueUpperBound} one observes that only \colName{C} and \colName{E} flow (commodities~\colNo{C} and~\colNo{E}) may use edge~$\arc_2$ and their combined flow volume is~$3$. Finally, \ref{claim: travarc: 4NoQueue} and \ref{claim: travarc: 1NoQueue} hold since the inflow rates into these edges can never exceed their respective capacities on the given intervals and, thus, no queues can form.

    Next, we only consider implementable flows. Using \Cref{thm: ImplementableImpliesOptimal}, we first show that in all such flows the particles of commodity~\colNo{C} (\colName{C}) traverse edge~$\arc_5$ without experiencing any queuing delays. From this, we will then deduce that in any implementable flow at least one of the commodities~\colNo{D} and~\colNo{F} incurs an additional total cost of order~$M$.
    
    \begin{claim}\label{claim: ImplemenatbleImplies}
        If $u \in L_+(\hori)^\GA$ is implementable via $(h,\prices)$, then we have:
        \begin{thmparts}
            \item $u^{\colNo{B}}_{\arc_5}(t) = 0$ for almost all $t \in \hori$.\label[thmpart]{claim: ImplemenatbleImplies: <=3} 
            \item $\trav_{\arc_5}(u,t) = 2-\varepsilon$ for all $ t \in [0,4]$. \label[thmpart]{claim: ImplemenatbleImplies: Trav=1}
        \end{thmparts}
    \end{claim}

    Here, \ref{claim: ImplemenatbleImplies: <=3} states that no \colName{B} flow (commodity~\colNo{B}) ever enters edge $\arc_5$. This is true because $\arc_5$'s tail is that commodity's \sink and, hence, any flow leaving that node could instead also just directly leave the network there and, thereby, improve the flow's overall objective for \ref{opt: Master}  without increasing any edge loads -- which would be a contradiction to the implementability of the given flow by \Cref{thm: ImplementableImpliesOptimal}. 
        This, in turn, directly implies that no queue exists on edge~$\arc_5$ until time~$3$ (i.e.\ \ref{claim: ImplemenatbleImplies: Trav=1}) as the capacity of this edge is~$2$ and the only other commodity which can enter this edge during that time is commodity~\colNo{C} (\colName{C}) at a rate of~$2$.

    Next, we can deduce from \Cref{claim: ImplemenatbleImplies} that in any implementable flow the \colName{C} (commodity~\colNo{C}) flow arrives at node $\dest_\colNo{F}$ during $[4,5]$ where it has two choices: Either enter edge~$\arc_2$ to travel directly towards its \sink or take a detour via the cycle $(\arc_6,\arc_5,\arc_4)$ first.  If a significant part of \colName{C} flow takes the detour, it will, together with the \colName{F} flow entering the network at node~$\source_\colNo{F}$, create a queue on edge~$\arc_4$ which will delay all of the rest of the  \colName{F} commodity's flow and, hence, significantly increasing (by order $M$) that commodity's total travel time. Next, the \colName{E} flow (of commodity~\colNo{E}) arriving at node $\dest_\colNo{F}$ during~$[5,6]$ faces the same choice as the \colName{C} flow before it. If a significant part of it chooses the cycle, it again creates a queue on edge~$\arc_4$, delaying most of the \colName{F} commodity's flow. If, on the other hand, large parts of both \colName{C} \emph{and} \colName{E} flow choose their respective direct paths from~$\dest_{\colNo{F}}$ (both starting with edge~$\arc_2$) the \colName{C} flow creates a queue on edge~$\arc_2$, which in turn delays the later arriving \colName{E} flow. Consequently the \colName{E} flow arrives at node~$\source_\colNo{D}$ together with the \colName{D} flow entering the network there. Combined, this flow then exceeds edge~$\arc_3$'s capacity and builds up a queue, which delays all of the following \colName{D} flow. In the flow depicted in \Cref{fig: CE_SystemOptimumImplementable_ImplFlowVariantA} one can see both effects as it is right at the border between the two cases described before.

    \begin{figure}
        \centering
        \begin{adjustbox}{max width=\textwidth}
            \begin{adjustbox}{max width=\textwidth}
                \def\CESOfigversion{implementableB}
                \def\animationdatapath{tikz/}
                \input{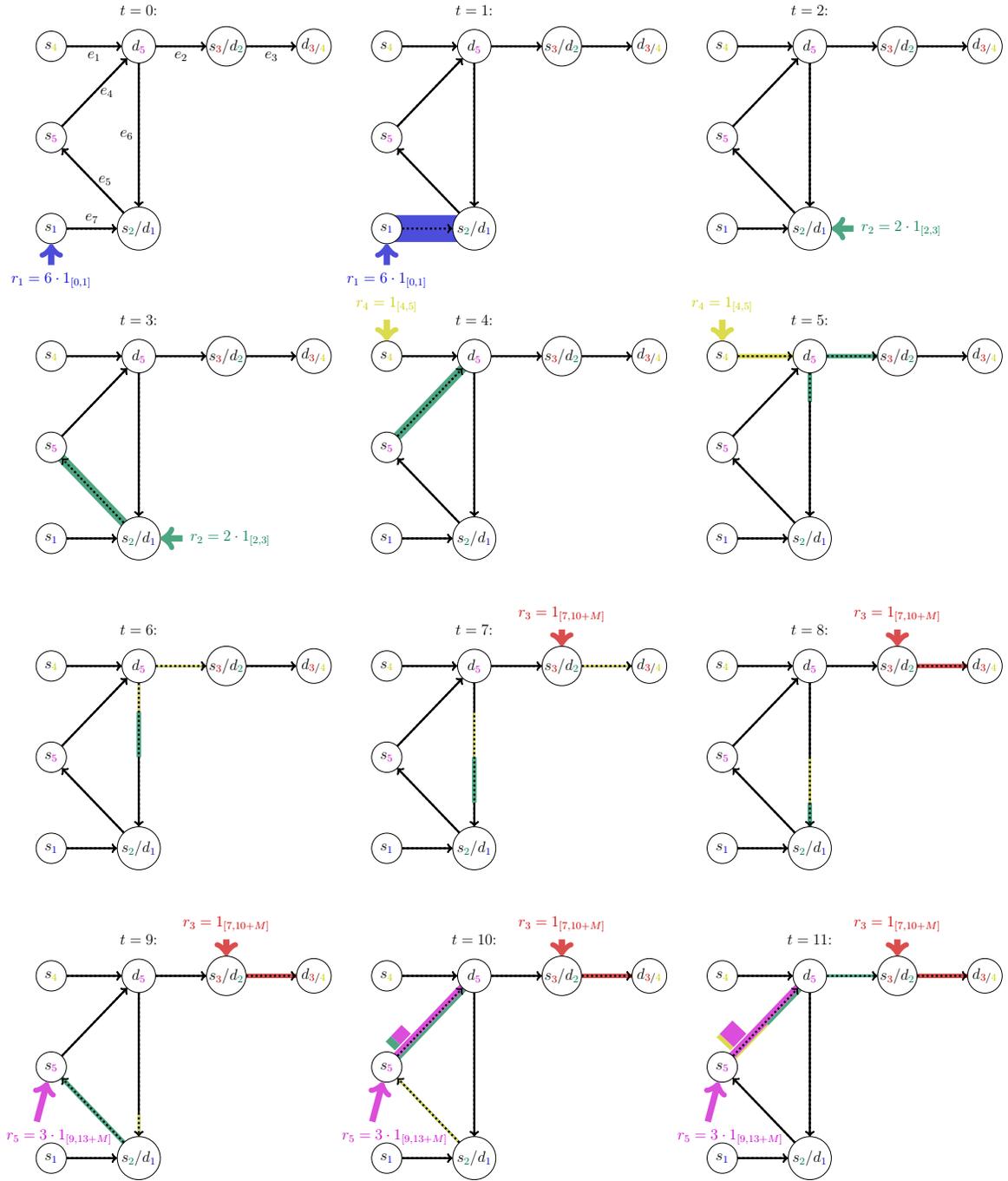}
            \end{adjustbox}
        \end{adjustbox}
        \caption{Another example for an implementable flow in the network considered in \Cref{thm: CounterSysopImpl} (with $\varepsilon=1$).  This is an example for a flow where \Cref{claim: ImplemenatbleImpliesHighTT:For6} holds, i.e.\ a significant amount of \protect\colName{C} and \protect\colName{E} flow travels around the cycle creating a queue on edge~$\arc_4$, which then delays all future particle of the \protect\colName{F} flow.}
        \label{fig: CE_SystemOptimumImplementable_ImplFlowVariantB}
    \end{figure}

    \begin{figure}
        \centering
        \begin{adjustbox}{max width=\textwidth}
            \def\CESOfigversion{implementableA}
            \def\animationdatapath{tikz/}
            \input{tikz/SystemOptimumNotImplementable}
        \end{adjustbox}
        \caption{An example for an implementable flow in the network considered in \Cref{thm: CounterSysopImpl} (with $\varepsilon=1$).  This is an example for a flow where both \Cref{claim: ImplemenatbleImpliesHighTT:For4,claim: ImplemenatbleImpliesHighTT:For6} hold. \Cref{claim: ImplemenatbleImpliesHighTT} holds as large amounts of both \protect\colName{C} and \protect\colName{E} flow take the direct path towards their \sink[.] The \protect\colName{C} flow then creates a queue on edge~$\arc_2$, delaying the later arriving \protect\colName{E} flow which then, in turn, creates a queue on edge~$\arc_3$ and delays all future particles of the \protect\colName{D} flow. \Cref{claim: ImplemenatbleImpliesHighTT:For6} holds as well since still a significant part of the flow uses the cycle~$(\arc_6,\arc_5,\arc_4)$ and, therefore, creates a queue on edge~$\arc_4$ delaying all future particles of the \protect\colName{F} flow.}
        \label{fig: CE_SystemOptimumImplementable_ImplFlowVariantA}
    \end{figure}

    \begin{claim}\label{claim: ImplemenatbleImpliesHighTT}
        If $u \in L_+(\hori)^\GA$ is implementable via $(h,\prices)$, then at least one of the following statements is true: 
        \begin{thmparts}
            \item $\dup{\trav(u,\cdot)}{u^\colNo{D}}\geq (3 + M)\varepsilon + M/4$ \label[thmpart]{claim: ImplemenatbleImpliesHighTT:For4}
            \item $\dup{\trav(u,\cdot)}{u^\colNo{F}}\geq 2\varepsilon(M+4)+M/4$.\label[thmpart]{claim: ImplemenatbleImpliesHighTT:For6}  
        \end{thmparts}
    \end{claim}

    With this, we can now conclude the proof of the \namecref{thm: CounterSysopImpl} by constructing a flow $h \in \wir$ with a corresponding edge flow $\g = \ell(h)$ that  is not implementable, yet, has strictly lower total travel time (for $\M$ sufficiently large)  than any  implementable flow $u$.  This then shows that any implementable flow is not \sysop and, conversely, no \sysop flow is implementable. 
    
    We define $h$ via 
        \begin{align}\label{eq: SysptFlowDef}
            h_\wa = \begin{cases}
                \inflow_{\colNo{B}}, &\text{if } \wa = ((\arc_7,\arc_5,\arc_4,\arc_6),\colNo{B}) \\
                \inflow_{\colNo{C}}, &\text{if } \wa = ((\arc_5,\arc_4,\arc_2),\colNo{C}) \\
                \inflow_{\colNo{D}}, &\text{if } \wa = ((\arc_3),\colNo{D}) \\
                \inflow_{\colNo{E}}, &\text{if } \wa = ((\arc_1,\arc_2,\arc_3),\colNo{E}) \\
                \inflow_{\colNo{F}}, &\text{if } \wa = ((\arc_4),\colNo{F}) \\
                0, &\text{else. } \\
            \end{cases}
        \end{align} 
    That is, $h$ is the flow wherein the \colName{B} particles (of commodity~\colNo{B}) travel around the cycle $(\arc_5,\arc_4,\arc_6)$ once before leaving the network at their \sink while all other commodities take their direct path. It is, then, an immediate consequence of \Cref{claim: ImplemenatbleImplies: <=3} that the induced edge flow~$\g$ is not implementable.
    A careful computation then shows the following structural properties of~$\g$ (in \Cref{fig: CE_SystemOptimumImplementable_OptFlow} we also provide a graphical depiction of this flow):

    \begin{figure}
        \centering
        \begin{adjustbox}{max width=\textwidth}
            \def\CESOfigversion{optimal}
            \def\animationdatapath{tikz/}
            \input{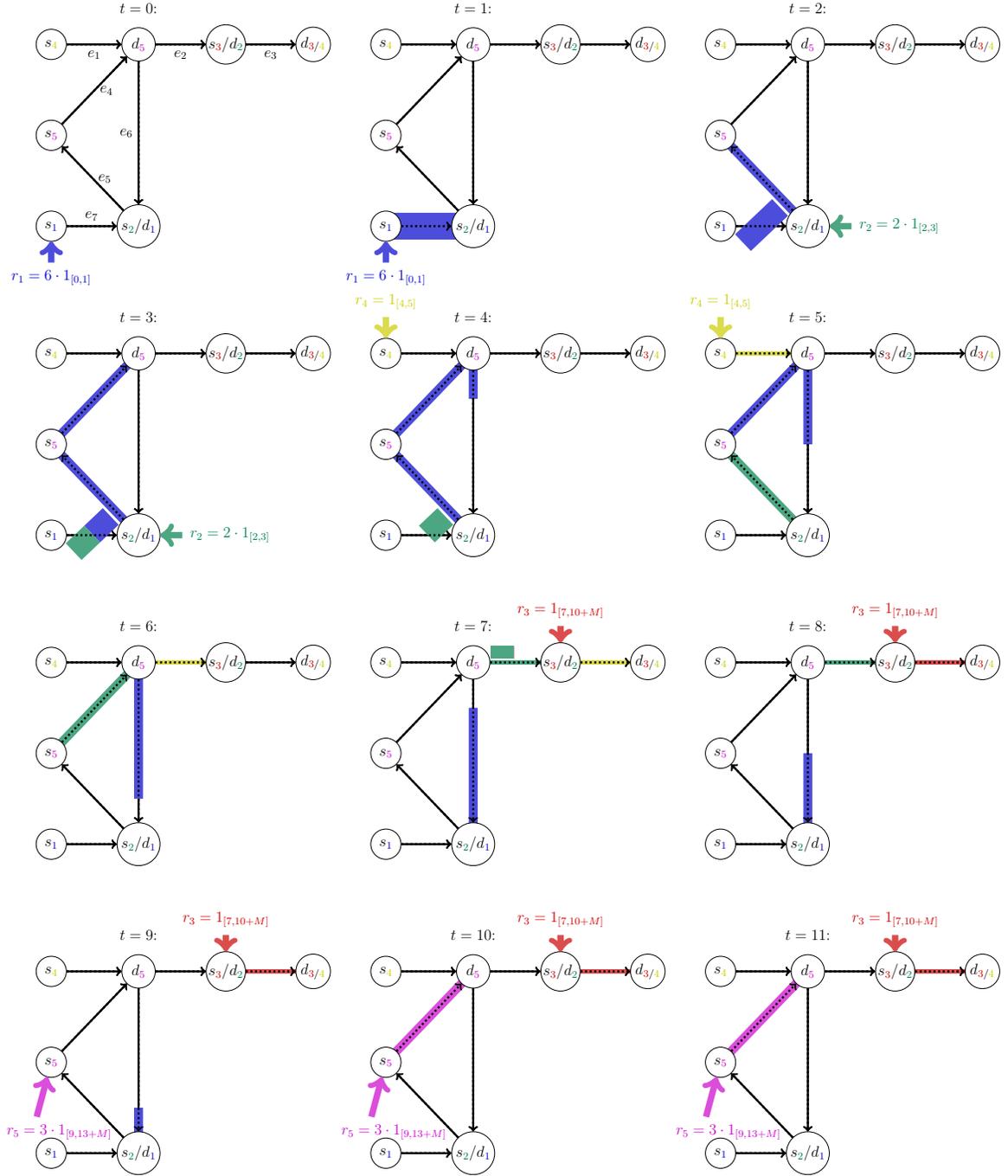}
        \end{adjustbox}
        \caption{The flow used in \Cref{claim: PropsSysop} to show that there are flows with strictly smaller total travel time than any implementable flow (depicted here for $\varepsilon=1$).}
        \label{fig: CE_SystemOptimumImplementable_OptFlow}
    \end{figure}

    \begin{claim}\label{claim: PropertiesOfSysOpt}
         Let $\g$ be the edge flow induced by $h$ defined in \eqref{eq: SysptFlowDef}. Then, we have 
         \begin{enumerate}[label=\alph*)]
             \item $\g^{\colNo{B}}_{\arc_7} = \inflow_{\colNo{B}} = 6\cdot 1_{[0,1]}$, $\g^{\colNo{B}}_{\arc_5} = 6\cdot 1_{[1,2]}$, $\g^{\colNo{B}}_{\arc_4} = 2\cdot 1_{[3-\varepsilon,6-\varepsilon]}$, $\g^{\colNo{B}}_{\arc_6} = 2\cdot 1_{[3,6]}$  and $\g^{\colNo{B}}_{\arc} = 0$ for all other edges $\arc\in \GA\setminus\{\arc_7,\arc_5,\arc_4,\arc_6\}$.  \label[thmpart]{claim: PropertiesOfSysOpt:comm2} 
             \item $\g^{\colNo{C}}_{\arc_5} = \inflow_\colNo{C} = 2\cdot 1_{[2,3]}$, 
             $\g^{\colNo{C}}_{\arc_4} = 2\cdot 1_{[6-\varepsilon,7-\varepsilon]}$, $\g^{\colNo{C}}_{\arc_2}= 2\cdot 1_{[6,7]}$ 
             and $\g^{\colNo{C}}_{\arc} = 0$ for all edges $\arc\in \GA \setminus\{\arc_5,\arc_4,\arc_2\}$.\label[thmpart]{claim: PropertiesOfSysOpt:comm3} 
             \item $\g^{\colNo{D}}_{\arc_3} = \inflow_{\colNo{D}} = 1_{[7,10+M]}$ and $\g^{\colNo{D}}_{\arc} = 0$ for all other edges $\arc\in \GA \setminus\{\arc_3\}$.  \label[thmpart]{claim: PropertiesOfSysOpt:comm4} 
             \item $\g^{\colNo{E}}_{\arc_1} = \inflow_{\colNo{E}} = 1_{[4,5]}$, $\g^{\colNo{E}}_{\arc_2} = 1_{[5,6]}$, $ g^{\colNo{E}}_{\arc_3} = 1_{[6,7]}$ and $\g^{\colNo{E}}_{\arc} = 0$ for all other edges $\arc\in \GA \setminus \{\arc_1,\arc_2,\arc_3\}$.  \label[thmpart]{claim: PropertiesOfSysOpt:comm5} 
             \item $\g^{\colNo{F}}_{\arc_4} = \inflow_{\colNo{F}} = 2\cdot 1_{[9,13+M]}$ and $\g^{\colNo{F}}_{\arc} = 0$ for all other edges $\arc\in \GA \setminus \{\arc_4\}$.  \label[thmpart]{claim: PropertiesOfSysOpt:comm6} 
         \end{enumerate} 
         Moreover, the induced travel times satisfy
         \begin{enumerate}[resume,label=\alph*)]
             \item $\trav_{\arc_5}(\g,t) \leq 5-\varepsilon$ for all $t \in [1,2]$ and $\trav_{\arc_5}(\g,t) =  4-\varepsilon$ for all $t \in [2,3]$. \label[thmpart]{claim: PropertiesOfSysOpt:trav5} 
             \item $\trav_{\arc_4}(\g,t) = \varepsilon$ for all $t \in \hori'$. \label[thmpart]{claim: PropertiesOfSysOpt:trav4} 
             \item $\trav_{\arc_2}(\g,t) = 1$ for all $t \in [5,6]$ and $\trav_{\arc_2}(\g,t) \leq 3$ for all $t \in [6,7]$. \label[thmpart]{claim: PropertiesOfSysOpt:trav2} 
             \item $\trav_{\arc_3}(\g,t) = \varepsilon$ for all $t \in \hori'$. \label[thmpart]{claim: PropertiesOfSysOpt:trav3} 
         \end{enumerate}
     \end{claim}

    With this, it is now straightforward to compute the individual commodities' total travel times and, in particular, conclude that the total travel times incurred by~$\g$ are optimal for commodities~\colNo{D} and~\colNo{F} and independent of~$M$ for all other commodities.
          
    \begin{claim}\label{claim: PropsSysop}
        Let $\g$ be the edge flow induced by $h$ defined in \eqref{eq: SysptFlowDef}. Then, $\dup{\trav(\g,\cdot)}{\g^i},i = \colNo{B},\colNo{C},\colNo{E}$ is bounded by some constant that is independent of~$\M$ and~$\varepsilon$. Moreover, we have $\dup{\trav(\g,\cdot)}{\g^{\colNo{D}}} = (3+M)\varepsilon$ and $\dup{\trav(\g,\cdot)}{\g^{\colNo{F}}} = 2(4+M)\varepsilon$ and these values are minimal among all edge flows ineducable by some walk inflow rate contained in~$\wir$.
    \end{claim}

    Finally, by combining these properties of~$\g$ with the properties of any implementable flow from \Cref{claim: ImplemenatbleImpliesHighTT}, we can deduce that (for sufficiently large~$M$) $\g$ has strictly lower total travel time than any implementable flow.
    
    \begin{claim}\label{claim: CostDifferenceSysOpImplFlow} 
        For large enough $\M \in \Rnn$ and any $\varepsilon \in (0,1]$
        any implementable flow $u$ has strictly higher total travel time than $\g$ induced by $h$ as defined in \eqref{eq: SysptFlowDef}.
    \end{claim}

    This claim then directly implies that there exists a network (namely the one depicted in \Cref{fig: CE_SystemOptimumImplementable} with a suitable choice of the parameter~$M$ and any choice of $\varepsilon \in (0,1]$) where no \sysop flow is implementable.
\end{proof}

\begin{remark}
    It is straightforward to verify that for any choice of $\varepsilon \in (0,1]$ the network constructed in the proof of \Cref{thm: CounterSysopImpl} does have a \sysop flow: 
    Since we have $\vot_i=1$ for all $i \in I$, we can use \Cref{lem: aggCostsVSwalkCosts} to rewrite \ref{opt: System} as 
    \begin{align}
        \inf_{h} \;  &\dup{\Psi(\ell(h),\cdot)}{h}\tag{$\tilde{\mathrm{P}}^{\mathrm{sys}}$} \label{opt: SystemRef}  \\
            &h \in \wir.\nonumber
    \end{align}
    % \begin{align}
    %    \inf_{h} \;  \dup{\Psi(\ell(h),\cdot)}{h}\tag{$\tilde{\mathrm{P}}^{\mathrm{sys}}$} \quad \text{ s.t.: } \label{opt: System}  
    %    h \in \Big\{h \in\wir\,\big \vert\, \ell(h)  = 0 \text{ on }\hori\setminus\hori' \Big\}.
    %\end{align}
    Moreover, as the inflow rates are bounded, $\wir \subseteq L_+^p(\hori)^{\Routes}$ for any $p>1$. 
    Furthermore, as the travel times are lower bounded by~$\varepsilon>0$ and the induced aggregated edge flows are required to be supported on $\hori'$, there are only finitely many walks $\Routes'$ that can be utilized by a feasible $h$. 
    Combining these two insights, it is then straight forward to show that the set of feasible solutions to the above problem is weakly compact in $L^p(\hori)^{\Routes'}$ for $p>1$. 
    The existence of an optimal solution then follows by the fact that  the mapping sending walk-inflows to total travel time is weakly continuous as well, cf.~\cite[Lemma~7]{CominettiCL15}. 
    Hence, the above \Cref{thm: CounterSysopImpl} also shows that there exist \sysop flows which are not implementable.
\end{remark}

Moreover, comparing the total cost for the best flow constructed in the proof of \Cref{thm: CounterSysopImpl} with the lower bounds on the total cost of any implementable flow derived there, also gives us a lower bound on the price of stability, i.e.\ the ration of best equilibrium to system optimum (cf.\ \cite{Anshelevich08}), achievable by tolls. Formally, we define the \emph{price of stability achievable by tolls} as
    \[\sup_{\text{network } \mathcal{N}}\frac{\inf\set{\dup{\trav(\g,\cdot)}{\g} | \g \in \ell(\wir) \text{ implementable}}}{\inf\set{\dup{\trav(\g,\cdot)}{\g} | \g \in \ell(\wir)}},\]
where a network~$\mathcal{N}$ consists of a graph, network inflow rate, a time horizon, travel time functions and the induced network loading.

\begin{corollary}
    For multi-source, multi-\sink networks, the price of stability achievable by tolls is unbounded.
\end{corollary}
\begin{proof}
\iffalse 
    We consider the instances from the proof of \Cref{thm: CounterSysopImpl} (i.e.\ the one depicted in \Cref{fig: CE_SystemOptimumImplementable}) parametrized by $M \in \Rnn$ and $\varepsilon \in (0,1]$. From the proof of \Cref{claim: PropsSysop} we see that for any such instance there exists a flow~$\g$ (namely the one induced by~$h$ defined in \eqref{eq: SysptFlowDef}) with
        \begin{align*}
            \dup{\trav(\g,\cdot)}{\g} 
                &= \dup{\trav(\g,\cdot)}{\g^1} + \dup{\trav(\g,\cdot)}{\g^2} + \dup{\trav(\g,\cdot)}{\g^3} + \dup{\trav(\g,\cdot)}{\g^4} + \dup{\trav(\g,\cdot)}{\g^5} + \dup{\trav(\g,\cdot)}{\g^6} \\
                &\leq 60 + 14 + (3+M)\varepsilon + 2+\varepsilon + 2(4+M)\varepsilon = 76 + (12+3M)\varepsilon.
        \end{align*}
    \lgcomI{.}{Variant for the above (in particular for EC version without the proof of \Cref{claim: PropsSysop}:\newline
    From \Cref{claim: PropsSysop} we get the existence of a constant~$C \in \Rnn$ such that for any such instance there exists a flow~$\g$ (namely the one induced by~$h$ defined in \eqref{eq: SysptFlowDef}) with $\dup{\trav(\g,\cdot)}{\g} \leq C + (11+3M)\varepsilon$.}
\fi 
  We consider the instances from the proof of \Cref{thm: CounterSysopImpl} (i.e.\ the one depicted in \Cref{fig: CE_SystemOptimumImplementable}) parametrized by $M \in \Rnn$ and $\varepsilon \in (0,1]$.
From \Cref{claim: PropsSysop} we get the existence of a constant~$C \in \Rnn$ such that for any such instance there exists a flow~$\g$ (namely the one induced by~$h$ defined in \eqref{eq: SysptFlowDef}) with $\dup{\trav(\g,\cdot)}{\g} \leq C + (11+3M)\varepsilon$.
    On the other hand, \Cref{claim: ImplemenatbleImpliesHighTT} tells us that for any implementable flow~$u$ we have
        $\dup{\trav(u,\cdot)}{u} \geq M/4$. Together, this implies that the price of stability can be unbounded. 
\end{proof}

\section{Implementability for Single-Source, Single-\Sink Networks}  
%\subsection{Implementability}
As we demonstrated in \Cref{sec: NonImpl}, not every system-optimal flow is implementable in the multi-source, multi-\sink case. 
For the single-source, single-\sink case, Graf et al.\ characterized the implementability of a flow $\gop$  under a subset of our assumptions (\Cref{ass: TravelGeneral,ass: LuImplL}) through the property that $\gop$ does not send flow into edges leaving the \sink (\Tollref{thm: CombiChara}). 
From this, however, it does not follow directly that system optimal flows are implementable as 
 it remains unclear whether 
 %a similar effect can occur in this setting as in the multi-source, multi-\sink case, where a commodity
 sending flow on an outgoing edge from its \sink can reduce the total travel time (see the proof of \Cref{thm: CounterSysopImpl} for such an effect in the multi-source, multi-\sink case).
The following \Cref{exa: CE_SystemOptimumWithoutMonotonicity} illustrates that this is indeed possible under our current assumptions.

\begin{figure}
    \centering
    \BigPicture[0]{
    \def\CENMfigversion{network}
    \begin{adjustbox}{max width=\textwidth}
        \newcommand{\drawNetworkSkeleton}[1][]{
    \coordinate(s1)at(0,0);
    \coordinate(s2)at(3,0);
    \coordinate(t2)at(6,0);
    
    \node[namedVertexW,inner sep=.15](temp-s1)at(s1){$\source_{\color{colA}1/\color{colB}2}$};
    \node[namedVertexW,inner sep=4](temp-s2)at(s2){$v$};
    \node[namedVertexW,inner sep=.15](temp-t2)at(t2){$\dest_{{\color{colA}1}/\color{colB}2}$};
    
    \draw[edge](temp-s1) --node[below]{\ifthenelse{\equal{#1}{desc}}{$e_1$}{}}(temp-s2);
    \draw[edge](temp-s2) --node[below]{\ifthenelse{\equal{#1}{desc}}{$e_2$}{}}(temp-t2);
    \draw[edge](temp-t2) to[bend right=70]node[above]{\ifthenelse{\equal{#1}{desc}}{$e_3$}{}}(temp-s2);
}

\newcommand{\drawNetwork}[1][]{
    \ifthenelse{\equal{#1}{desc}}{}{
        \node at($(s2)!.5!(t2)+(0,1.8)$){$t=#1$:};
    }
    
    \node[namedVertexF,inner sep=.15](s1)at(s1){$\source_{{\color{colA}1}/\color{colB}2}$};
    \node[namedVertexF,inner sep=4](s2)at(s2){$v$};
    \node[namedVertexF,inner sep=.15](t2)at(t2){$\dest_{{\color{colA}1}/\color{colB}2}$};

    \ifthenelse{\equal{#1}{desc} \OR \(\lengthtest{#1 pt > 4.9pt} \AND \lengthtest{#1 pt < 6.1pt}\)}{
        \draw[line width=5,<-,fcolAr](s1) -- +(0,-1)node[anchor=north,color=colA]{$r_1=6\cdot\CharF[[5,6]]$};
    }{}
    \ifthenelse{\equal{#1}{desc} \OR \lengthtest{#1 pt < 1.1pt}}{
        \draw[line width=5,<-,fcolBr](s1) -- +(0,1)node[anchor=south,color=colB]{$r_2=\CharF[[0,1]]$};
    }{}

    \draw[edge,dotted,opacity=.3](s1) --(s2);
    \draw[edge,dotted,opacity=.3](s2) --(t2);
    \draw[edge,dotted,opacity=.3](t2) to[bend right=70](s2);
}

%%%%%%%%%%%%%%%%%%%%%%%%%%%%%%%%%%%%%%%%%%%%%%%%%%%%%%%%%%%%%%%%%%%%%%%%%%%%%%%%%%%%%%%%%%%%%%%%%%%%%%%%%%%
% network

\ifthenelse{\equal{\CENMfigversion}{network}}{
\begin{tikzpicture}
    \drawNetworkSkeleton[desc]
    
    \drawNetwork[desc]
\end{tikzpicture}
}

%%%%%%%%%%%%%%%%%%%%%%%%%%%%%%%%%%%%%%%%%%%%%%%%%%%%%%%%%%%%%%%%%%%%%%%%%%%%%%%%%%%%%%%%%%%%%%%%%%%%%%%%%%%
% optimal flow 

\ifthenelse{\equal{\CENMfigversion}{optimal}}{
\begin{tikzpicture}
    \begin{scope}[xshift=0cm, yshift=0cm]%0
        \drawNetworkSkeleton
    
        \drawNetwork[0]
    \end{scope}

    \begin{scope}[xshift=9cm, yshift=0cm]%0
        \drawNetworkSkeleton

        \draw[line width=4pt,fcolBr](s1)--(s2);
    
        \drawNetwork[1]
    \end{scope}

    \begin{scope}[xshift=0cm, yshift=-4cm]%1
        \drawNetworkSkeleton

        \draw[line width=4pt,fcolBr](s2)--($(s2)!.5!(t2)$);
    
        \drawNetwork[2]
    \end{scope}

    \begin{scope}[xshift=9cm, yshift=-4cm]%2
        \drawNetworkSkeleton

        \draw[line width=4pt,fcolBr]($(s2)!.5!(t2)$)--(t2);
    
        \drawNetwork[3]
    \end{scope}

    \begin{scope}[xshift=0cm, yshift=-8cm]%3
        \drawNetworkSkeleton
    
        \drawNetwork[4]

        \draw[line width=4pt,fcolB](t2)to[bend right=70](s2);
    \end{scope}

    \begin{scope}[xshift=9cm, yshift=-8cm]%4
        \drawNetworkSkeleton

        \draw[line width=4pt,fcolBr](s2)--($(s2)!.67!(t2)$);
    
        \drawNetwork[5]
    \end{scope}

    \begin{scope}[xshift=0cm, yshift=-12cm]%5
        \drawNetworkSkeleton

        \draw[line width=4pt,fcolBr]($(s2)!.67!(t2)$)--(t2);
        \draw[line width=16pt,fcolAr](s1)--(s2);
    
        \drawNetwork[6]
    \end{scope}

    \begin{scope}[xshift=9cm, yshift=-12cm]%6
        \drawNetworkSkeleton

        \draw[line width=16pt,fcolAr](s2)--(t2);
    
        \drawNetwork[7]
    \end{scope}

    % \begin{scope}[xshift=9cm, yshift=-12cm]%7
    %     \drawNetworkSkeleton
    
    %     \drawNetwork[7]
    % \end{scope}
\end{tikzpicture}
}{}

%%%%%%%%%%%%%%%%%%%%%%%%%%%%%%%%%%%%%%%%%%%%%%%%%%%%%%%%%%%%%%%%%%%%%%%%%%%%%%%%%%%%%%%%%%%%%%%%%%%%%%%%%%%%%%%%%%%
% Best implementable flow

\ifthenelse{\equal{\CENMfigversion}{implementable}}{
\begin{tikzpicture}
    \begin{scope}[xshift=0cm, yshift=0cm]%0
        \drawNetworkSkeleton
    
        \drawNetwork[0]
    \end{scope}

    \begin{scope}[xshift=9cm, yshift=0cm]%0
        \drawNetworkSkeleton

        \draw[line width=4pt,fcolBr](s1)--(s2);
    
        \drawNetwork[1]
    \end{scope}

    \begin{scope}[xshift=0cm, yshift=-4cm]%1
        \drawNetworkSkeleton

        \draw[line width=4pt,fcolBr](s2)--($(s2)!.5!(t2)$);
    
        \drawNetwork[2]
    \end{scope}

    \begin{scope}[xshift=9cm, yshift=-4cm]%2
        \drawNetworkSkeleton

        \draw[line width=4pt,fcolBr]($(s2)!.5!(t2)$)--(t2);
    
        \drawNetwork[3]
    \end{scope}

    \begin{scope}[xshift=0cm, yshift=-8cm]%3
        \drawNetworkSkeleton
    
        \drawNetwork[4]
    \end{scope}

    \begin{scope}[xshift=9cm, yshift=-8cm]%4
        \drawNetworkSkeleton

        \drawNetwork[5]
    \end{scope}

    \begin{scope}[xshift=0cm, yshift=-12cm]%5
        \drawNetworkSkeleton

        \draw[line width=16pt,fcolAr](s1)--(s2);
    
        \drawNetwork[6]
    \end{scope}

    \begin{scope}[xshift=9cm, yshift=-12cm]%6
        \drawNetworkSkeleton

        \draw[line width=16pt,fcolAr](s2)--($(s2)!.67!(t2)$);
    
        \drawNetwork[7]
    \end{scope}

    \begin{scope}[xshift=0cm, yshift=-16cm]%6
        \drawNetworkSkeleton

        \draw[line width=16pt,fcolAr]($(s2)!.67!(t2)$)--(t2);
    
        \drawNetwork[8]
    \end{scope}

    \begin{scope}[xshift=9cm, yshift=-16cm]%6
        \drawNetworkSkeleton

        \drawNetwork[9]
    \end{scope}
\end{tikzpicture}
}{}
    \end{adjustbox}
    }
    \caption{The 2-commodity network considered in \Cref{exa: CE_SystemOptimumWithoutMonotonicity}. All travel times are constant (flow and time-independent) except on $\arc_2$ which has a travel time of $\trav_{\arc_2}(\g_{\arc_2},t) \coloneqq \max\Set{2-\frac{1}{2}\int_0^{t-1}\min\set{g_{\arc_2}(t'),2}\di\sigma(t'),1}$.}
    \label{fig:CE-NonMonotonicity-Network}
\end{figure}

\begin{example}\label{exa: CE_SystemOptimumWithoutMonotonicity}
    Consider the 2-commodity network depicted in \Cref{fig:CE-NonMonotonicity-Network}. All edges except for edge~$\arc_2$ have a constant (flow and time independent) travel time of~$1$ whereas the travel time of edge~$\arc_2$ is defined by $\trav_{\arc_2}(\g_{\arc_2},t) \coloneqq \max\Set{2-\frac{1}{2}\int_0^{t-1}\min\set{g_{\arc_2}(t'),2}\di\sigma(t'),1}$, i.e.\ the travel time starts at~$2$ and then decreases with increasing cumulative inflow until it reaches its minimal travel time of~$1$. The network inflow rates are $r_1 = 6\cdot\CharF[[5,6]]$ and $r_2=\CharF[[0,1]]$. 
    The only (and, therefore, best possible) implementable flow in this network is the flow in which both commodities send all flow along the direct walk $(\arc_1,\arc_2)$, resulting in a total travel time of $6\cdot (1+1.5) + 1\cdot (2+1) = 18$. To see why this is the only implementable flow, observe that any other flow has to send flow along the backwards edge~$\arc_3$. As this creates a flow-carrying $\dest$-cycle, any such flow is not implementable by \Cref{thm: CombiChara}.
    A (non-implementable) flow with strictly lower total travel time is given by the flow in which the particles of the second commodity take a detour along the cycle $(\arc_2,\arc_3)$ while the particles of the first commodity take the direct walk $(\arc_1,\arc_2)$,  resulting in a total travel time of $6\cdot (1+1) + 1\cdot (1+2+1+1.5) = 17.5$   
    (see \Cref{fig:CE-NonMonotonicity-Implementable,fig:CE-NonMonotonicity-Optimal} for graphical depictions of the two flows). 
 
    \begin{figure}
        \centering
        \BigPicture[0]{
        \def\CENMfigversion{implementable}
        \begin{adjustbox}{max width=\textwidth}
            \newcommand{\drawNetworkSkeleton}[1][]{
    \coordinate(s1)at(0,0);
    \coordinate(s2)at(3,0);
    \coordinate(t2)at(6,0);
    
    \node[namedVertexW,inner sep=.15](temp-s1)at(s1){$\source_{\color{colA}1/\color{colB}2}$};
    \node[namedVertexW,inner sep=4](temp-s2)at(s2){$v$};
    \node[namedVertexW,inner sep=.15](temp-t2)at(t2){$\dest_{{\color{colA}1}/\color{colB}2}$};
    
    \draw[edge](temp-s1) --node[below]{\ifthenelse{\equal{#1}{desc}}{$e_1$}{}}(temp-s2);
    \draw[edge](temp-s2) --node[below]{\ifthenelse{\equal{#1}{desc}}{$e_2$}{}}(temp-t2);
    \draw[edge](temp-t2) to[bend right=70]node[above]{\ifthenelse{\equal{#1}{desc}}{$e_3$}{}}(temp-s2);
}

\newcommand{\drawNetwork}[1][]{
    \ifthenelse{\equal{#1}{desc}}{}{
        \node at($(s2)!.5!(t2)+(0,1.8)$){$t=#1$:};
    }
    
    \node[namedVertexF,inner sep=.15](s1)at(s1){$\source_{{\color{colA}1}/\color{colB}2}$};
    \node[namedVertexF,inner sep=4](s2)at(s2){$v$};
    \node[namedVertexF,inner sep=.15](t2)at(t2){$\dest_{{\color{colA}1}/\color{colB}2}$};

    \ifthenelse{\equal{#1}{desc} \OR \(\lengthtest{#1 pt > 4.9pt} \AND \lengthtest{#1 pt < 6.1pt}\)}{
        \draw[line width=5,<-,fcolAr](s1) -- +(0,-1)node[anchor=north,color=colA]{$r_1=6\cdot\CharF[[5,6]]$};
    }{}
    \ifthenelse{\equal{#1}{desc} \OR \lengthtest{#1 pt < 1.1pt}}{
        \draw[line width=5,<-,fcolBr](s1) -- +(0,1)node[anchor=south,color=colB]{$r_2=\CharF[[0,1]]$};
    }{}

    \draw[edge,dotted,opacity=.3](s1) --(s2);
    \draw[edge,dotted,opacity=.3](s2) --(t2);
    \draw[edge,dotted,opacity=.3](t2) to[bend right=70](s2);
}

%%%%%%%%%%%%%%%%%%%%%%%%%%%%%%%%%%%%%%%%%%%%%%%%%%%%%%%%%%%%%%%%%%%%%%%%%%%%%%%%%%%%%%%%%%%%%%%%%%%%%%%%%%%
% network

\ifthenelse{\equal{\CENMfigversion}{network}}{
\begin{tikzpicture}
    \drawNetworkSkeleton[desc]
    
    \drawNetwork[desc]
\end{tikzpicture}
}

%%%%%%%%%%%%%%%%%%%%%%%%%%%%%%%%%%%%%%%%%%%%%%%%%%%%%%%%%%%%%%%%%%%%%%%%%%%%%%%%%%%%%%%%%%%%%%%%%%%%%%%%%%%
% optimal flow 

\ifthenelse{\equal{\CENMfigversion}{optimal}}{
\begin{tikzpicture}
    \begin{scope}[xshift=0cm, yshift=0cm]%0
        \drawNetworkSkeleton
    
        \drawNetwork[0]
    \end{scope}

    \begin{scope}[xshift=9cm, yshift=0cm]%0
        \drawNetworkSkeleton

        \draw[line width=4pt,fcolBr](s1)--(s2);
    
        \drawNetwork[1]
    \end{scope}

    \begin{scope}[xshift=0cm, yshift=-4cm]%1
        \drawNetworkSkeleton

        \draw[line width=4pt,fcolBr](s2)--($(s2)!.5!(t2)$);
    
        \drawNetwork[2]
    \end{scope}

    \begin{scope}[xshift=9cm, yshift=-4cm]%2
        \drawNetworkSkeleton

        \draw[line width=4pt,fcolBr]($(s2)!.5!(t2)$)--(t2);
    
        \drawNetwork[3]
    \end{scope}

    \begin{scope}[xshift=0cm, yshift=-8cm]%3
        \drawNetworkSkeleton
    
        \drawNetwork[4]

        \draw[line width=4pt,fcolB](t2)to[bend right=70](s2);
    \end{scope}

    \begin{scope}[xshift=9cm, yshift=-8cm]%4
        \drawNetworkSkeleton

        \draw[line width=4pt,fcolBr](s2)--($(s2)!.67!(t2)$);
    
        \drawNetwork[5]
    \end{scope}

    \begin{scope}[xshift=0cm, yshift=-12cm]%5
        \drawNetworkSkeleton

        \draw[line width=4pt,fcolBr]($(s2)!.67!(t2)$)--(t2);
        \draw[line width=16pt,fcolAr](s1)--(s2);
    
        \drawNetwork[6]
    \end{scope}

    \begin{scope}[xshift=9cm, yshift=-12cm]%6
        \drawNetworkSkeleton

        \draw[line width=16pt,fcolAr](s2)--(t2);
    
        \drawNetwork[7]
    \end{scope}

    % \begin{scope}[xshift=9cm, yshift=-12cm]%7
    %     \drawNetworkSkeleton
    
    %     \drawNetwork[7]
    % \end{scope}
\end{tikzpicture}
}{}

%%%%%%%%%%%%%%%%%%%%%%%%%%%%%%%%%%%%%%%%%%%%%%%%%%%%%%%%%%%%%%%%%%%%%%%%%%%%%%%%%%%%%%%%%%%%%%%%%%%%%%%%%%%%%%%%%%%
% Best implementable flow

\ifthenelse{\equal{\CENMfigversion}{implementable}}{
\begin{tikzpicture}
    \begin{scope}[xshift=0cm, yshift=0cm]%0
        \drawNetworkSkeleton
    
        \drawNetwork[0]
    \end{scope}

    \begin{scope}[xshift=9cm, yshift=0cm]%0
        \drawNetworkSkeleton

        \draw[line width=4pt,fcolBr](s1)--(s2);
    
        \drawNetwork[1]
    \end{scope}

    \begin{scope}[xshift=0cm, yshift=-4cm]%1
        \drawNetworkSkeleton

        \draw[line width=4pt,fcolBr](s2)--($(s2)!.5!(t2)$);
    
        \drawNetwork[2]
    \end{scope}

    \begin{scope}[xshift=9cm, yshift=-4cm]%2
        \drawNetworkSkeleton

        \draw[line width=4pt,fcolBr]($(s2)!.5!(t2)$)--(t2);
    
        \drawNetwork[3]
    \end{scope}

    \begin{scope}[xshift=0cm, yshift=-8cm]%3
        \drawNetworkSkeleton
    
        \drawNetwork[4]
    \end{scope}

    \begin{scope}[xshift=9cm, yshift=-8cm]%4
        \drawNetworkSkeleton

        \drawNetwork[5]
    \end{scope}

    \begin{scope}[xshift=0cm, yshift=-12cm]%5
        \drawNetworkSkeleton

        \draw[line width=16pt,fcolAr](s1)--(s2);
    
        \drawNetwork[6]
    \end{scope}

    \begin{scope}[xshift=9cm, yshift=-12cm]%6
        \drawNetworkSkeleton

        \draw[line width=16pt,fcolAr](s2)--($(s2)!.67!(t2)$);
    
        \drawNetwork[7]
    \end{scope}

    \begin{scope}[xshift=0cm, yshift=-16cm]%6
        \drawNetworkSkeleton

        \draw[line width=16pt,fcolAr]($(s2)!.67!(t2)$)--(t2);
    
        \drawNetwork[8]
    \end{scope}

    \begin{scope}[xshift=9cm, yshift=-16cm]%6
        \drawNetworkSkeleton

        \drawNetwork[9]
    \end{scope}
\end{tikzpicture}
}{}
        \end{adjustbox}
        }
        \caption{The 2-commodity network considered in \Cref{exa: CE_SystemOptimumWithoutMonotonicity}. The flow depicted here is the only implementable flow and has a total travel time of~$18$.}
        \label{fig:CE-NonMonotonicity-Implementable}
    \end{figure}
 
    \begin{figure}
        \centering
        \BigPicture[0]{
        \def\CENMfigversion{optimal}
        \begin{adjustbox}{max width=\textwidth}
            \newcommand{\drawNetworkSkeleton}[1][]{
    \coordinate(s1)at(0,0);
    \coordinate(s2)at(3,0);
    \coordinate(t2)at(6,0);
    
    \node[namedVertexW,inner sep=.15](temp-s1)at(s1){$\source_{\color{colA}1/\color{colB}2}$};
    \node[namedVertexW,inner sep=4](temp-s2)at(s2){$v$};
    \node[namedVertexW,inner sep=.15](temp-t2)at(t2){$\dest_{{\color{colA}1}/\color{colB}2}$};
    
    \draw[edge](temp-s1) --node[below]{\ifthenelse{\equal{#1}{desc}}{$e_1$}{}}(temp-s2);
    \draw[edge](temp-s2) --node[below]{\ifthenelse{\equal{#1}{desc}}{$e_2$}{}}(temp-t2);
    \draw[edge](temp-t2) to[bend right=70]node[above]{\ifthenelse{\equal{#1}{desc}}{$e_3$}{}}(temp-s2);
}

\newcommand{\drawNetwork}[1][]{
    \ifthenelse{\equal{#1}{desc}}{}{
        \node at($(s2)!.5!(t2)+(0,1.8)$){$t=#1$:};
    }
    
    \node[namedVertexF,inner sep=.15](s1)at(s1){$\source_{{\color{colA}1}/\color{colB}2}$};
    \node[namedVertexF,inner sep=4](s2)at(s2){$v$};
    \node[namedVertexF,inner sep=.15](t2)at(t2){$\dest_{{\color{colA}1}/\color{colB}2}$};

    \ifthenelse{\equal{#1}{desc} \OR \(\lengthtest{#1 pt > 4.9pt} \AND \lengthtest{#1 pt < 6.1pt}\)}{
        \draw[line width=5,<-,fcolAr](s1) -- +(0,-1)node[anchor=north,color=colA]{$r_1=6\cdot\CharF[[5,6]]$};
    }{}
    \ifthenelse{\equal{#1}{desc} \OR \lengthtest{#1 pt < 1.1pt}}{
        \draw[line width=5,<-,fcolBr](s1) -- +(0,1)node[anchor=south,color=colB]{$r_2=\CharF[[0,1]]$};
    }{}

    \draw[edge,dotted,opacity=.3](s1) --(s2);
    \draw[edge,dotted,opacity=.3](s2) --(t2);
    \draw[edge,dotted,opacity=.3](t2) to[bend right=70](s2);
}

%%%%%%%%%%%%%%%%%%%%%%%%%%%%%%%%%%%%%%%%%%%%%%%%%%%%%%%%%%%%%%%%%%%%%%%%%%%%%%%%%%%%%%%%%%%%%%%%%%%%%%%%%%%
% network

\ifthenelse{\equal{\CENMfigversion}{network}}{
\begin{tikzpicture}
    \drawNetworkSkeleton[desc]
    
    \drawNetwork[desc]
\end{tikzpicture}
}

%%%%%%%%%%%%%%%%%%%%%%%%%%%%%%%%%%%%%%%%%%%%%%%%%%%%%%%%%%%%%%%%%%%%%%%%%%%%%%%%%%%%%%%%%%%%%%%%%%%%%%%%%%%
% optimal flow 

\ifthenelse{\equal{\CENMfigversion}{optimal}}{
\begin{tikzpicture}
    \begin{scope}[xshift=0cm, yshift=0cm]%0
        \drawNetworkSkeleton
    
        \drawNetwork[0]
    \end{scope}

    \begin{scope}[xshift=9cm, yshift=0cm]%0
        \drawNetworkSkeleton

        \draw[line width=4pt,fcolBr](s1)--(s2);
    
        \drawNetwork[1]
    \end{scope}

    \begin{scope}[xshift=0cm, yshift=-4cm]%1
        \drawNetworkSkeleton

        \draw[line width=4pt,fcolBr](s2)--($(s2)!.5!(t2)$);
    
        \drawNetwork[2]
    \end{scope}

    \begin{scope}[xshift=9cm, yshift=-4cm]%2
        \drawNetworkSkeleton

        \draw[line width=4pt,fcolBr]($(s2)!.5!(t2)$)--(t2);
    
        \drawNetwork[3]
    \end{scope}

    \begin{scope}[xshift=0cm, yshift=-8cm]%3
        \drawNetworkSkeleton
    
        \drawNetwork[4]

        \draw[line width=4pt,fcolB](t2)to[bend right=70](s2);
    \end{scope}

    \begin{scope}[xshift=9cm, yshift=-8cm]%4
        \drawNetworkSkeleton

        \draw[line width=4pt,fcolBr](s2)--($(s2)!.67!(t2)$);
    
        \drawNetwork[5]
    \end{scope}

    \begin{scope}[xshift=0cm, yshift=-12cm]%5
        \drawNetworkSkeleton

        \draw[line width=4pt,fcolBr]($(s2)!.67!(t2)$)--(t2);
        \draw[line width=16pt,fcolAr](s1)--(s2);
    
        \drawNetwork[6]
    \end{scope}

    \begin{scope}[xshift=9cm, yshift=-12cm]%6
        \drawNetworkSkeleton

        \draw[line width=16pt,fcolAr](s2)--(t2);
    
        \drawNetwork[7]
    \end{scope}

    % \begin{scope}[xshift=9cm, yshift=-12cm]%7
    %     \drawNetworkSkeleton
    
    %     \drawNetwork[7]
    % \end{scope}
\end{tikzpicture}
}{}

%%%%%%%%%%%%%%%%%%%%%%%%%%%%%%%%%%%%%%%%%%%%%%%%%%%%%%%%%%%%%%%%%%%%%%%%%%%%%%%%%%%%%%%%%%%%%%%%%%%%%%%%%%%%%%%%%%%
% Best implementable flow

\ifthenelse{\equal{\CENMfigversion}{implementable}}{
\begin{tikzpicture}
    \begin{scope}[xshift=0cm, yshift=0cm]%0
        \drawNetworkSkeleton
    
        \drawNetwork[0]
    \end{scope}

    \begin{scope}[xshift=9cm, yshift=0cm]%0
        \drawNetworkSkeleton

        \draw[line width=4pt,fcolBr](s1)--(s2);
    
        \drawNetwork[1]
    \end{scope}

    \begin{scope}[xshift=0cm, yshift=-4cm]%1
        \drawNetworkSkeleton

        \draw[line width=4pt,fcolBr](s2)--($(s2)!.5!(t2)$);
    
        \drawNetwork[2]
    \end{scope}

    \begin{scope}[xshift=9cm, yshift=-4cm]%2
        \drawNetworkSkeleton

        \draw[line width=4pt,fcolBr]($(s2)!.5!(t2)$)--(t2);
    
        \drawNetwork[3]
    \end{scope}

    \begin{scope}[xshift=0cm, yshift=-8cm]%3
        \drawNetworkSkeleton
    
        \drawNetwork[4]
    \end{scope}

    \begin{scope}[xshift=9cm, yshift=-8cm]%4
        \drawNetworkSkeleton

        \drawNetwork[5]
    \end{scope}

    \begin{scope}[xshift=0cm, yshift=-12cm]%5
        \drawNetworkSkeleton

        \draw[line width=16pt,fcolAr](s1)--(s2);
    
        \drawNetwork[6]
    \end{scope}

    \begin{scope}[xshift=9cm, yshift=-12cm]%6
        \drawNetworkSkeleton

        \draw[line width=16pt,fcolAr](s2)--($(s2)!.67!(t2)$);
    
        \drawNetwork[7]
    \end{scope}

    \begin{scope}[xshift=0cm, yshift=-16cm]%6
        \drawNetworkSkeleton

        \draw[line width=16pt,fcolAr]($(s2)!.67!(t2)$)--(t2);
    
        \drawNetwork[8]
    \end{scope}

    \begin{scope}[xshift=9cm, yshift=-16cm]%6
        \drawNetworkSkeleton

        \drawNetwork[9]
    \end{scope}
\end{tikzpicture}
}{}
        \end{adjustbox}
        }
        \caption{The 2-commodity network considered in \Cref{exa: CE_SystemOptimumWithoutMonotonicity}. The flow depicted here has a total travel time of $17.5$ which is strictly less than the best possible implementable flow depicted in \Cref{fig:CE-NonMonotonicity-Implementable}.}
        \label{fig:CE-NonMonotonicity-Optimal}
    \end{figure}
 
\end{example}

As the above example illustrates, making sure that system optima are implementable requires additional assumptions about traversal times. We will now outline these assumptions and give a brief intuition of what they entail. 

\begin{assumption}\label{ass: TravelMono}
    Assume that the traversal times satisfy the following properties: 
  \begin{enumerate}[label = \roman*)]
    \item for any $\arc \in \GA$ and corresponding edge inflows $\g_\arc,\tilde{\g}_\arc \in L_+(\hori)$ with 
    %$\int_{[0,t]} \tilde{\g}_\arc \di \sigma \leq \int_{[0,t]}  {\g}_\arc \di \sigma$ 
    $\tilde\G_\arc(t) \leq {\G}_\arc(t)$ for all $t \in \hori$, we have 
    %$\int_{\exit_\arc(\tilde{\g}_\arc,\cdot)^{-1}([0,t])} \tilde{\g}_\arc \di \sigma \leq \int_{\exit_\arc({\g}_\arc,\cdot)^{-1}([0,t])}  {\g}_\arc \di \sigma$ 
    $\tilde\G^-_\arc(t) \leq {\G}^-_\arc(t)$ for all $t \in \hori$.  
    \label[thmpart]{ass: TravelMono: Monotonicity} 
    \item for any $\arc \in \GA$ and corresponding edge inflows $\g_\arc,\tilde{\g}_\arc \in L_+(\hori)$ with $\tilde{\g}_\arc\leq  {\g}_\arc$, we have $\trav_{\arc}(\tilde{\g}_\arc,\cdot) \leq\trav_{\arc}({\g}_\arc,\cdot)$.  
    \label[thmpart]{ass: TravelMono: TravelMonotonicity}
\end{enumerate}  
\end{assumption}
%Here, condition~\ref{ass: TravelMono: tmin} ensures that there is some (flow and time-independent) lower bound on the travel times.  
Here, \labelcref{ass: TravelMono: Monotonicity}  states that larger cumulative inflow (at all times) has to lead to larger cumulative outflow (at all times) while \labelcref{ass: TravelMono: TravelMonotonicity} requires that larger inflow rates (at almost all times) lead to larger travel times (at all times).
Note that both conditions are fulfilled by the Vickrey queuing model (\cite[Corollary~3.23 and Corollary~3.19b)]{GrafThesis}) while neither is satisfied by the travel time on edge~$\arc_2$ in the network from \Cref{exa: CE_SystemOptimumWithoutMonotonicity}.  %and linear edge delay model\jscom{linear edge delay does fulfill them, right?}.

Under these additional monotonicity assumptions, we are now able to derive our positive result that system optima in single-source, single-\sink networks are implementable. To prove this, we will first show that (under the above assumptions) a system optimum never sends flow over outgoing edges from the \sink[.] We actually show this property for the more general case of \emph{multi-source}, single-\sink networks. 

\begin{theorem}\label{thm: SsSysOptNoDOutflow}
    Consider a multi-source, single-\sink network with  travel times fulfilling \Cref{ass: TravelMono}.
    Then,  any  \sysop flow $\gop$ does not send flow along outgoing edges from $\dest$, i.e.~$\gop_\arc = 0,\arc \in \edgesFrom{\dest}$.
\end{theorem}

From this, our main result then follows by \Tollref{thm: CombiChara}:

\begin{theorem}\label{cor: SsSysOptImpl}
    In the single-source, single-\sink case with  travel times fulfilling \Cref{ass: TravelMono}, 
   any  \sysop flow is implementable. 
\end{theorem} 

For the proof of the above \namecref{thm: SsSysOptNoDOutflow}s, we require the additional concept of flows with and without waiting.
More precisely, we say that  $\g \in L_+(\hori)^\GA$ is an \emph{$(\source_i,\dest)$-flow without waiting} if $\opp_\dest \G$ is non-increasing and  for all $v \neq \dest$ and $t \in \hori$ the following equality holds
\begin{align}\label{eq: DefOutflow}
    \opp_v\G(t) = \int_0^t \sum_{i:s_i = v}\inflow_i \di \sigma.  
\end{align}
In particular, for $v \notin \{s_i\}_{i \in I}$, the right hand side is always $0$ and $\g$ is  said to fulfill flow conservation at $v$.
In contrast, we say that $\g$ is an \emph{$(\source_i,\dest)$-flow with (potential) waiting}, if for all $v \neq\dest$ and $t \in \hori$ the following \emph{in}equality holds
\begin{align*}
    \opp_v\G(t) \leq \int_0^t \sum_{i:s_i = v}\inflow_i \di \sigma  
\end{align*}
as well as equality for $t = t_f$.
%Now we say that $\g$ is an $(\source_i,\dest)$-flow \emph{with} potential waiting, if 
%\eqref{eq: DefOutflow} for $v \neq \dest$ is fulfilled for $t = t_f$ but otherwise only  with an inequality $\leq$. 
For such flows, the value $\int_0^t \sum_{i:s_i = v}\inflow_i \di\sigma - \opp_v\G(t)$ can be interpreted as the amount of flow particles waiting at node $v$ at time $t$. 
Note that for flows with potential waiting, we also do not impose the property  of $\opp_\dest \G$ being non-increasing anymore. 
Remark also that it is straight forward to verify that induced edge flows $\g \in \ell(\wir)$ are $(\source_i,\dest)$-flows without waiting.

\iffalse 
Additionally, we also require two types of standard (Borel-)measures  which we introduce in the following: 
Firstly, for any measurable function $\g:\hori\to \R$, we denote by $\g \cdot \sigma$ the measure on $\mathcal{B}(\hori)$
given by $\g\cdot\sigma(\mathfrak T) := \int_{\mathfrak T} \g\di\sigma,\mathfrak T \in \mathcal{B}(\hori)$. Secondly, for any measurable function $A: \hori\to\hori$ and any measure $\mu$ on $\mathcal{B}(\hori)$, we denote by $\mu\circ A^{-1}$ the image measure of $\mu$ under $A$ which is defined by $\mu\circ A^{-1}(\mathfrak T) := \mu(A^{-1}(\mathfrak T))$. The latter is again a measure on $\mathcal{B}(\hori)$. 
Moreover, we will use the notation of $\mu\leq \mu'$ for two measures, meaning that $\mu(\mathfrak T) \leq \mu'(\mathfrak T)$ for all $\mathfrak T \in \mathcal{B}(\hori)$.
%If we speak in this proof of convergence of measures, we mean the convergence \wrt the total variation norm. 
We refer to~\cite{Bogachev2007I} for a comprehensive overview of measure theory. 
Finally, we will also make use of several structural insights derived in \cite{GHS24FD,GHS24TollSODA} 
concerning the structure of network loadings and the implementability of flows. 
For the convenience of the reader, we collect the relevant concepts and statements in \Cref{sec:AppuBasedNetworkLoadings,sec:AppTolls}.  
\fi 

With these preliminaries at hand, we  start by showing  the following key ingredient for the proof of \Cref{thm: SsSysOptNoDOutflow}:  Under \Cref{ass: TravelMono}, any edge flow with waiting can be turned into an edge flow without waiting  and with at most the same total travel time. 
In order to prove this, we divide the time horizon~$\hori'$ into $K$ intervals of length $\tmin$ and adjust $\g$ inductively on them. More precisely, in the $j$-th induction step, 
we adjust $\g$ on $[(j-1)\cdot \tmin,j\cdot \tmin]$
by pushing waiting particles at nodes 
into suitable outgoing edges. 
Due to the monotonicity in \ref{ass: TravelMono: Monotonicity} and travel times being lower bounded by $\tmin$, this results in a flow $\tilde{\g}^{j}$ with potential waiting for which particles are not waiting during $[0,j\cdot\tmin]$. Additionally, the resulting flow has at most the same total travel time as $\g$ and hence, 
 $\tilde{\g}^K$ represents the desired flow.

\begin{lemma}\label{lem: WlogNoWaiting} 
    Consider a multi-source, single-\sink network with  travel times fulfilling \Cref{ass: TravelMono} and an arbitrary $(\source_i,\dest)$-flow $\g\in L_+(\hori)^\GA$ supported on $\hori'$ with potential waiting at nodes. 
    Then, there exists 
    an $(\source_i,\dest)$-flow  $\tilde{\g}\in L_+(\hori)^\GA$ supported on $\hori'$ without waiting and fulfilling 
    $\opp_\dest \tilde{\G}(t) \leq \opp_\dest {\G}(t)$ for all $t \in \hori$. 
\end{lemma} 
\begin{proof}
We introduce a super source~$\ssource$ and \sink $\sdest$ and extend   $\GV,\GA$ via   $\GAS:= \GA \cup \{(\ssource,\source_i)\}_{i \in I}\cup\{(\dest,\sdest)\}$ as well as $\GVS:= \GV \cup \{\ssource,\sdest\}$. The new edges have constant travel time of zero, i.e.\ $\trav_\arc\equiv 0$ for $\arc \in \GAS\setminus\GA$, and, hence, also fulfill the properties stated in \Cref{ass: TravelMono}. We extend $\g$ to a vector $\g \in L_+(\hori)^\GAS$  via ${\g}_{(\ssource,\source_i)}  =   \inflow_i, i\in I$ as well as $\g_{(\dest,\sdest)} = -\inflow_\dest$. 
Here, $\inflow_\dest$ denotes the net outflow rate of~$\g$ at~$\dest$, i.e.~the unique function fulfilling $\int_0^t\inflow_\dest \di\sigma = \opp_\dest \G (t), t\in \hori$. Note that the latter exists by \Cref{ass: TravelGeneral: Inflow}. 

    We assume \wlg that  the end point of $\hori'$ is a multiple of $\tmin$, i.e.~there exists a $K \in \N$ with $t_f' = K\cdot \tmin$ (otherwise, reduce $\tmin$ sufficiently). As described before, the following claim can be proven via an induction on~$j$ (cf.~\Cref{sec: WlogNoWaiting}): 
    \begin{claim}\label{claim: WlogNoWaiting}
        For all $j \in \{0,1,\ldots,K\}$, there exists a flow $\tilde{\g}^j \in L_+(\hori)^\GAS$ with
    \begin{enumerate}[label=(\arabic*)] 
        %\item   nonpositive cumulative flow balance   $\opp_v\tilde{\G}^{j}(t)\leq 0,t \in \hori$ for all $v \in \GV$ and $t \in \hori$ with equality at $t  =t_f$, \label{indu: Feasible}
        %\item   cumulative flow balance $\opp_v\tilde{\G}^{j}(t) = 0 $ for all $v \in \GV$  and all $t\in [0,j\cdot \tmin]$,\label{indu: NoWai}
        \item $\opp_v\tilde{\G}^{j}(t)\leq 0$ for all $t \in \hori$ and all $v \in \GV$, \label{indu2: Feasible}
        \item $\opp_v\tilde{\G}^{j}(t) = 0 $ for all $t\in [0,j\cdot \tmin] \cup \{t_f\}$ and all $v \in \GV$,\label{indu2: NoWai}
         \item $\opp_\sdest \tilde{\G}^{j}(t) \leq \opp_\sdest {\G}(t)$ for all $ t \in \hori$,  \label{indu2: SmallerObj} 
         \item $ \tilde{\g}^{j}_{(\ssource,\source_i)} =   \inflow_i$ for all $i\in I$, and\label{indu2: NetworkInflow} 
         \item $\tilde{\g}^{j}$ is supported on $\hori'=[0,t_f']$.\label{ind2u: Support} 
    \end{enumerate}
    \end{claim}
   % We will now prove the following claim via induction on~$j$: \jscom{doi we leave this in?}
With this claim at hand,  we now observe that $\tilde\g^K$ is an $(\ssource,\sdest)$-flow in the extended network without waiting at nodes in $\GV$. This is, because due to \ref{indu2: NoWai} we have $\opp_v\tilde{\G}^K(t) = 0$ for both $t=K\cdot \tmin = t'_f$ and $t=t_f$ and, since $\tilde\g^K$ is supported on $[0,t'_f]$, the same holds for all times inbetween. Together with \ref{indu2: NoWai} we, therefore, have equality (i.e.\ no waiting) on all of~$\hori$ for all $v \in \GV$. Then, choosing $\tilde{\g}$ as the flow corresponding to $\tilde{\g}^K \in L_+(\hori)^\GAS$ in the original network, gives us an $(\source_i,\dest)$-flow without waiting which, additionally, satisfies $\opp_\dest\tilde{\G} = \opp_\sdest\tilde{\G}^K \leq \opp_\sdest {\G}$.
% Hence, the induction claim is shown. The statement of the \namecref{lem: WlogNoWaiting} follows now by setting   $\tilde{\g}$ to the flow corresponding to $\tilde{\g}^K \in L_+(\hori)^\GAS$ in the original network.  
% Remark that  we have the equality  $\opp_\dest\tilde{\G}=\opp_\sdest\tilde{\G}^K$ as well as that 
%  the property stated in \Cref{indu: NoWai} for $\tilde{\g}^K$ does also hold for $t\in \hori \setminus \hori' = (t_f',t_f]$. The latter follows by the equality stated in \Cref{indu: Feasible} for $t = t_f$, the equality stated in \Cref{indu: NoWai} for $t = t_f'=K\cdot \tmin$ and the fact that $\tilde{\g}^K$ is supported on $\hori'$. That is, $\tilde{\g}^K$ is a $(\ssource,\sdest)$-flow without waiting at nodes in $\GV$ in the extended network and thus its corresponding flow in the original network is a $(\source_i,\dest)$-flow without waiting. 
\end{proof}

With the above \namecref{lem: WlogNoWaiting} at hand we can now prove \Cref{thm: SsSysOptNoDOutflow} as follows: Let $\g^*$ be a \sysop flow and assume, by way of contradiction, that there exists an edge $\arc^* \in \edgesFrom{\dest}$ with $\g^*_{\arc^*} \neq 0$. Due to flow conservation, this implies that there must be a whole flow-carrying $\dest$-cycle~$c$ under~$\g^*$. We would now like to remove the flow on~$c$ from~$\g^*$ in order to construct a flow in $\ell(\wir)$ with strictly lower travel time. However, just reducing the edge inflow rates this way results in a flow $\g^\Delta$ that, in general, does not fulfill flow conservation aynmore as the travel times on the edges contained in~$c$ may change. 
 Yet, by the monotonicity stated in \Cref{ass: TravelMono: TravelMonotonicity}, we can show that the cumulative flow balance of $\g^\Delta$ is always nonpositive, that is, $\tilde{g}$ can be interpreted as flow in which particles may wait at nodes.  
Moreover, we can show that $\g^\Delta$ has a strictly lower travel time than $\g^*$. 
Thus, applying \Cref{lem: WlogNoWaiting} to $\g^\Delta$ yields   an edge flow $\tilde{\g}$
fulfilling flow conservation  and having strictly lower travel time than $\g^*$. 
 The proof is then finished by applying the flow decomposition theorem \FDref{thm: FlowDecomp} to $\tilde{\g}$, yielding a corresponding walk inflow rate in $\wir$, demonstrating that $\tilde{\g} \in \ell(\wir)$. 
 Since $\tilde{\g}$ has a strictly lower travel time than $\gop$, the existence of $\tilde{\g}$ contradicts the  system optimality of $\g^*$.

\begin{proof}[Proof of \Cref{thm: SsSysOptNoDOutflow}] 
    Let $\gop$ be  supported on $\hori'$  and \sysop with $h^* \in \wir$ inducing $\gop$ (i.e.\ $\gop$ is an optimal solution to~\eqref{opt: System}). 
    Assume for the sake of a contradiction that $\gop$ sends flow into at least one edge leaving the common \sink $\dest$, i.e.~there exists $\arc^*\in \edgesFrom{\dest}$ with $\gop_{\arc^*} >0$.

In the same way as in the proof of \Cref{lem: WlogNoWaiting} we extend the given network by introducing a super source~$\ssource$ and \sink $\sdest$ and extend   $\GV,\GA$ via   $\GAS:= \GA \cup \{(\ssource,\source_i)\}_{i \in I}\cup\{(\dest,\sdest)\}$ as well as $\GVS:= \GV \cup \{\ssource,\sdest\}$. The new edges have constant travel time of zero $\trav_\arc\equiv 0,\arc \in \GAS\setminus\GA$. Hence, they also fulfill the conditions stated in \Cref{ass: TravelMono}. 
Clearly, there is a bijection between the set of original \stwalk[\source_i]s $\Routes$ and the set of $\ssource$,$\sdest$-walks $\sRoutes$ in the extended network by sending $\wa \in \Routes_i$ to $\swa:=((\ssource,\source_i),\wa,(\dest,\sdest))$.  
Similarly, we have a bijection $\wir \to \swir, h \mapsto \tilde{h}$ between the possible walk inflow rates in the original network and the extended network via $\tilde{h}_{\swa} = h_\wa,\wa \in \Routes$,
which preserves the total travel time as the travel time along the new edges is zero, i.e.~$\dup{\trav(\ell(h),\cdot)}{\ell(h)} = \dup{\trav(\ell(\tilde{h}),\cdot)}{\ell(\tilde{h})}$
Corresponding to the bijection between walk flows, we also have a bijection between $(\source_i,\dest)$-flows with (and without) waiting in the original network and $(\ssource,\sdest)$-flows with  (and without) waiting at nodes contained in $\GV$ via setting ${\g}_{(\ssource,\source_i)}  =   \inflow_i, i\in I$ as well as $\g_{(\dest,\sdest)} = -\inflow_\dest$. 
Here, $\inflow_\dest$ denotes the net outflow rate at $\dest$ of $\g$ which exists by the required  condition in \Cref{ass: TravelGeneral: Inflow} while $(\ssource,\sdest)$-flows are analogously defined to $(\source_i,\dest)$-flows with only one source $\ssource$ and corresponding inflow rate $\inflow_{\ssource}:= \sum_{i \in I} \inflow_i$. 
From now on, we will see $h^*$ and $\gop$ as flows in the extended network via the above transformation, i.e.~as vectors in  $L_+(\hori)^{\sRoutes}$ and $L_+(\hori)^\GAS$, respectively.  

We will now construct a walk flow $\tilde{h} \in \swir$ with  corresponding edge flow $\tilde{\g}$ with a  strictly  lower total travel time than $\gop$ and  which  is still supported on $[0,t_f']$. As the 
  above bijection preserves the total travel time, this then yields a contradiction to $\gop$ being \sysop[.] 

As our first step we aim to show that
particles traveling along a $\dest$-cycle before entering the edge towards the super \sink can be rerouted by directly taking the latter edge. Because of our monotonicity assumptions, the resulting flow is then an  $(\ssource,\sdest)$-flow $\g^\Delta \in L_+(\hori)^\GAS$   in which particles may wait at nodes $v \in \GV$, i.e.~for which $\opp_v\G^\Delta(t)\leq 0$ for all $t \in \hori$ and $v \in \GV$.

We make this mathematically precise in the following: There exists, by \Cref{lem: Relations:h>0u>0} and $\g^*_{\arc^*}>0$, a walk $\wa \in \sRoutes$, $j^* \leq \abs{\wa}-1$ with $\wa[j^*] = \arc^*$ and $h^*_\wa(t)>0$ for almost all $t$ in a non-null set $\mathfrak T$. 
The walk $c:=(\wa[j^*],\ldots,\wa[\abs{\wa}-1])$ is then a $\dest$-cycle and we define  
the flow $\g^c \in L_+(\hori)^\GAS$ representing the flow on the cycle $c$ induced by $h^*_\wa$ (under the fixed travel times induced by~$u \coloneqq \gop$) via  
    \begin{align*}
        \g^c_\arc \coloneqq \begin{cases}
            \sum_{j\geq j^*:\wa[j] = \arc}\ell^u_{\wa,j}(h^*_\wa), &\text{for } \arc \in \GA \\
            0,                                                      &\text{else.}
        \end{cases}
    \end{align*}
Next, we construct a flow with waiting $\g^\Delta \in L_+(\hori)^\GAS$ from $\gop$ by sending the flow entering the $\dest$-cycle~$c$ directly towards the super-\sink instead and reducing the flow on all edges of~$c$ accordingly:
    \begin{align*}
        \g^\Delta_\arc \coloneqq \gop_\arc - \g^c_\arc +1_{(\dest,\sdest)}(\arc) \cdot(-\ell^u_{\wa,\abs{\wa}}(h^*_\wa)+\ell^u_{\wa,j^*}(h^*_\wa)) = \begin{cases}
            \gop_\arc-\ell^u_{\wa,\abs{\wa}}(h^*_\wa)+\ell^u_{\wa,j^*}(h^*_\wa), &\text{if } \arc = (\dest,\sdest) \\
            \gop_\arc-\g^c_\arc,                                                   &\text{else.}
        \end{cases}
    \end{align*}
The flow $\g^\Delta$ is a flow with waiting and strictly less time spend in the network in total. A key ingredient for the proof (see \Cref{sec: ProofsSsSysOptNoDOutflow}) is the monotonicity of the travel times stated in \Cref{ass: TravelMono}. 
\begin{claim} \label{claim: FlowCarryingDestCycle}
    The flow $\g^\Delta$ is an $(\ssource,\sdest)$-flow with (potential) waiting at nodes $v \in \GV$. Moreover, it satisfies $\opp_\sdest \G^\Delta(t) \leq \opp_\sdest \G^*(t)$ for all $t \in\hori$ with strict inequality for a nonempty subset of $\hori$. 
\end{claim}

Now, $\g^\Delta$ is a flow with waiting and with support contained in that of $\gop$ and, hence, contained in $[0,t_f']$. Thus, 
we can apply \Cref{lem: WlogNoWaiting} to $\g^\Delta$ in order to transform the latter into  an $(\ssource$,$\sdest)$-flow without waiting.
More precisely, we arrive at a flow~$\tilde{\g}$ supported on $[0,t_f']$ that is an $(\ssource$,$\sdest)$-flow with net outflow rate $\sum_{i\in I} \inflow_i$ at $\ssource$, fulfilling flow conservation at all nodes $v \in \GV$ and having a non-positive flow balance $\op_{\sdest} \tilde{\g}\leq 0$ (for $u=\tilde{\g}$).  
For the last statement, observe that for all $ t\in \hori$ (cf.~\Cref{sec:uBasedNetworkLoadings:FlowBala}):
\begin{align*}
    \op_{\sdest} \tilde{\g}([0,t]) &:= \sum_{\arc \in \edgesFrom{\sdest}}\int_0^t \tilde{\g}_\arc \di\sigma - 
    \sum_{\arc \in \edgesTo{\sdest}}\int_{\exit_\arc(u,\cdot)^{-1}([0,t])} \tilde{\g}_\arc \di\sigma \\
    &=  \opp_\sdest \tilde{\G}(t) \leq \opp_\sdest {\G}^\Delta(t) \leq \opp_\sdest \Gop(t) \leq 0. 
\end{align*}
 Hence, by the flow decomposition theorem (\Cref{thm: FlowDecomp}), we can find a walk flow $\tilde{h} \in L_+(\hori)^{\sRoutes}$ with $\sum_{\wa \in \sRoutes} h_\wa= \sum_{i \in I}\inflow_i$ inducing $\tilde{\g}$ under the parametrized network loading operator $\ell^u$ for $u = \tilde{\g}$. Note that there are no zero cycle inflow rates due to the travel time on all $\arc \in \GA$ being bounded from below by $\tmin >0$ on the support of $\tilde{\g}$. Moreover, by \Cref{ass: LuImplL}, we also get that $\ell(\tilde{h}) = \tilde{\g}$ from which $\tilde{h} \in \swir$ follows immediately as $\tilde{\g}$ is supported on $\hori'$. 
 It now remains to show that 
 $\dup{\trav(\tilde{\g},\cdot)}{{\tilde\g}} < \dup{\trav(\gop,\cdot)}{{\gop}} $ in the extended network. Note that this is enough obtain the desired contradiction to the optimality of $\gop$ in the original network as translating both flows back to the original network preserves their respective total travel times (cf.~the discussion at the start of the proof).  
 To this end, we make use of the following claim which can be shown by a direct computation using several properties of parametrized flows (cf.\ \Cref{sec: ProofsSsSysOptNoDOutflow}): 
 
\begin{claim}\label{claim: TotalTravelRewrite}
For any   $\g \in \ell(\swir) \subseteq L_+(\hori)^\GAS$, we can rewrite the total travel time as follows  
\begin{align*}
    \dup{\trav(\g,\cdot)}{{\g}} =  \int_\hori (t_f - \id) \cdot \sum_{i \in I}\inflow_i \di\sigma  + \int_{\hori}\opp_\sdest\G \di\sigma.
\end{align*}
\end{claim}

Now, by \Cref{claim: FlowCarryingDestCycle} and \Cref{lem: WlogNoWaiting}, we have 
\begin{align*}
      \opp_\sdest \tilde{\G}(t) \leq \opp_\sdest {\G}^\Delta(t) \leq \opp_\sdest \Gop(t)  
\end{align*}
for all $t \in \hori$ with the second inequality being strict for a non-empty interval. Hence, by using the representation of the total travel time in \Cref{claim: TotalTravelRewrite} for both~$\tilde{\g}$ and~$\gop$ we get
    \begin{align*}
        \dup{\trav(\tilde{\g},\cdot)}{\tilde{\g}} 
            &= \int_\hori (t_f - \id) \cdot \sum_{i \in I}\inflow_i \di\sigma  + \int_{\hori}\opp_\sdest\tilde\G \di\sigma\\
            &< \int_\hori (t_f - \id) \cdot \sum_{i \in I}\inflow_i \di\sigma  + \int_{\hori}\opp_\sdest\Gop \di\sigma 
                = \dup{\trav(\gop,\cdot)}{{\gop}},
    \end{align*}
    which is a contradiction to $\gop$ being a system optimum.
 \end{proof}

With \Cref{thm: SsSysOptNoDOutflow} at hand, the implementability of system optima (i.e.\ \Cref{cor: SsSysOptImpl}) is now a direct consequence of \Tollref{thm: CombiChara}:

\begin{proof}[Proof of \Cref{cor: SsSysOptImpl}]
    Let $\gop \in \ell(\wir)$ be a \sysop flow. Since, due to \Cref{ass: TravelGeneral}, the travel times are strictly positive on the support of~$\gop$, \Cref{thm: CombiChara} characterizes the implementability of $\gop$ via the property that $\gop$ never sends any flow into edges leaving the common \sink $\dest$, i.e.~$\gop_\arc=0$ for all $\arc \in \delta^+(\dest)$. 
    By \Cref{thm: SsSysOptNoDOutflow}, any system optimal flow fulfills this property and is, thus,   implementable. 
\end{proof}

\section{Conclusion}
We considered the question whether system optimal flows are implementable by tolls and provided a positive and perhaps surprising negative result: In multi-source, multi-\sink networks, system optimal flows are in general not implementable, even for well-behaved flow propagation models such as the Vickrey queuing  model  and common value-of-time parameter. In particular, we showed that the price of stability achievable by tolls is in fact unbounded, demonstrating a substantial gap between the dynamic and static case. 
In contrast, for single-source, single-\sink networks, we provided sufficient monotonicity assumptions regarding the edge travel times such that system optimal flows are guaranteed to be implementable. These monotonicity assumptions are, in particular, fulfilled by  the Vickrey queuing and linear edge delay model. 

Several questions remain open. For \emph{multi-source}, single-\sink networks we could show in~\Cref{thm: SsSysOptNoDOutflow}
that any system optimal flow does not send positive flow 
out of the \sink[.] This, however, does not directly imply that
those system optima are implementable. One problem is that key properties  of the master problem~\eqref{opt: Master}, like strong duality,  are not clear a priori but would be needed to obtain implementing prices. Another open topic is to consider computational aspects of obtaining implementing tolls.
\clearpage

\bibliographystyle{plain}
\bibliography{master-bib}

\begin{thebibliography}{10}

\bibitem{Anshelevich08}
E.~Anshelevich, A.~Dasgupta, J.~Kleinberg, {\'E}.~Tardos, T.~Wexler, and
  T.~Roughgarden.
\newblock The price of stability for network design with fair cost allocation.
\newblock {\em SIAM J. Comput.}, 38(4):1602--1623, 2008.

\bibitem{BhaskarFA15}
Umang Bhaskar, Lisa Fleischer, and Elliot Anshelevich.
\newblock A {S}tackelberg strategy for routing flow over time.
\newblock {\em Games and Economic Behavior}, 92:232--247, 2015.

\bibitem{Bogachev2007I}
Vladimir~I. Bogachev.
\newblock {\em Measure Theory}, volume~I.
\newblock Springer Science \& Business Media, Berlin Heidelberg, 2007.

\bibitem{ColeDR03}
Richard Cole, Yevgeniy Dodis, and Tim Roughgarden.
\newblock Pricing network edges for heterogeneous selfish users.
\newblock In Lawrence~L. Larmore and Michel~X. Goemans, editors, {\em
  Proceedings of the 35th Annual {ACM} Symposium on Theory of Computing, June
  9-11, 2003, San Diego, CA, {USA}}, pages 521--530. {ACM}, 2003.

\bibitem{CominettiCL15}
Roberto Cominetti, Jos{\'{e}}~R. Correa, and Omar Larr{\'{e}}.
\newblock Dynamic equilibria in fluid queueing networks.
\newblock {\em Oper. Res.}, 63(1):21--34, 2015.

\bibitem{CCOPoAforNashFlows}
Jos\'{e} Correa, Andr\'{e}s Cristi, and Tim Oosterwijk.
\newblock On the price of anarchy for flows over time.
\newblock {\em Mathematics of Operations Research}, 47(2):1394--1411, 2022.

\bibitem{guide2006infinite}
Charalambos~Aliprantis D. and Kim~C. Border.
\newblock {\em Infinite dimensional analysis: A hitchhiker's guide}.
\newblock Springer-Verlag Berlin and Heidelberg GmbH \& Co. KG, 2006.

\bibitem{Fleischer04}
L.~Fleischer, K.~Jain, and M.~Mahdian.
\newblock Tolls for heterogeneous selfish users in multicommodity networks and
  generalized congestion games.
\newblock In {\em Proc. 45th Annual IEEE Sympos. Foundations Comput. Sci.},
  pages 277--285, 2004.

\bibitem{FrascariaO22}
Dario Frascaria and Neil Olver.
\newblock Algorithms for flows over time with scheduling costs.
\newblock {\em Math. Program.}, 192(1):177--206, 2022.

\bibitem{Friesz93}
Terry~L. Friesz, David Bernstein, Tony~E. Smith, Roger~L. Tobin, and B.~W. Wie.
\newblock A variational inequality formulation of the dynamic network user
  equilibrium problem.
\newblock {\em Oper. Res.}, 41(1):179--191, January 1993.

\bibitem{GrafThesis}
Lukas Graf.
\newblock {\em Dynamic Network Flows with Adaptive Route Choice based on
  Current Information}.
\newblock Mathematische Optimierung und Wirtschaftsmathematik | Mathematical
  Optimization and Economathematics. Springer Spektrum Wiesbaden, 2024.

\bibitem{GHS24FD}
Lukas Graf, Tobias Harks, and Julian Schwarz.
\newblock A decomposition theorem for dynamic flows, 2024.
\newblock Preprint on arXiv: \url{https://arxiv.org/abs/2407.04761}.

\bibitem{GHS24TollSODA}
Lukas Graf, Tobias Harks, and Julian Schwarz.
\newblock Tolls for dynamic equilibrium flows.
\newblock In {\em Proceedings of the 2025 Annual ACM-SIAM Symposium on Discrete
  Algorithms (SODA)}, pages 2560--2606. {SIAM}, 2025.

\bibitem{GHKKInformationDesign}
Svenja~M. Griesbach, Martin Hoefer, Max Klimm, and Tim Koglin.
\newblock Information design for congestion games with unknown demand.
\newblock {\em Proceedings of the AAAI Conference on Artificial Intelligence},
  38(9):9722--9730, March 2024.

\bibitem{HarksS23}
Tobias Harks and Julian Schwarz.
\newblock A unified framework for pricing in nonconvex resource allocation
  games.
\newblock {\em SIAM J. Optim.}, 33(2):1223--1249, 2023.

\bibitem{Karakostas04}
G.~Karakostas and S.~Kolliopoulos.
\newblock Edge pricing of multicommodity networks for heterogeneous selfish
  users.
\newblock In {\em Proc. 45th Annual IEEE Sympos. Foundations Comput. Sci.},
  pages 268--276, 2004.

\bibitem{Knight1924}
F.~Knight.
\newblock Some fallacies in the interpretation of social cost.
\newblock {\em Quart. J. Econ.}, 38(4):582--606, 1924.

\bibitem{Koch11}
Ronald Koch and Martin Skutella.
\newblock Nash equilibria and the price of anarchy for flows over time.
\newblock {\em Theory Comput. Syst.}, 49(1):71--97, 2011.

\bibitem{Lombardi21}
Claudio Lombardi, Luis Picado-Santos, and Anuradha~M. Annaswamy.
\newblock Model-based dynamic toll pricing: An overview.
\newblock {\em Applied Sciences}, 11(11), 2021.

\bibitem{Ma2017}
Rui Ma, Xuegang~(Jeff) Ban, and W.Y. Szeto.
\newblock Emission modeling and pricing on single-destination dynamic traffic
  networks.
\newblock {\em Transportation Research Part B: Methodological}, 100:255--283,
  2017.

\bibitem{MeunierW10}
Fr{\'{e}}d{\'{e}}ric Meunier and Nicolas Wagner.
\newblock Equilibrium results for dynamic congestion games.
\newblock {\em Transportation Science}, 44(4):524--536, 2010.
\newblock An updated version (2014) is available on Arxiv.

\bibitem{Pigou20}
A.~Pigou.
\newblock {\em The Economics of Welfare}.
\newblock Macmillan, London, UK, 1920.

\bibitem{Vickrey69}
William~S Vickrey.
\newblock Congestion theory and transport investment.
\newblock {\em American Economic Review}, 59(2):251--60, May 1969.

\bibitem{WieTobin1998}
Byung-Wook Wie and Roger~L. Tobin.
\newblock Dynamic congestion pricing models for general traffic networks.
\newblock {\em Transportation Research Part B: Methodological}, 32(5):313--327,
  1998.

\bibitem{Yang04}
H.~Yang and H.-J. Huang.
\newblock The multi-class, multi-criteria traffic network equilibrium and
  systems optimum problem.
\newblock {\em Transportation Res.}, 38(B):1--15, 2004.

\bibitem{Yang1998}
Hai Yang and Qiang Meng.
\newblock Departure time, route choice and congestion toll in a queuing network
  with elastic demand.
\newblock {\em Transportation Research Part B: Methodological}, 32(4):247--260,
  1998.

\bibitem{Yang2005}
Hailiang Yang and Haijun Huang.
\newblock {\em Mathematical and Economic Theory of Road Pricing}.
\newblock Emerald Group Publishing Limited, 2005.

\bibitem{ZhuM00}
Daoli Zhu and Patrice Marcotte.
\newblock On the existence of solutions to the dynamic user equilibrium
  problem.
\newblock {\em Transportation Sci.}, 34(4):402--414, 2000.

\end{thebibliography}

\appendix 

\section{Omitted Formal Proofs} \label{sec: Proofs}
In this section, we provide several omitted formal proofs. 
For some of these proofs,  we require some additional concepts from measure zero -- namely two types of standard (Borel-)measures  which we introduce in the following: 
Firstly, for any measurable function $\g:\hori\to \R$, we denote by $\g \cdot \sigma$ the measure on $\mathcal{B}(\hori)$
given by $\g\cdot\sigma(\mathfrak T) := \int_{\mathfrak T} \g\di\sigma,\mathfrak T \in \mathcal{B}(\hori)$. Secondly, for any measurable function $A: \hori\to\hori$ and any measure $\mu$ on $\mathcal{B}(\hori)$, we denote by $\mu\circ A^{-1}$ the image measure of $\mu$ under $A$ which is defined by $\mu\circ A^{-1}(\mathfrak T) := \mu(A^{-1}(\mathfrak T))$. The latter is again a measure on $\mathcal{B}(\hori)$. 
Moreover, we will use the notation of $\mu\leq \mu'$ for two measures, meaning that $\mu(\mathfrak T) \leq \mu'(\mathfrak T)$ for all $\mathfrak T \in \mathcal{B}(\hori)$.
%If we speak in this proof of convergence of measures, we mean the convergence \wrt the total variation norm. 
We refer to~\cite{Bogachev2007I} for a comprehensive overview of measure theory. 

\iffalse 
\subsection{Proofs Omitted in \Cref{thm: ImplementableImpliesOptimal}}
 \begin{proofClaim}[Proof of \Cref{claim:  ImplementableImpliesOptimal}]
 We start by observing that \Cref{lem: aggCostsVSwalkCosts} lets us rewrite the desired inequality~\eqref{eq: ineqOpth} as:
    \begin{align*}
         \sum_{\wa \in \Routes} \dup{\Psi_\wa(u,\cdot) + \Pf^{\prices}_\wa(u,\cdot)}{h_\wa}  \leq \sum_{\wa \in \Routes} \dup{\Psi_\wa(u,\cdot)+\Pf^{\prices}_\wa(u,\cdot)}{\tilde{h}_\wa} .
     \end{align*}
 Now, assume for the sake of a contradiction that there 
 exists $\tilde{h} \in \wir^u$ for which this inequality does not hold. 
Then, there has to exist at least one commodity $i \in I$ with
    \begin{align*}
         \sum_{\wa \in \Routes_i} \dup{\Psi_\wa(u,\cdot) + \Pf^{\prices}_\wa(u,\cdot)}{h_\wa} > \sum_{\wa \in \Routes_i} \dup{\Psi_\wa(u,\cdot)+\Pf^{\prices}_\wa(u,\cdot)}{\tilde{h}_\wa}.
     \end{align*}
     Since $\sum_{\wa \in \Routes_i}h_\wa = \sum_{\wa \in \Routes_i}\tilde{h}_\wa$, there 
     have to exist two walks $\wa,\tilde{\wa} \in \Routes_i$ and a set $\mathfrak T \in \mathcal{B}(\hori)$ with $\sigma(\mathfrak T)>0$ such that $\Psi_\wa(u,t) + \Pf^{\prices}_\wa(u,t)>\Psi_{\tilde{\wa}}(u,t) + \Pf^{\prices}_{\tilde{\wa}}(u,t)$ and $h_\wa(t) > \tilde{h}_\wa(t)\geq 0$ for a.e.\ $t \in \mathfrak T$. 
This, however, contradicts the fact that $h$ is a $\prices$-DUE. 
 \end{proofClaim}
\fi

\subsection{Proofs Omitted in \Cref{thm: CounterSysopImpl}}\label{app:ProofsOfCounterSysopImplClaims}
    \begin{proofClaim}[Proof of \Cref{claim: travarc}]
  \begin{structuredproof}
        \proofitem{\ref{claim: travarc: 4>eps+infl}} Flow conservation at node~$\source_\colNo{F}$ (which is not a \sink node of any commodity) together with the fact that Vickrey queues are always non-negative gives us for any $t \in [9,13+M]$:
        \begin{align*}
            \q_{\arc_4}(u,t) 
                &= \int_0^t u_{\arc_4}\di\sigma - \int_0^{t+\tau_{\arc_4}}u^-_{\arc_4}\di\sigma 
                    = \q_{\arc_4}(u,9) + \int_9^t u_{\arc_4}\di\sigma - \int_{9+\tau_{\arc_4}}^{t+\tau_{\arc_4}}u^-_{\arc_4}\di\sigma \\
                &\geq \int_9^t u_{\arc_4}\di\sigma - \int_{9+\tau_{\arc_4}}^{t+\tau_{\arc_4}}u^-_{\arc_4}\di\sigma 
                    \symoverset{1}{=} \int_9^t u^-_{\arc_5}\di\sigma + \int_9^t r_\colNo{F} \di\sigma - \int_{9+\tau_{\arc_4}}^{t+\tau_{\arc_4}}u^-_{\arc_4}\di\sigma \\
                &\symoverset{2}{\geq} \int_9^t u^-_{\arc_5}\di\sigma + (t-9)\cdot 2 - (t-9)\cdot 2
                    = \int_9^t u^-_{\arc_5}\di\sigma.
        \end{align*}
        Here, we used flow conservation at~\refsym{1} and the fact that in a Vickrey flow the outflow rate of an edge is bounded by its capacity (which is $2$ for $\arc_5$) at~\refsym{2}. This now directly implies
            \[\trav_{\arc_4}(u,t) = \tau_{\arc_4} + \frac{\q_{\arc_4}(u,t)}{\nu_{\arc_5}} \geq \varepsilon + \frac{1}{2}\int_9^t u^-_{\arc_5}\di\sigma\]
        as claimed.

        \proofitem{\ref{claim: travarc: 3>eps+infl}} This can be shown completely analogous to \ref{claim: travarc: 4>eps+infl} using flow conservation at node~$\source_\colNo{D}$ (which is not a \sink node for commodities~\colNo{D} or~\colNo{E}).      
    
        \proofitem{\ref{claim: travarc: 2QueueUpperBound}} 
        The only walks containing edge~$\arc_2$ are those of commodities~\colNo{C} and~\colNo{E}. Moreover, each of these walks contains this edge exactly once. As the combined cumulative network inflow of these two commodities is~$3$, this is also an upper bound on cumulative inflow into and, hence, the size of the queue on edge~$\arc_2$ (at any point in time). 
        
        \proofitem{\ref{claim: travarc: 4NoQueue}} Since the only flow arriving at node~$\source_\colNo{F}$ before time~$9$ is via edge~$\arc_5$, flow conservation at this node ensures $u_{\arc_4}(t) \leq u^-_{\arc_5}(t) \leq \nu_{\arc_5} = \nu_{\arc_4}$ for all $t \in [0,9]$. Hence, no queue can build up during $[0,9]$, which implies $\trav_{\arc_4}(u,t) = \tau_{\arc_4} = \varepsilon$ for all $t \in [0,9]$.

        \proofitem{\ref{claim: travarc: 1NoQueue}} As node $\source_\colNo{E}$ has no incoming edges, flow conservation at this node ensures $u_{\arc_1} \leq r_\colNo{E} \leq 1 = \nu_{\arc_1}$ for all times. Hence, no queue can ever build up on edge~$\arc_1$ implying $\trav_{\arc_1}(u,\cdot) = \tau_{\arc_1} = 1$. Completely analogous one shows $\trav_{\arc_7}(u,\cdot) = \tau_{\arc_7} = 1$. Finally, $\trav_{\arc_6}(u,\cdot) = \tau_{\arc_6} = 4$ holds since the capacity of edge~$\arc_6$ is unbounded and, hence, no queue ever builds up on this edge.
        \qedhere
    \end{structuredproof}
    \end{proofClaim}

 \begin{proofClaim}[Proof of \Cref{claim: ImplemenatbleImplies}]
               \begin{structuredproof}
            \proofitem{\ref{claim: ImplemenatbleImplies: <=3}} Assume, for the sake of a contradiction, that there exists some measurable non-null set $\mathfrak T \subseteq \hori$ such that we have $u^{\colNo{B}}_{\arc_5}(t) > 0$ for almost all $t \in \mathfrak T$. Then, by \Cref{lem: Relations:h>0u>0}, there must be some subset $\mathfrak D \in \mathcal{B}(\hori)$ with positive measure $\sigma(\mathfrak D) >0$ and a walk $\wa^\colNo{B} \in \Routes_\colNo{B}$  with $\wa^\colNo{B}[2] = \arc_5$ such that $h_{\wa^\colNo{B}}>0$ on~$\mathfrak D$. By choosing a suitable subset of $\mathfrak D$ (still with positive measure) we may even assume  $h_{\wa^\colNo{B}}(t) > 0$ for almost all $t \in \mathfrak D$. Then, we can define a new walk inflow rate $\hat{h}$ via
            \begin{align*}
                \hat{h}_\wa \coloneqq \begin{cases}
                    0,                             &\text{for } \wa = \wa^{\colNo{B}}, \\
                    h_\wa + h_{\wa^\colNo{B}},     &\text{for } \wa = (\arc_7,\colNo{B}), \\
                    h_\wa,                         &\text{else,}
                \end{cases} \quad \quad \text{ for all } \wa \in \Routes.
            \end{align*}
            It is then easily verified that $\hat{h} \in  \wir^u$ and  $\ell^u(\hat{h})< u$. Since $\trav_\arc(u,t) > 0$ for all $t \in \hori'$ and $\arc \in \GA$, it follows that 
            $\dup{\trav(u,\cdot)}{\ell^u(\hat{h})} < \dup{\trav(u,\cdot)}{u} $. This shows that $\hat{h}$ is feasible for the master problem \ref{opt: Master} and has a better objective value than any $\tilde{h}$ inducing $u$ since we can rewrite the objective value for any feasible $h'$  via  \Cref{lem: aggCostsVSwalkCosts} by 
            $\dup{\Psi(u,\cdot)}{h'} = \dup{\trav(u,\cdot)}{\ell^u(h')}$.  
            Thus, we arrive at the desired contradiction as this shows, by \Cref{thm: ImplementableImpliesOptimal}, that $u$ is not implementable. 
            
            \proofitem{\ref{claim: ImplemenatbleImplies: Trav=1}} Since we have $\trav_{\arc_5}(u,t)=\tau_{\arc_5}+\frac{\q_{\arc_5}(u,t)}{\nu_{\arc_5}} = 2-\varepsilon + \frac{\q_{\arc_5}(u,t)}{\nu_{\arc_5}}$, it suffices to show that the queue on edge~$\arc_5$ remains empty during $[0,4]$. This, in turn, follows by observing that the inflow rate into this edge is at most~$2$ during any point in this interval: No flow can arrive at the tail of edge~$\arc_5$ via edge~$\arc_6$ during that time since $\trav_{\arc_6}(u,t) \geq \tau_{\arc_6}=4$. Therefore, we only have to consider flow of commodity~\colNo{C} entering the network at node $s_\colNo{C}$ at a rate of at most~$2$ and flow of commodity~\colNo{B}. The latter, however, cannot enter edge~$\arc_5$ by \ref{claim: ImplemenatbleImplies: <=3}.  \qedhere
        \end{structuredproof}
    \end{proofClaim}

   \begin{proofClaim}[Proof of \Cref{claim: ImplemenatbleImpliesHighTT}]
        The flow particles of commodities~\colNo{C} and~\colNo{E} arriving at $\dest_\colNo{F}$ either directly enter $\arc_2$ or go along the cycle $(\arc_6,\arc_5,\arc_4)$ at least once before entering $\arc_2$. In particular, the following is a complete case distinction: Either at least $7/8$ of commodity~\colNo{C}'s and $3/4$ of commodity~\colNo{E}'s  particles enter $\arc_2$ directly or at least $1/8$ of commodity~\colNo{C}'s or $1/4$ of commodity~\colNo{E}'s particles take at least once the cycle $(\arc_6,\arc_5,\arc_4)$. 

        \begin{proofbycases}
            \caseitem{At least $7/8$ of commodity~\colNo{C} and $3/4$ of commodity~\colNo{E} enter $\arc_2$ directly}
            In this case, 
                \[\int_{[2,3]}h_{\wa^\colNo{C}} \di\sigma \geq 7/8\cdot \int_{[2,3]}r_\colNo{C}\di\sigma = 7/4\]
            for $\wa^\colNo{C}=((\arc_5,\arc_4,\arc_2),\colNo{C})$ and 
                \begin{align}\label{eq:DirectFlowC5}
                    \int_{[4,5]}h_{\wa^\colNo{E}} \di\sigma \geq 3/4\cdot \int_{[4,5]}r_\colNo{E}\di\sigma = 3/4
                \end{align}
            for $\wa^\colNo{E}=((\arc_1,\arc_2,\arc_3),\colNo{E})$. 
            Using the former we can derive the following lower bound on the total flow of commodity~\colNo{C} entering $\arc_2$ during $[0,5]$: 
            \begin{align}
                \int_{[0,5]}u^{\colNo{C}}_{\arc_2}\di\sigma &= \sum_{\wa \in \Routes_\colNo{C}} \sum_{j:\wa[j]=\arc_2} \int_{[0,5]} \ell^u_{\wa,j}(h_\wa)\di\sigma 
                &\geq \int_{[0,5]} \ell^u_{\wa^\colNo{C},3}(h_{\wa^\colNo{C}})\di\sigma =  \int_{\arr_{\wa^\colNo{C},3}(u,\cdot)^{-1}([0,5])}  h_{\wa^\colNo{C}} \di\sigma \nonumber \\
                &= \int_{[0,3]} h_{\wa^\colNo{C}}\di\sigma  \geq 7/4,\label{eq: comm3e2}
            \end{align} 
            where the last equality holds by \Cref{claim: travarc: 4NoQueue}, \Cref{claim: ImplemenatbleImplies: Trav=1} and \Cref{lem: elluinj}. 
            Moreover, we have $u_{\arc_2}(t) = 0$ on $[0,4]$ since no flow has the possibility to reach $\arc_2$ during this interval (only flow of commodities~\colNo{C} and~\colNo{E} can enter this edge and neither can reach it before time~$4$). Together with the non.negativity of Vickrey queues this ensures $u^-_{\arc_2}=0$ on~$[0,5]$. Therefore, we get
            \begin{align}\label{eq:LowerBoundQueueE2At5}
                 \q_{\arc_2}(u,5) 
                    = \int_{[0,5]} u_{\arc_2} \di\sigma - \int_{[0,5+\tau_{\arc_2}]}u_{\arc_2}^-\di\sigma 
                            \symoverset{1}{\geq}  7/4 -  1 = 3/4,
            \end{align}
            where we used for \refsym{1} the inequality in \eqref{eq: comm3e2}  and that in Vickrey flows the outflow rate is bounded by the edge's capacity (which is $1$ for~$\arc_2$). 

            Now, since the travel time on edge~$\arc_1$ is always~$1$ (\Cref{claim: travarc: 1NoQueue}) and no flow of commodity~\colNo{E} can arrive at node~$\dest_\colNo{F}$ via any other edge during $[5,6]$, we have $u^{\colNo{E}}_{\arc_2} \leq u^{\colNo{E},-}_{\arc_1} = 1$ during that interval. This, together with~\eqref{eq:DirectFlowC5} allows us to derive the following lower bound on the cumulative inflow of commodity~\colNo{E} into edge~$\arc_2$ up to any time $t \in [5,6]$:
            \begin{align*}
                \int_{[5,t]}u^{\colNo{E}}_{\arc_2}\di\sigma 
                    &= \int_{[5,6]}u^{\colNo{E}}_{\arc_2}\di\sigma - \int_{[t,6]}u^{\colNo{E}}_{\arc_2}\di\sigma 
                        \geq \int_{[5,6]}\ell^u_{\wa^\colNo{E},2}(h_{\wa^\colNo{E}})\di\sigma - \int_{[t,6]}1\di\sigma \\
                    &= \int_{\arr_{\wa^\colNo{E},2}(u,\cdot)^{-1}([5,6])}h_{\wa^\colNo{E}}\di\sigma - (6-t) 
                        \symoverset{2}{=} \int_{[4,5]}h_{\wa^\colNo{E}}\di\sigma - (6-t) \\
                    &\toverset{\eqref{eq:DirectFlowC5}}{\geq} 3/4 - (6-t) = t-6+3/4,
            \end{align*}
            where \refsym{2} holds due to \Cref{claim: travarc: 1NoQueue} and \Cref{lem: elluinj}. Combining this with~\eqref{eq:LowerBoundQueueE2At5} gives us the following lower bound on the the queue at $\arc_2$ during $[5,6]$:
            \begin{align*}
                \q_{\arc_2}(u,t) 
                    &= \int_{[0,t]} u_{\arc_2} \di\sigma - \int_{[0,t+1]}u_{\arc_2}^-\di\sigma \\ 
                    &= \q_{\arc_2}(u,5)  + \int_{[5,t]} u_{\arc_2} \di\sigma - \int_{[6,t+1]}u_{\arc_2}^-\di\sigma \\
                    &\toverset{\eqref{eq:LowerBoundQueueE2At5}}{\geq} 3/4 + (t-6+3/4) - (t-5) = 1/2,
            \end{align*}
        where we also used $u_{\arc_2}^- \leq \nu_2 = 1$ for the last inequality. 
        Hence, we have $\trav_{\arc_2}(u,t)\geq 3/2$ for all $t \in [5,6]$.  On the other hand, we know by \Cref{claim: travarc: 2QueueUpperBound} that $\trav_{\arc_2}(u,t)\leq 4$ for all $t \in [5,6]$. Together, this shows $\exit_{\arc_2}(u,\cdot)^{-1}([7,10]) \supseteq [7-3/2,10-4] = [5.5,6]$, which allows us to conclude that 
        \begin{align*}
            \int_{[7,10]}u_{\arc_3}^{\colNo{E}}  \di\sigma 
                &= \int_{\exit_{\arc_2}(u,\cdot)^{-1}([7,10])} u_{\arc_2}^{\colNo{E}}\di\sigma 
                    \geq  \int_{[5.5,6]} u_{\arc_2}^{\colNo{E}}\di\sigma\\ 
                &=\int_{[5,6]} u_{\arc_2}^{\colNo{E}}\di\sigma - \int_{[5,5.5]} u_{\arc_2}^{\colNo{E}}\di\sigma 
                    \geq   3/4  - \int_{[5,5.5]} 1 \di\sigma 
                    = 1/4 .
        \end{align*}
        where we used in the second inequality that $u_{\arc_2}^{\colNo{E}}\leq \nu_{\arc_2} = 1$.  
        With this, \Cref{claim: travarc: 3>eps+infl} directly implies $\trav_{\arc_3}(u,t) \geq \varepsilon+1/4$ for all $t \in [10,10+M]$, which finally allows us to derive the desired lower bound on the total travel time incurred by commodity~\colNo{D} as follows: 
        \begin{align*}
            \dup{\trav(u,\cdot)}{u^{\colNo{D}}} 
                &= \int_{[7,10]} \trav_{\arc_3}(u,\cdot)\cdot u^{\colNo{D}}_{\arc_3}\di\sigma + \int_{[10,10+M]} \trav_{\arc_3}(u,\cdot)\cdot u^{\colNo{D}}_{\arc_3}\di\sigma \\
                &\geq \int_{[7,10]}\varepsilon\cdot 1 + \int_{[10,10+M]} (\varepsilon+1/4)\cdot 1\di\sigma 
                    = 3\varepsilon + M\varepsilon + M/4.
        \end{align*}

        \caseitem{At least $1/8$ of commodity~\colNo{C} or $1/4$ of commodity~\colNo{E} take the cycle at least once} In this case we have
            \[\int_{[2,3]}\sum_{\wa \in \Routes_{\colNo{C}}:\wa \neq \wa^{\colNo{C}}} h_{\wa} \di\sigma + \int_{[4,5]}\sum_{\wa \in \Routes_{\colNo{E}}:\wa \neq \wa^{\colNo{E}}} h_{\wa} \di\sigma \geq 1/4\]
        for $\wa^{\colNo{C}}=((\arc_5,\arc_4,\arc_2),\colNo{C})$ and $\wa^{\colNo{E}}=((\arc_1,\arc_2,\arc_3),\colNo{E})$. 
        For the total (combined) flow of commodities~\colNo{C} and~\colNo{E} entering $\arc_5$ during $[8,10]$ we then get: 
        \begin{align*}
            \int_{[8,10]}u^{\colNo{C}}_{\arc_5}+u^{\colNo{E}}_{\arc_5}\di\sigma 
                &= \sum_{\wa \in \Routes_{\colNo{C}}} \sum_{j:\wa[j]=\arc_5} \int_{[8,10]} \ell^u_{\wa,j}(h_\wa)\di\sigma + \sum_{\wa \in \Routes_{\colNo{E}}} \sum_{j:\wa[j]=\arc_5} \int_{[8,10]} \ell^u_{\wa,j}(h_\wa)\di\sigma \\
                &\geq \sum_{\wa \in \Routes_{\colNo{C}}:\wa \neq \wa^{\colNo{C}}}   \int_{[8,10]} \ell^u_{\wa,4}(h_{\wa})\di\sigma + \sum_{\wa \in \Routes_{\colNo{E}}:\wa \neq \wa^{\colNo{E}}}   \int_{[8,10]} \ell^u_{\wa,3}(h_{\wa})\di\sigma \\
                &= \sum_{\wa \in \Routes_{\colNo{C}}:\wa \neq \wa^{\colNo{C}}}  \int_{\arr_{\wa,4}(u,\cdot)^{-1}([8,10])} h_{\wa} \di\sigma + \sum_{\wa \in \Routes_{\colNo{E}}:\wa \neq \wa^{\colNo{E}}}  \int_{\arr_{\wa,3}(u,\cdot)^{-1}([8,10])} h_{\wa} \di\sigma\\
                &\symoverset{1}{\geq} \sum_{\wa \in \Routes_{\colNo{C}}:\wa \neq \wa^{\colNo{C}}}  \int_{[2,4]} h_{\wa}\di\sigma + \sum_{\wa \in \Routes_{\colNo{E}}:\wa \neq \wa^{\colNo{E}}}  \int_{[3,5]} h_{\wa}\di\sigma
                    \geq 1/4
        \end{align*}   
        For \refsym{1}, we used that any flow entering the cycle $(\arc_5,\arc_4,\arc_6)$ during $[2,4]$ or the path $(\arc_1,\arc_6)$ during $[3,5]$ arrives at the start of edge~$\arc_5$ during the interval $[8,10]$, i.e.~$\arr_{\wa,4}(u,\cdot)^{-1}([8,10]) \supseteq [2,4]$ for all $\wa \in \Routes_{\colNo{C}}\setminus\{\wa^{\colNo{C}}\}$ and $\arr_{\wa,3}(u,\cdot)^{-1}([8,10]) \supseteq [3,5]$ for all $\wa \in \Routes_{\colNo{E}}\setminus\{\wa^{\colNo{E}}\}$. To see the former, note that \Cref{claim: travarc: 4NoQueue} and \Cref{claim: ImplemenatbleImplies: Trav=1} together imply that the travel time along cycle $(\arc_5,\arc_4,\arc_6)$ is exactly $2-\varepsilon+\varepsilon+4=6$ during the interval $[0,4]$. Similarly, for the latter, \Cref{claim: travarc: 1NoQueue} ensures that the travel time along path $(\arc_1,\arc_6)$ is always exactly $1+4=5$.
        
        This now allows us to lower bound the volume of flow on edge~$\arc_5$ at time~$10$ as follows: 
        \begin{align*}
            \int_0^{10} u_{\arc_5}\di\sigma - \int_0^{10} u^-_{\arc_5}\di\sigma 
                &= \int_0^{7+\varepsilon} u_{\arc_5}\di\sigma - \int_0^{7+\varepsilon+\tau_{\arc_5}} u^-_{\arc_5}\di\sigma + \int_{7+\varepsilon}^{10} u_{\arc_5}\di\sigma - \int_{7+\varepsilon+\tau_{\arc_5}}^{10} u^-_{\arc_5}\di\sigma \\
                &= \q_{\arc_5}(u,7+\varepsilon) +  \int_{7+\varepsilon}^{10} u_{\arc_5}\di\sigma - \int_{7+\varepsilon+\tau_{\arc_5}}^{10} u^-_{\arc_5}\di\sigma \\
                &\symoverset{1}{\geq} \int_8^{10} u_{\arc_5}^{\colNo{D}} + u_{\arc_5}^{\colNo{F}}\di\sigma - \int_{9}^{10} u^-_{\arc_5}\di\sigma   
                    \geq 1/4 - \int_{9}^{10} u^-_{\arc_5}\di\sigma. 
        \end{align*}
        Here, we used non-negativity of Vickrey queues and $\varepsilon \leq 1$ as well as $\varepsilon+\tau_{\arc_5}=2$ for \refsym{1}. Hence, at time~$10$ flow of volume at least $x \coloneqq \max\Set{1/4-\int_9^{10}u^-_{\arc_5}\di\sigma,0}$ is currently on edge~$\arc_5$. Since, in a Vickrey flow, flow cannot stay on an edge for too long (\cite[Corollary 3.24]{GrafThesis}), we know that at least that amount of flow has to leave edge~$\arc_5$ during the interval $[10,10+\frac{x}{\nu_{\arc_5}}+\tau_{\arc_5}]$ and, hence,
        \begin{align*}
            \int_{10}^{13} u^-_{\arc_5}\di\sigma \geq \int_{10}^{10+\frac{x}{\nu_{\arc_5}}+\tau_{\arc_5}} u^-_{\arc_5}\di\sigma \geq x.
        \end{align*}
        Altogether this implies
        \begin{align}
            \int_9^{13}u_{\arc_5}^-\di\sigma 
                    = \int_9^{10}u_{\arc_5}^-\di\sigma + \int_{10}^{13}u_{\arc_5}^-\di\sigma 
                &\geq \int_9^{10}u_{\arc_5}^-\di\sigma + x \notag\\
                &= \int_9^{10}u_{\arc_5}^-\di\sigma + \max\Set{1/4-\int_9^{10}u^-_{\arc_5}\di\sigma,0} \geq 1/4. \label{eq:InflowIntoE5}
        \end{align}
        Hence, the claim follows as 
        \begin{align*}
            \dup{\trav(u,\cdot)}{u^{\colNo{F}}} 
                &\geq  \int_9^{13+M}  \trav_{\arc_4}(u,\cdot)\cdot  u^{\colNo{F}}_{\arc_4}\di\sigma 
                    = \int_9^{13}  \trav_{\arc_4}(u,\cdot)\cdot  u^{\colNo{F}}_{\arc_4}\di\sigma + \int_{13}^{13+M}  \trav_{\arc_4}(u,\cdot)\cdot  u^{\colNo{F}}_{\arc_4}\di\sigma \\
                &\symoverset{2}{\geq}  \int_9^{13}\varepsilon\cdot r_{\colNo{F}} \di\sigma + \int_{13}^{13+M}\Bigl(\varepsilon+\tfrac{1}{2}\int_9^{13}u^-_{\arc_5}\di\sigma\Bigr)\cdot r_{\colNo{F}} \di\sigma \\
                &\toverset{\eqref{eq:InflowIntoE5}}{\geq} (13-9)\cdot\varepsilon\cdot 2 + M\cdot \bigl(\varepsilon + \tfrac{1}{2\cdot4}\bigr)\cdot 2 
                    = 2\varepsilon(M+4)+\tfrac{M}{4},
        \end{align*}
        where \refsym{2} holds due to \Cref{claim: travarc: 4>eps+infl} together with flow conservation at node~$\source_{\colNo{F}}$. \qedhere                    
        \end{proofbycases}
    \end{proofClaim}

    \begin{proofClaim}[Proof of \Cref{claim: PropertiesOfSysOpt}]
        The statements about the inflow rates into the first edges of all walks hold directly by the definition of~$h$. In particular, \labelcref{claim: PropertiesOfSysOpt:comm4,claim: PropertiesOfSysOpt:comm6} are trivial.
    \begin{structuredproof}
        \proofitem{\ref{claim: PropertiesOfSysOpt:comm2}} $\g^{\colNo{B}}_{\arc_5}=6\cdot1_{[1,2]}$ follows directly from $\g^{\colNo{B}}_{\arc_7}=\inflow_{\colNo{B}} = 6\cdot1_{[0,1]}$ due to \Cref{claim: travarc: 7NoQueue}. This, in turn, together with the fact that no other commodities' flow can enter edge~$\arc_5$ before time~$2$ and $\nu_{\arc_5}=2$, implies that the queue on that edge grows linearly from~$0$ to~$4$ during that time interval. Consequently, the travel time grows linearly from~$2-\varepsilon$ to~$2-\varepsilon+\frac{4}{2}$, which implies $\g^{\colNo{B}}_{\arc_4} = 2\cdot 1_{[3-\varepsilon,6-\varepsilon]}$. Since we have $\trav_{\arc_4}(g,t)=\varepsilon$ for all $t \in [0,9]$ by \Cref{claim: travarc: 4NoQueue}, this gives us $\g^{\colNo{B}}_{\arc_6} = 2\cdot 1_{[3,6]}$.
        
        \proofitem{\ref{claim: PropertiesOfSysOpt:comm3}} As argued for~\ref{claim: PropertiesOfSysOpt:comm2}, we have $\q_{\arc_5}(\g,2)  = 4$. Since $\g_{\arc_5} = \g^{\colNo{C}}_{\arc_5} = 2\cdot 1_{[2,3]}$ on $[2,3]$ and $\nu_{\arc_5} = 2$, the queue remains at $4$ for that interval, i.e.~$\q_{\arc_5}(\g,t)  = 4$ for $t \in [2,3]$. Thus, $\trav_{\arc_5}(\g,t) = 4- \varepsilon$ for $t \in [2,3]$, resulting in $\g^{\colNo{C}}_{\arc_4} = 2\cdot 1_{[6-\varepsilon,7-\varepsilon]}$. 
        Moreover, we know that  $\trav_{\arc_4}(g,t) = \varepsilon$ for all $t \in [6-\varepsilon,7-\varepsilon]$ (\Cref{claim: travarc: 4NoQueue}) and, hence, $\g^{\colNo{C}}_{\arc_2}= 2\cdot 1_{[6,7]}$. 
            
        \proofitem{\ref{claim: PropertiesOfSysOpt:comm5}} By \Cref{claim: travarc: 1NoQueue}, we have $\trav_{\arc_1}(\g,t)=1$ for all~$t$ and, therefore, $\g^{\colNo{E}}_{\arc_2} = 1_{[5,6]}$. Moreover, by combining the previously shown parts of \Cref{claim: PropertiesOfSysOpt} we deduce that $\g_{\arc_2} = \g^{\colNo{E}}_{\arc_2} =1_{[5,6]}$ on $[0,6]$. Hence, no queue forms on this edge during $[0,6]$ and, consequently, $\trav_{\arc_2}(\g,t)= 1$ for $t \in[0,6]$ and $\g^{\colNo{E}}_{\arc_3} = 1_{[6,7]}$ follow. 
        
        \proofitem{\ref{claim: PropertiesOfSysOpt:trav5}} As argued in the proof of \ref{claim: PropertiesOfSysOpt:comm3}, we have $\trav_{\arc_5}(\g,2) = 4-\varepsilon$. Since $\trav$ fulfills FIFO (i.e.\ $\exit$ is non-decreasing), this directly implies $\trav_{\arc_5}(\g,t) \leq 5-\varepsilon$ for all $t \in [1,2]$. The second part was already shown in the proof of \ref{claim: PropertiesOfSysOpt:comm3}. 
        
        \proofitem{\ref{claim: PropertiesOfSysOpt:trav4}} For $t \in [0,9]$ we have $\trav_{\arc_4}(u,t) = \varepsilon$ for any flow by \Cref{claim: travarc: 4NoQueue} and, hence, in particular for~$g$. Thus, we have $\q_{\arc_4}(g,9)=0$. Since the only flow entering edge~$\arc_4$ after time~$9$ is that of commodity~\colNo{F} (at a rate of $2 = \nu_{\arc_4}$), the queue remains empty after that. Therefore, we have $\trav_{\arc_4}(g,t) = \tau_{\arc_4}=\varepsilon$ for all $t \geq 9$ as well.
        
        \proofitem{\ref{claim: PropertiesOfSysOpt:trav2}} In the proof of \ref{claim: PropertiesOfSysOpt:comm5} we already showed that $\q_{\arc_2}(\g,t)=0$ (and, hence, $\trav_{\arc_2}(g,t)=1$) for all $t \in [5,6]$. Together with
        \begin{align*}
            \int_{[6,7]}\g_{\arc_2}\di\sigma = \int_{[6,7]}\g_{\arc_2}^{\colNo{C}}\di\sigma =  \int_{\hori}\g_{\arc_2}^{\colNo{C}}\di\sigma = 2,
        \end{align*}
        this implies $\q_{\arc_2}(\g,t) \leq 2$ and, therefore, $\trav_{\arc_2}(\g,t) \leq 3$ for all $t \in [6,7]$.

        \proofitem{\ref{claim: PropertiesOfSysOpt:trav3}} This is an immediate consequence of $\g_{\arc_3} = \g^{\colNo{D}}_{\arc_3} + \g^{\colNo{E}}_{\arc_3} = 1_{[7,10+M]} + 1_{[6,7]} \leq 1 = \nu_{\arc_3}$.
        \qedhere  
    \end{structuredproof}
    \end{proofClaim}

   \begin{proofClaim}[Proof of \Cref{claim: PropsSysop}]
        Using \Cref{claim: PropertiesOfSysOpt} we calculate: 
    \begin{align*}
        \dup{\trav(\g,\cdot)}{\g^{\colNo{B}}} &= 1\cdot6 +\int_{[1,2]} \trav_{\arc_5}(\g,\cdot) \cdot \g^{{\colNo{B}}}_{\arc_5} \di\sigma  + \int_{[3-\varepsilon,6-\varepsilon]} \trav_{\arc_4}(\g,\cdot) \cdot \g^{\colNo{B}}_{\arc_4} \di\sigma + 4\cdot 6 \\
                &\Croverset{claim: PropertiesOfSysOpt:trav5,claim: PropertiesOfSysOpt:trav4}{\leq} 6 + (5-\varepsilon)\cdot 6 + \varepsilon\cdot 6 + 24= 60 \\
        \dup{\trav(\g,\cdot)}{\g^{\colNo{C}}} &=\int_{[2,3]} \trav_{\arc_5}(\g,\cdot) \cdot \g^{\colNo{C}}_{\arc_5} \di\sigma  + \int_{[6-\varepsilon,7-\varepsilon]} \trav_{\arc_4}(\g,\cdot) \cdot \g^{\colNo{C}}_{\arc_4} \di\sigma + \int_{[6,7]} \trav_{\arc_2}(\g,\cdot) \cdot \g^{\colNo{C}}_{\arc_2} \di\sigma \\
                &\Croverset{claim: PropertiesOfSysOpt:trav5,claim: PropertiesOfSysOpt:trav2,claim: PropertiesOfSysOpt:trav4}{\leq} (4-\varepsilon)\cdot 2 + \varepsilon\cdot 2 + 3 \cdot 2 = 14 \\
        \dup{\trav(\g,\cdot)}{\g^{\colNo{D}}} &\Croverset{claim: PropertiesOfSysOpt:trav3}{=} (3+M)\varepsilon \\
        \dup{\trav(\g,\cdot)}{\g^{\colNo{E}}} &\toverset{\Crefshort{claim: travarc: 1NoQueue},\Crefshort{claim: PropertiesOfSysOpt:trav2,claim: PropertiesOfSysOpt:trav3}}{=} 1 + 1+ \varepsilon = 2+\varepsilon \\
        \dup{\trav(\g,\cdot)}{\g^{\colNo{F}}} &\Croverset{claim: PropertiesOfSysOpt:trav4}{=} 2(4+M)\varepsilon.
    \end{align*}    
    For the minimality of $\dup{\trav(\g,\cdot)}{\g^{\colNo{D}}}$ and $\dup{\trav(\g,\cdot)}{\g^{\colNo{F}}}$ take any $\tilde\g$ induced by some walk inflow rate contained in $\wir$. Then, we have
        \[\dup{\trav(\tilde\g,\cdot)}{\tilde\g^{\colNo{D}}} \geq \int_{\hori'} \trav_{\arc_3}(\tilde\g,\cdot)\cdot \tilde\g^{\colNo{D}}_{\arc_3}\di\sigma \geq \int_{\hori'} \varepsilon\cdot \tilde\g^{\colNo{D}}_{\arc_3}\di\sigma = \varepsilon(M+3) = \dup{\trav(\g,\cdot)}{\g^{\colNo{D}}}\]
    as well as
        \[\dup{\trav(\tilde\g,\cdot)}{\tilde\g^{\colNo{F}}} \geq \int_{\hori'} \trav_{\arc_4}(\tilde\g,\cdot)\cdot \tilde\g^{\colNo{F}}_{\arc_4}\di\sigma \geq \int_{\hori'} \varepsilon\cdot \tilde\g^{\colNo{F}}_{\arc_4}\di\sigma = \varepsilon\cdot2(M+4) = \dup{\trav(\g,\cdot)}{\g^{\colNo{F}}}. \qedhere\]
    \end{proofClaim}

     \begin{proofClaim}[Proof of \Cref{claim: CostDifferenceSysOpImplFlow}]
        By \Cref{claim: PropsSysop} there exists a constant $C \in \R_{\geq 0}$ independent of~$M$ such that
            \[\sum_{i\in\{\colNo{B},\colNo{C},\colNo{E}\}} \dup{\trav(\g,\cdot)}{\g^i} \leq C. \]
        Now, let $u$ be any implementable flow. We then distinguish between which of the two cases of \Cref{claim: ImplemenatbleImpliesHighTT} apply to~$u$:
        \begin{proofbycases}
            \caseitem{\Cref{claim: ImplemenatbleImpliesHighTT:For4}} In this case we estimate
                \begin{align*}
                    \dup{\trav(u,\cdot)}{u}-\dup{\trav(\g,\cdot)}{\g } 
                        &\Croverset{claim: PropsSysop}{\geq} \dup{\trav(u,\cdot)}{u^{\colNo{D}}}+\dup{\trav(u,\cdot)}{u^{\colNo{F}}} - C - (3+M)\varepsilon-\dup{\trav(\g,\cdot)}{\g^{\colNo{F}}} \\
                        &\symoverset{1}{\geq} \dup{\trav(u,\cdot)}{u^{\colNo{D}}} - C - (3+M)\varepsilon \\
                        &\Croverset{claim: ImplemenatbleImpliesHighTT:For4}{\geq} M/4-C,
                \end{align*}
            where in~\refsym{1} we used the minimality of $\dup{\trav(\g,\cdot)}{\g^{\colNo{F}}}$ from \Cref{claim: PropsSysop}.

            \caseitem{\Cref{claim: ImplemenatbleImpliesHighTT:For6}} In this case analogously estimate
                \begin{align*}
                    \dup{\trav(u,\cdot)}{u}-\dup{\trav(\g,\cdot)}{\g } 
                        &\Croverset{claim: PropsSysop}{\geq} \dup{\trav(u,\cdot)}{u^{\colNo{D}}}+\dup{\trav(u,\cdot)}{u^{\colNo{F}}} - C - \dup{\trav(\g,\cdot)}{\g^{\colNo{D}}}-2(4+M)\varepsilon \\
                        &\symoverset{2}{\geq} \dup{\trav(u,\cdot)}{u^{\colNo{F}}} - C - 2(4+M)\varepsilon \\
                        &\Croverset{claim: ImplemenatbleImpliesHighTT:For6}{\geq} M/4-C,
                \end{align*}
            where in~\refsym{2} we used the minimality of $\dup{\trav(\g,\cdot)}{\g^{\colNo{D}}}$ from \Cref{claim: PropsSysop}.
        \end{proofbycases}
    Hence, for $M>4C$, the flow $u$ has strictly higher total travel time than~$\g$ in both cases.   
    \end{proofClaim}

    \subsection{Proofs Omitted in \Cref{lem: WlogNoWaiting}}\label{sec: WlogNoWaiting}
    \begin{proofClaim}[Proof of \Cref{claim: WlogNoWaiting}] 
    We prove the claim via an induction on $j$ with the following induction claim:
\begin{proofbyinduction}
    \inductionclaim For all $j \in \{0,1,\ldots,K\}$, there exists a flow $\tilde{\g}^j \in L_+(\hori)^\GAS$ with
    \begin{enumerate} 
        %\item   nonpositive cumulative flow balance   $\opp_v\tilde{\G}^{j}(t)\leq 0,t \in \hori$ for all $v \in \GV$ and $t \in \hori$ with equality at $t  =t_f$, \label{indu: Feasible}
        %\item   cumulative flow balance $\opp_v\tilde{\G}^{j}(t) = 0 $ for all $v \in \GV$  and all $t\in [0,j\cdot \tmin]$,\label{indu: NoWai}
        \item $\opp_v\tilde{\G}^{j}(t)\leq 0$ for all $t \in \hori$ and all $v \in \GV$, \label{indu: Feasible}
        \item $\opp_v\tilde{\G}^{j}(t) = 0 $ for all $t\in [0,j\cdot \tmin] \cup \{t_f\}$ and all $v \in \GV$,\label{indu: NoWai}
         \item $\opp_\sdest \tilde{\G}^{j}(t) \leq \opp_\sdest {\G}(t)$ for all $ t \in \hori$,  \label{indu: SmallerObj} 
         \item $ \tilde{\g}^{j}_{(\ssource,\source_i)} =   \inflow_i$ for all $i\in I$, and\label{indu: NetworkInflow} 
         \item $\tilde{\g}^{j}$ is supported on $\hori'=[0,t_f']$.\label{indu: Support} 
    \end{enumerate}
 \basecase{$j = 0$}This is clear as we can simply set $\tilde{\g}^0 = \g$.
    \inductionstep{$j \to j+1$} 
  Consider $\tilde{\g}^{j}$ and an arbitrary $v \in \GV$. 
    Denote by $\inflow_v^{j+}$ and $\inflow_v^{j-}$ the gross outflow and inflow rate at $v$, i.e.~the functions fulfilling 
   \begin{align*}
        \underset{{\arc \in \delta^-(v)}}{\sum}\int_{\exit_\arc(\tilde{\g}^j_\arc,\cdot)^{-1}([0,t])} \tilde{\g}^j_\arc \di\sigma =  \int_{[0,t]} \inflow_v^{j-} \di\sigma \text{ and }   \underset{{\arc \in \delta^+(v)}}{\sum}\int_{ [0,t] } \tilde{\g}^j_\arc \di\sigma =  \int_{[0,t]} \inflow_v^{j+} \di\sigma 
    \end{align*}
    where $\edgesFrom{v},\edgesTo{v}\subseteq \GAS$ are \wrt the extended network. 
 Note that the gross inflow rates do exist by  the property of the travel time function required in \Cref{ass: TravelGeneral: Inflow}.

  By induction hypothesis, $\tilde{g}^j$ has a nonpositive cumulative flow balance  $\opp_v\tilde{\G}^{j}(t)\leq 0$ for all $t \in \hori$ with equality at $t  =t_f$, i.e.~$\int_{[0,t]}  \inflow_v^{j+}\di\sigma \leq   \int_{[0,t]} \inflow_v^{j-}$ for any $t \in \hori$ with equality at $t = t_f$. Moreover,   
    both functions $\hori \to \R, t \mapsto \int_{[0,t]}  \inflow_v^{j+}\di\sigma$ and $\hori \to \R, t \mapsto \int_{[0,t]} \inflow_v^{j-}\di\sigma$ are continuous and have the same value for $t= 0$ as well as at $t= t_f$. Thus, the function 
    \begin{align*}
        \hfu_v : \hori \to \hori, t \mapsto \min \Big\{\hat{t} \in \hori \mid \int_{[0,\hat{t}]}  \inflow_v^{j-} \di\sigma = \int_{[0,t]}  \inflow_v^{j+}\di\sigma \Big\}
    \end{align*}
     is well-defined, monotone increasing and fulfills $\hfu_v(t) \leq t, t\in\hori$. Furthermore, $\hfu_v$ is measurable by \cite[Corollary 18.8 + Theorem 18.19]{guide2006infinite}.  Thus, $({\inflow_v^{j+}}\cdot\sigma) \circ \hfu_v^{-1}$ is a well-defined Borel-measure on $\hori$. 
    Moreover, we have
    \begin{align}
        \int_{\hfu_v^{-1}([0,t])} \inflow_v^{j+}\di\sigma =   \int_{[0, {t}]}  \inflow_v^{j-} \di\sigma \text{ for all }t \in \hori, v \in \GV. \label{eq: ShiftedInteEquaV} 
    \end{align}
    Note that  the claimed equality in \eqref{eq: ShiftedInteEquaV}   for $t \in \hfu_v(\hori)$ follows by definition of $\hfu_v$ whereas the  equality  for $t \in \hori\setminus \hfu_v(\hori)$ then  follows by  $\opp_v\tilde{\G}^j(t_f)=0$. 

     Hence, the corresponding measures $\inflow_v^{j-}\cdot\sigma = ({\inflow_v^{j+}}\cdot\sigma) \circ \hfu_v^{-1}$ coincide and, consequently, latter measure is absolutely continuous. Using $\sum_{\arc \in \edgesFrom{v}}\tilde{\g}_\arc^j  = \inflow_v^{j+}$ and $\tilde{\g}_\arc^j  \geq 0,\arc \in\edgesFrom{v}$, this now implies that 
    $(\tilde{\g}_\arc^j \cdot\sigma) \circ \hfu_v^{-1},\arc \in \edgesFrom{v} $ are absolutely continuous as well. 
    Hence, their  Radon-Nikodym derivatives exist and we can define $\tilde{\g}^{j+1}_\arc,\arc \in \edgesFrom{v}$ as those derivatives.
    
    Applying the above not all nodes $v \in \GV$ and setting $\tilde{\g}^{j+1}_{(\ssource,\source_i)} \coloneqq \tilde{\g}^j_{(\ssource,\source_i)},i \in I$ we obtain a vector $\tilde{\g}^{j+1}\in L_+(\hori)^\GAS$. Note that $\tilde{\g}^{j+1}$ is  nonnegative since $\tilde{\g}^j\in L_+(\hori)^\GAS$ is likewise.     
    We now show that $\tilde{\g}^{j+1}$ has at least the same cumulative inflow as $\tilde{\g}^{j}$ and is the same on $[0,j\cdot\tmin]$:  
    \begin{subclaim}\label{subclaim: ShiftedFuncs}
        For all $\arc\in \GAS$, the edge inflow rate $\tilde{\g}^{j+1}_\arc$ fulfills 
        \begin{enumerate}[label=\alph*)]
        
             \item $\int_0^t\tilde{\g}^{j+1}_\arc\di\sigma \geq \int_0^t\tilde{\g}^{j}_\arc\di\sigma $ for all $t \in \hori$ with equality at $t = t_f$. \label[thmpart]{subclaim: ShiftedFuncs: Cumul}

            \item $\tilde{\g}^{j+1}_\arc (t)= \tilde{\g}^{j}_\arc(t)$ for almost all  $t\in [0,j\cdot\tmin]$.  \label[thmpart]{subclaim: ShiftedFuncs: EqualOnInterval}  %$\tilde{\g}^{j+1}_\arc \sigeq \tilde{\g}^{j}_\arc$ on $[0,j\cdot\tmin]$ (i.e.~$\tilde{\g}^{j+1}_\arc (t)= \tilde{\g}^{j}_\arc(t)$ for almost all  $t\in [0,j\cdot\tmin]$). \label[thmpart]{subclaim: ShiftedFuncs: EqualOnInterval} 

         \end{enumerate}
    \end{subclaim}
 \begin{proofClaim}
      For $\arc = (\ssource,\source_i),i \in I$ the claims are trivial since we chose $\tilde{\g}^{j+1}_\arc = \tilde{\g}^j_\arc$.
        \begin{structuredproof}
            \proofitem{\ref{subclaim: ShiftedFuncs: Cumul}} 
                       
            We calculate for arbitrary $\arc \in \edgesFrom{v},v \in \GV $ and  $t \in \hori$: 
                \begin{align*}
        \int_0^t\tilde{\g}^{j+1}_\arc\di\sigma = \tilde{\g}^{j+1}_\arc\cdot\sigma([0,t])=  (\tilde{\g}_\arc^j \cdot\sigma) \circ \hfu_v^{-1}    ([0,t]) =  \int_{\hfu_v^{-1}([0,t])}\tilde{\g}_\arc^j\di\sigma \symoverset{1}{\geq}  \int_{[0,t]}\tilde{\g}_\arc^j\di\sigma. 
    \end{align*}
    Here we used for the inequality indicated with \refsym{1} that $\hfu$ is  monotone increasing and fulfills $\hfu_v(t) \leq t$, implying $\hfu_v^{-1}([0,t]) \supseteq [0,t]$. Furthermore, we note that this inequality is tight for $t =t_f$ since $\hfu_v^{-1}(\hori) = \hori$.  

    \proofitem{\ref{subclaim: ShiftedFuncs: EqualOnInterval}} 
    Consider  $\arc \in \edgesFrom{v}$ for arbitrary $v \in \GV$. 
        On the interval $[0,j\cdot \tmin]$, we can derive the following equalities: 
         \begin{align*}
            \sum_{\arc \in \edgesFrom{v}}  \tilde{\g}^{j } \cdot \sigma  & =   \inflow_v^{j+}\cdot\sigma \symoverset{1}{=} \inflow_v^{j-}\cdot\sigma = ({\inflow_v^{j+}}\cdot\sigma) \circ \hfu_v^{-1}   
              = (\sum_{\arc \in \edgesFrom{v}} \tilde{\g}^j \cdot \sigma ) \circ \hfu_v^{-1}\\ &=  
              \sum_{\arc \in \edgesFrom{v}} (\tilde{\g}^j \cdot \sigma ) \circ \hfu_v^{-1} =    \sum_{\arc \in \edgesFrom{v}}  \tilde{\g}^{j+1} \cdot \sigma  
         \end{align*}
         where we used for \refsym{1} that  $\opp_v\tilde{\G}^j(t)= 0$  for  $t \in [0,j\cdot \tmin]$ holds by our induction hypothesis. 
         With this and \ref{subclaim: ShiftedFuncs: Cumul}, we can deduce that  $\tilde{\g}_\arc^j \cdot\sigma= \tilde{\g}_\arc^{j+1} \cdot\sigma,\arc \in \edgesFrom{v} $   on $[0,j\cdot \tmin]$. \qedhere    
        \end{structuredproof}
    \end{proofClaim}

   With this, we can now verify that $\tilde{\g}^{j+1}$ satisfies all of the required properties: 

\begin{structuredproof} 
\proofitem{\ref{indu: Feasible}}
For arbitrary $v \in \GV$ and $t \in \hori$ we calculate:
    \begin{align}
        \opp_v\tilde{\G}^{j+1} (t) &= \sum_{\arc\in \edgesFrom{v}} \int_{[0,t]}  \tilde{\g}^{j+1}_\arc 
  \di\sigma - \sum_{\arc\in \edgesTo{v}} (\tilde{\G}^{j+1})^-_\arc(t) \nonumber\\
  &\leq \sum_{\arc\in \edgesFrom{v}} \int_{[0,t]}  \tilde{\g}^{j+1}_\arc 
  \di\sigma - \sum_{\arc\in \edgesTo{v}} (\tilde{\G}^{j})^-_\arc(t) \label{eq: IneqInsteadOfEq}\\ 
  &=  \sum_{\arc\in \edgesFrom{v}}   \int_{\hfu_v^{-1}([0,t])}  \tilde{\g}^{j}_\arc 
  \di\sigma - \sum_{\arc\in \edgesTo{v}} (\tilde{\G}^{j})^-_\arc(t) \nonumber \\
  &=      \int_{\hfu_v^{-1}([0,t])} \sum_{\arc\in \edgesFrom{v}} \tilde{\g}^{j}_\arc 
  \di\sigma - \sum_{\arc\in \edgesTo{v}} (\tilde{\G}^{j})^-_\arc(t) \nonumber \\
  &=      \int_{\hfu_v^{-1}([0,t])} \inflow_v^{j+}
  \di\sigma - \sum_{\arc\in \edgesTo{v}} (\tilde{\G}^{j})^-_\arc(t)\nonumber\\
  \overset{\eqref{eq: ShiftedInteEquaV}}&{=}\int_{[0,t]} \inflow_v^{j-} 
  \di\sigma - \sum_{\arc\in \edgesTo{v}} (\tilde{\G}^{j})^-_\arc(t)   =  0\nonumber
     \end{align}
where we used for \eqref{eq: IneqInsteadOfEq} that  $(\tilde{\G}^{j+1})^-_\arc(t) \geq (\tilde{\G}^{j})^-_\arc(t),t \in \hori$ due to 
\Cref{subclaim: ShiftedFuncs: Cumul} and \Cref{ass: TravelMono: Monotonicity}. 
% Moreover, by the equality stated in 
% \Cref{subclaim: ShiftedFuncs: Cumul} for $t = t_f$, we also get an equality in \eqref{eq: IneqInsteadOfEq} for $t = t_f$, showing that the cumulative flow balance is $0$ at $t_f$.\lgcom{@Lukas: Check argument for equality}  

\proofitem{\ref{indu: NoWai}}
    Using the same calculation from \ref{indu: Feasible} it only remains to show that we have equality at \eqref{eq: IneqInsteadOfEq} for all $t\in [0,j\cdot \tmin]$ as well as for $t=t_f$. The latter follows immediately from the equality in \Cref{subclaim: ShiftedFuncs: Cumul}.  

    For the former we first observe that by \Cref{ass: TravelGeneral: tmin} we have for any $\arc \in \GA$ that $\exit_\arc(\tilde{g}^{j+1}_\arc,t) \geq t + \trav_\arc(\tilde{g}^{j+1}_\arc,t) \geq t + \tmin > (j+1)\cdot\tmin$ for all $t \in (j\cdot\tmin,t'_f]$ and, therefore, $\exit_\arc(\tilde{g}^{j+1}_\arc,\cdot)^{-1}([0,t]) \subseteq [0,j\cdot\tmin]$ for all $t  \in [0,(j+1)\cdot\tmin]$.
    Hence, \Cref{subclaim: ShiftedFuncs: EqualOnInterval} implies that $(\tilde{\G}^{j+1})^-_\arc(t)  = (\tilde{\G}^{j})^-_\arc(t)$ for all $t \in [0,(j+1)\cdot\tmin]$ and all $\arc \in \GA$. 
    Since also $(\tilde{\G}^{j+1})^-_{(\ssource,\source_i)} =(\tilde{\G}^{j})^-_{(\ssource,\source_i)}$ by  
    $\tilde{\g}^{j+1}_{(\ssource,\source_i)} =\tilde{\g}^{j}_{(\ssource,\source_i)}$ and $\trav_{(\ssource,\source_i)}\equiv 0$ for all $i \in I$ holds, we get an equality in \eqref{eq: IneqInsteadOfEq} for all $t \in [0,(j+1)\cdot\tmin]$ and all $v \in \GV$, showing the claim.
    %  We first observe that  $\exit_\arc(\tilde{g}^{j+1}_\arc,\cdot)^{-1}([0,t]) \subseteq [0,j\cdot\tmin]$ for all $t  \in [0,(j+1)\cdot\tmin]$ as $\trav_\arc(\tilde{g}^{j+1}_\arc,\cdot) \geq \tmin$ for all $\arc \in \GA$ and $t \in \hori'$ by \Cref{ass: TravelGeneral: tmin}.\lgcom{I guess technically we need this part of the assumption for all $g$ and not just for $g \in \ell(\wir)$ then} 
    %  Remark that $[0,(j+1)\cdot\tmin]\subseteq \hori'$ as $j+1\leq K$ and $[0,K\cdot \tmin] = \hori'$. 
    %  Hence, 
    % \Cref{subclaim: ShiftedFuncs: EqualOnInterval} implies that 
    %     $(\tilde{\G}^{j+1})^-_\arc(t)  = (\tilde{\G}^{j})^-_\arc(t)$ for all $t \in [0,(j+1)\cdot\tmin]$ and all $\arc \in \GA$. 
    %     Since also $(\tilde{\G}^{j+1})^-_{(\ssource,\source_i)} =(\tilde{\G}^{j})^-_{(\ssource,\source_i)}$ by  
    %     $\tilde{\g}^{j+1}_{(\ssource,\source_i)} =\tilde{\g}^{j}_{(\ssource,\source_i)}$ and $\trav_{(\ssource,\source_i)}\equiv 0$ for all $i \in I$ holds, 
    %  we get an equality in \eqref{eq: IneqInsteadOfEq} for all $t \in [0,(j+1)\cdot\tmin]$ and all $v \in \GV$, showing the claim. 
    
    \proofitem{\ref{indu: SmallerObj}} This is a direct consequence of the induction hypthesis and \Cref{subclaim: ShiftedFuncs: Cumul} together with the following identity for any $\hat{\g} \in L_+(\hori)^\GAS$:
    \begin{align*}%\label{eq: CumuInflowDest}
          \opp_{\sdest}\hat\G(t) = -\int_{[0,t]} \hat\g_{(\dest,\sdest)} \di\sigma.
    \end{align*}
    The above holds as  $\sdest$ is only incident to $(\dest,\sdest)$ and the corresponding travel time is always equal to zero.
    
    \proofitem{\ref{indu: NetworkInflow}} This is clear by induction hypothesis and the definition $\tilde{\g}^{j+1}_{(\ssource,\source_i)} := \tilde{\g}^j_{(\ssource,\source_i)}, i\in I$.  
    
    \proofitem{\ref{indu: Support}} This is, again, a direct consequence of \Cref{subclaim: ShiftedFuncs: Cumul} and the induction hypothesis: Since we have $(\tilde \G^{j+1})_\arc^+(t'_f) \geq (\tilde G^{j})^+_\arc(t'_f)$ as well as $(G^{j+1})_\arc^+(t_f) = (\tilde G^{j})^+_\arc(t_f)$ and $\tilde \g^j_\arc$ is supported on $[0,t_f']$, $\tilde \g^{j+1}_\arc$ can only be supported on $[0,t_f']$ as well. 
\end{structuredproof}
\end{proofbyinduction}
\end{proofClaim}

    \subsection{Proofs Omitted in \Cref{thm: SsSysOptNoDOutflow}}\label{sec: ProofsSsSysOptNoDOutflow}

 \begin{proofClaim}[Proof of \Cref{claim: FlowCarryingDestCycle}]
 Let $v \in \GV$ and $t \in \hori$ be arbitrary. We calculate (explanations follow): 
\begin{align*}
    \opp_{v}\G^\Delta(t) &=   \sum_{\arc \in \edgesFrom{v}} \int_{[0,t]} \g^\Delta_\arc \di\sigma -    \sum_{\arc \in \edgesTo{v}} \int_{\exit_\arc(\g^\Delta_\arc,\cdot)^{-1}([0,t])} \g^\Delta_\arc \di\sigma \\
 &=  \sum_{\arc \in \edgesFrom{v}} \int_{[0,t]}  \g^\Delta_\arc \di\sigma -    \sum_{\arc \in \edgesTo{v}} \int_{\exit_\arc(\g^\Delta_\arc,\cdot)^{-1}([0,t])} \gop_\arc - \g^c_\arc  \di\sigma   \\  
&\symoverset{1}{\leq} \sum_{\arc \in \edgesFrom{v}} \int_{[0,t]}  \g^\Delta_\arc \di\sigma  -  \sum_{\arc \in \edgesTo{v}} \int_{\exit_\arc(\gop_\arc,\cdot)^{-1}([0,t])} \gop_\arc - \g^c_\arc  \di\sigma \\
&= \sum_{\arc \in \edgesFrom{v}} \int_{[0,t]} \gop_\arc - \g^c_\arc +1_{(\dest,\sdest)}(\arc) \cdot(-\ell^u_{\wa,\abs{\wa}}(h^*_\wa)+\ell^u_{\wa,j^*}(h^*_\wa)) \di\sigma \\
&\quad \quad -\sum_{\arc \in \edgesTo{v}}  (\Gop)^{-}_\arc(t) - \int_{\exit_\arc(\gop_\arc,\cdot)^{-1}([0,t])}  \g^c_\arc  \di\sigma\\
&= \sum_{\arc \in \edgesFrom{v}}\Bigl((\Gop)^{+}_\arc(t)  + \int_{[0,t]}  -\g^c_\arc\di\sigma +1_{(\dest,\sdest)}(\arc) \cdot\int_{[0,t]}(-\ell^u_{\wa,\abs{\wa}}(h^*_\wa)+\ell^u_{\wa,j^*}(h^*_\wa)) \di\sigma\Bigr) \\
&\quad \quad -\sum_{\arc \in \edgesTo{v}}  \Bigl((\Gop)^{-}_\arc(t) - \int_{\exit_\arc(\gop_\arc,\cdot)^{-1}([0,t])}  \g^c_\arc  \di\sigma\Bigr)\\
&\symoverset{3}{=} \opp_v\Gop(t) - \op_v\g^c([0,t]) + 1_{\dest}(v) \cdot \int_{[0,t]} -\ell^u_{\wa,\abs{\wa}}(h^*_\wa)+\ell^u_{\wa,j^*}(h^*_\wa) \di\sigma \\
&\symoverset{2}{=}  \opp_v\Gop(t) \symoverset{4}{=} 0
\end{align*}

For the inequality indicated by \refsym{1}, we used  $\g^*_{\arc}-\g^c_\arc  \geq 0$ and  the monotonicity of the travel times functions stated in \Cref{ass: TravelMono: TravelMonotonicity} which implies that $\exit_\arc(\g^{\Delta}_\arc ,\cdot)^{-1}([0,t]) \supseteq \exit_\arc(\g^*_\arc ,\cdot)^{-1}([0,t])$ for all $\arc \in \GAS$. Moreover, note that the inequality \refsym{1} is tight for $t = t_f$ as the previous inclusion is an equality in that case.  
The equality \refsym{3}  follows by the definition of the node balances and parameterized node balance for $u = \gop$,  cf.~\Cref{sec:uBasedNetworkLoadings:FlowBala}.  
For the equality \refsym{2}, remark that by \Cref{lem: flowcon}, $\g^c$ fulfills parameterized (\wrt $u= \gop$) flow conservation at all $v \in \GV\setminus\{\dest\}$ (i.e.~$ \op_v\g^c([0,t]) = 0,t\in \hori$) while at $\dest$ it is exactly equal to $\op_{\dest}\g^c([0,t]) =  \int_{[0,t]} -\ell^u_{\wa,\abs{\wa}}(h^*_\wa)+\ell^u_{\wa,j^*}(h^*_\wa) \di\sigma$ for all $t \in \hori$. Finally, \refsym{4} holds as $\gop$ fulfills flow conservation at all nodes $v \in \GV$.

In order to show the second part of the claim, we first observe that for any $\g \in L_+(\hori)^\GAS$ we have
\begin{align*} 
      \opp_{\sdest}\G(t) = -\int_{[0,t]} \g_{(\dest,\sdest)} \di\sigma
\end{align*}
since $\sdest$ is only incident to $(\dest,\sdest)$ and the corresponding travel time is always equal to zero.
Now we calculate
\begin{align}\label{eq: repres}\begin{split}
      \opp_{\sdest}\G^\Delta(t)= - \int_{[0,t]} \g^\Delta_{(\dest,\sdest)} \di\sigma &=  -\int_{[0,t]} \gop_{(\dest,\sdest)}  -\ell^u_{\wa,\abs{\wa}}(h^*_\wa)+\ell^u_{\wa,j^*}(h^*_\wa)   \di\sigma\\
      &=    \opp_{\sdest}\Gop(t) - \int_{[0,t]} -\ell^u_{\wa,\abs{\wa}}(h^*_\wa)  + \ell^u_{\wa,j^*}(h^*_\wa)   \di\sigma. 
\end{split}\end{align}
By \Cref{lem: elluPropagation}, we have for all $t \in \hori$ the equality
\begin{align*}
    \int_{\arr_{\wa_{\geq j^*},\abs{\wa_{\geq j^*}}}(u,\cdot)^{-1}([0,t])} \ell^u_{\wa,j^*}(h^*_\wa)  \di\sigma  = \int_{[0,t]} \ell^u_{\wa,\abs{\wa}}(h^*_\wa) \di\sigma. 
\end{align*}
Since the travel times are non-negative, we have $\arr_{\wa_{\geq j^*},\abs{\wa_{\geq j^*}}}(u,\cdot)^{-1}([0,t]) \subseteq [0,t]$ and can, therefore, further estimate
\begin{align}\label{eq: InflowVsOutflow}
     \int_{[0,t]} \ell^u_{\wa,j^*}(h^*_\wa)   \di\sigma \geq   \int_{\arr_{\wa_{\geq j^*},\abs{\wa_{\geq j^*}}+1}(u,\cdot)^{-1}([0,t])} \ell^u_{\wa,j^*}(h^*_\wa)  \di\sigma  = \int_{[0,t]} \ell^u_{\wa,\abs{\wa}}(h^*_\wa) \di\sigma. 
\end{align}
Using this in~\eqref{eq: repres} gives us $\opp_\sdest \G^\Delta(t) \leq \opp_\sdest \G^*(t)$ for all $t \in\hori$. 
Moreover, for  $\bar t = \sup\{t \in \hori \mid  \int_{[0,t]}\ell^u_{\wa,j^*}(h^*_\wa) \di\sigma =0\}$, we know that $\bar t < t'_f$ since $\ell^u_{\wa,j^*}(h^*_\wa) \in L_+(\hori)\setminus\{0\}$ and the flow induced by $h^*$ is supported in $\hori'=[0,t'_f]$. For $t \in [\bar t,t'_f]$, we know from \Cref{ass: TravelGeneral: tmin} that the travel times are lower bounded by $\tmin$ for edges in $\GA$. Since $\wa_{\geq j^*}$ contains at least one of them ($\wa[j^*] = \arc^*$), we get $\arr_{\wa_{\geq j^*},\abs{\wa_{\geq j^*}}}(u,\cdot)^{-1}[0,t] \subseteq [0,t-\tmin]$ for such $t$ and can, therefore, strengthen \eqref{eq: InflowVsOutflow} to
\begin{align}\label{eq: InflowVsOutflowS}
     \int_{[0,t]} \ell^u_{\wa,j^*}(h^*_\wa)   \di\sigma &\geq \int_{[0,t-\tmin]} \ell^u_{\wa,j^*}(h^*_\wa)   \di\sigma \\ &\geq   \int_{\arr_{\wa_{\geq j^*},\abs{\wa_{\geq j^*}}}(u,\cdot)^{-1}([0,t])} \ell^u_{\wa,j^*}(h^*_\wa)  \di\sigma = \int_{[0,t]} \ell^u_{\wa,\abs{\wa}}(h^*_\wa) \di\sigma.  \nonumber
\end{align}
Here then, for $t \in (\bar t,\min\{\bar t+\tmin,t'_f\}]$ the leftmost term is strictly larger than zero, while the second term is zero. Hence, we get strict inequality in~\eqref{eq: InflowVsOutflowS} and putting this back into~\eqref{eq: repres} gives us $\opp_\sdest \G^\Delta(t) < \opp_\sdest \G^*(t)$ for all such $t$ as well.
% Since the travel times are lower bounded by $\tmin$ for edges in $\GA$\lgcom{Only for times in $\hori'$} and $\wa_{\geq j^*}$ contains at least one of them, we get $\arr_{\wa_{\geq j^*},\abs{\wa_{\geq j^*}}+1}(u,\cdot)^{-1}[0,t] \subseteq [0,t-\tmin]$. 
% This yields the estimate 
% \begin{align}\label{eq: InflowVsOutflow}
%      \int_{[0,t]} \ell^u_{\wa,j^*}(h^*_\wa)   \di\sigma \geq   \int_{\arr_{\wa_{\geq j^*},\abs{\wa_{\geq j^*}}+1}(u,\cdot)^{-1}([0,t])} \ell^u_{\wa,j^*}(h^*_\wa)  \di\sigma  = \int_{[0,t]} \ell^u_{\wa,\abs{\wa}}(h^*_\wa) \di\sigma. 
% \end{align}
% Moreover, for  $t^* = \inf\{\tilde{t} \in \hori \mid  \int_{[0,\tilde{t}]}\ell^u_{\wa,j^*}(h^*_\wa) \di\sigma >0\}$, we know that 
% the left hand side of \eqref{eq: InflowVsOutflow} is larger zero for all $t \in (t^*,t^*+\tmin]$ while the right hand side is equal to $0$. 
% This insight in combination with the representation derived in \eqref{eq: repres} shows the second part of the claim. 
%
% Remark that $t^*<t_f$ as described above exists as $\ell^u_{\wa,j^*}(h^*_\wa) \in L_+(\hori)\setminus\{0\}$.
 \end{proofClaim}

\begin{proofClaim}[Proof of \Cref{claim: TotalTravelRewrite}]
    For any   $\g \in \ell(\swir) \subseteq L_+(\hori)^\GAS$, we can rewrite the total travel time as follows (for $u = \g$): 
\begin{align}
   \dup{\trav(\g,\cdot)}{{\g}}  &=   \sum_{v \in \GVS} \int_\hori-\id \di \big(\op_v {\g}\big) \label{eq: l1}\\
    &=\int_\hori-\id \di \big(\op_\ssource {\g}\big) + \int_\hori-\id \di \big(\op_\sdest  {\g}\big) \label{eq: l2} \\
    &= \int_\hori-\id \di \big(\op_\ssource {\g}\big)-t_f \cdot \op_\sdest\g(\hori )-\int_{\hori}-\opp_\sdest\G \di\sigma \label{eq: l3}\\
    &= \int_\hori-\id \di \big(\sum_{i \in I} \inflow_i \cdot \sigma\big) + t_f \cdot  \int_{\hori}\sum_{i \in I}\inflow_i \di\sigma+ \int_{\hori}\opp_\sdest\G \di\sigma \label{eq: l4} \\
    &=  \int_\hori (t_f - \id) \cdot \sum_{i \in I}\inflow_i \di\sigma  + \int_{\hori}\opp_\sdest\G \di\sigma\label{eq: l5}
\end{align}
The equality in the first line \eqref{eq: l1} is due to \Cref{lem: flowcontrav}. 
The equality \eqref{eq: l2} is implied by $\g$ fulfilling flow conservation at all nodes $v\in \GV$, i.e.~$\op_v\g=0,v \in \GV$. 
By \cite[Exercise 5.8.112]{Bogachev2007I}, we get \eqref{eq: l3} while \eqref{eq: l4} then follows by choosing a $h \in \ell(\swir)$ with $\ell(h)=g$ and applying \Cref{lem: flowcon} to all components~$h_w$ of~$h$.
\end{proofClaim}
 
\clearpage 
 \section{Properties of Parameterized Network Loadings Derived in \cite{GHS24FD}}\label{sec:AppuBasedNetworkLoadings}

\subsection{Existence of Parametrized Network Loadings}%\label{sec:uBasedNetworkLoadings:Exis}  

\begin{lemma}[\FDref{lem: elluinj}]\label{lem: elluinj}
       Consider an arbitrary walk $\wa$, $j\in [|\wa|+1]$ and $h_\wa \in L(\hori)$. If $\ell^u_{\wa,j}(h_\wa)$ exists, then $h_\wa = 0$ almost everywhere on $\arr_{\wa,j}(u,\cdot)^{-1}(\arr_{\wa,j}(u,\cdot)(\mathfrak T))\setminus \mathfrak T$ for any $\mathfrak T\in \mathcal{B}(\hori)$.
\end{lemma}
  
\iffalse
\begin{lemma}[\FDref{lem: elluContinuity}]\label{lem: elluContinuity}
     Consider an arbitrary walk  $\wa$, $j\in [|\wa|+1]$, $\arc \in \GA$ and countable collection of walks $\Routes'$. 
     Then, the following statements are true.
    \begin{thmparts}
        \item The mappings $\ell^u_{\wa,j}$ and $\ell^u_{\wa,\arc}$ are strong-strong and sequentially weak-weak continuous from their maximal domains to $L(\hori)$. \label[thmpart]{lem: elluContinuity:SingleWalk}
        
        \item   
        We have $\edom{\Routes'}\subseteq \seql[1][{\Routes'}][L(\hori)]$ and with respect to the induced subspace topology, $\ell^u_{\Routes'}: \edom{\Routes'}  \to \seql[1][{\Routes'}][L(\hori)^\GA]$ is sequentially  weak-weak continuous. 
        \label[thmpart]{lem: elluContinuity:Complete}

        \item The mapping $\sum_{\Routes'} \ell^u_{\Routes'}, h \mapsto \sum_{\wa \in \Routes'} \ell^u_\wa(h_\wa)$ is well-defined on  $\edom{\Routes'}$ and 
        with respect to the induced subspace topology from $\seql[1][{\Routes'}][L(\hori)]$, 
        the mapping is sequentially weak-weak continuous. \label[thmpart]{lem: elluContinuity:Sum}
    \end{thmparts}
\end{lemma}

\begin{theorem}[\FDref{thm: ExistenceOptSol}]\label{thm: ExistenceOptSol}
    Assume that $\objfunc: \seql[1][\Routes'][L_+(\hori)]\to \R$ is sequentially weakly continuous and $\ofeas$ is sequentially weakly closed in $\edom{\Routes'}$. Then, 
    the optimization problem~\ref{opt: General} has an optimal solution. 
\end{theorem} 

\fi

\subsection{Parameterized Node Balances and  \boldmath{$\source$,$\dest$}-Flows} \label{sec:uBasedNetworkLoadings:FlowBala}

The parameterized node balance at node $v \in \GV$ for an arbitrary vector $\eflow \in L(\hori)^\GA$ is given by the measure  $\op_v\eflow \coloneqq \sum_{\arc \in \delta^+(v)}   \eflow_\arc \cdot \sigma -  \sum_{\arc \in \delta^-(v)}  ( \eflow_\arc \cdot\sigma) \circ \exit_\arc(u,\cdot)^{-1}$ which describes for an arbitrary $\mathfrak T \in \mathcal{B}(\hori)$ the difference between the cumulative inflow into $v$   and the cumulative outflow from $v$ during $\mathfrak T$, i.e.
\begin{align}\label{eq: FlowBalance}
    \op_v \eflow(\mathfrak T) =  
\sum_{\arc \in \delta^+(v)} \int_{\mathfrak T} \eflow_\arc \di \sigma -  \sum_{\arc \in \delta^-(v)} \int_{\exit_\arc(u,\cdot)^{-1}(\mathfrak T)} \eflow_\arc \di \sigma. 
\end{align}
A vector $\eflow \in L_+(\hori)^\GA$ that has a net outflow rate $\inflow_s \in L_+(\hori)$ at $s$, fulfills flow conservation at 
all $v \neq \source,\dest$ and has a nonpositive node balance at $\dest$ is called parametrized $\source$,$\dest$-flow (\wrt $u$). 
Here, we say that the node balance $\op_v\eflow$ is nonpositive if \eqref{eq: FlowBalance} is nonpositive for any $\mathfrak  T \in \mathcal{B}(\hori)$.

  \begin{lemma}[\FDref{lem: flowcon}]\label{lem: flowcon}
Consider an arbitrary $v_1,v_2$-walk $\wa$, a corresponding walk inflow rate $h_\wa \in L(\hori)$ with $\g^\wa:=\ell^u_\wa(h_\wa)$ existing and a node $v \in \GV$.
Then we have 
\begin{align*} 
    \op_v \g^\wa = 1_{v_1}(v)\cdot  h_\wa \cdot \sigma   -1_{v_2}(v) \cdot (h_\wa \cdot\sigma)\circ\arr_{\wa,|\wa|+1}(u,\cdot)^{-1}.
\end{align*}
If furthermore $\ell^u_{\wa,\abs{\wa}+1}(h_\wa)$ exists, then $\op_{v_2} \g^\wa =  1_{v_1}(v_2)\cdot  h_\wa  -\ell^u_{\wa,\abs{\wa}+1}(h_\wa) \cdot \sigma$, i.e.~$\g^\wa$ has the outflow rate $ 1_{v_1}(v_2)\cdot  h_\wa  -\ell^u_{\wa,\abs{\wa}+1}(h_\wa)$ at the end node $v_2$. 
       \end{lemma}

\subsection{Properties of Parametrized \boldmath{$\source$,$\dest$}-Flows}% \label{sec:uBasedNetworkLoadings:Korr}
 
 \iffalse 
\begin{lemma}[\FDref{lem: ellOrderPreserving}]\label{lem: ellOrderPreserving}
    Consider an arbitrary walk  $\wa$, $j\in [|\wa|+1]$ and $h_{\wa},\Tilde{h}_\wa \in L(\hori)$ with $\ell^u_{\wa,j}(h_\wa),\ell^u_{\wa,j}(\tilde{h}_\wa)$ existing. Then $\ell^u_{\wa,j}(h_\wa) \leq \ell^u_{\wa,j}(\tilde{h}_\wa)$ if and only if $h_\wa \leq \Tilde{h}_\wa$. The analogue statement holds for $<$ instead of $\leq$ where $\wflow < \tilde{\wflow}$ for $\wflow,\tilde{\wflow} \in L(\hori)$ means that $\wflow \leq \tilde{\wflow}$ and $\wflow \neq \tilde{\wflow}$.  
\end{lemma}
\fi 
 \iffalse
\begin{lemma}[\FDref{lem: elluindi}] \label{lem: elluindi}
       Consider an arbitrary walk $\wa$, $j\leq |\wa|+1$ and $h_\wa \in L(\hori)$ with $\ell^u_{\wa,j}(h_\wa)$ existing. 
       For any $\mathfrak T^* \in \mathcal{B}(\hori)$, we have $\ell^u_{\wa,j}(1_{\mathfrak T^*}\cdot h_\wa) = 1_{\arr_{\wa,j}(u,\cdot)(\mathfrak T^*)}\cdot\ell^u_{\wa,j}(h_\wa)$.  
\end{lemma} 
\fi

\begin{lemma}[\FDref{lem: Relations:h>0u>0}]\label{lem: Relations:h>0u>0} 
Let $\Routes'$ be an arbitrary countable collection of walks and $h \in \edom{\Routes'}\cap L_+(\hori)^{\Routes'}$ with $\g:=\ell^u(h)$.  The following statement is 
    true for all 
    %$\wa\in \Routes'$,  $j\leq |\wa|$ and
    $\arc \in \GA$: 
    \begin{itemize}
        \item[d)] For all $\mathfrak T \in \mathcal{B}(\hori), \sigma(\mathfrak T)> 0$ with ${\g}_\arc(t)>0$ for a.e.~$t \in \mathfrak T$,
    we can find a countable set $M$ and walks $\wa^m,m\in M$ together with indices $j_m\leq |\wa^m|$ and measurable sets $\mathfrak D_m, \sigma(\mathfrak D_m)>0$ for all $m \in M$ such that $\wa^m[j_m] = \arc$, ${h}_{\wa^m}(t) > 0$ for a.e.~$t \in \mathfrak D_m$ and $\arr_{\wa^m,j_m}(u,\cdot)(\mathfrak D_m)$ are disjoint with $\bigcup_{m \in M} \arr_{\wa^m,j_m}(u,\cdot)(\mathfrak D_m)$ equalling $\mathfrak T$ up to a null set. Furthermore, for any walk $\wa \in {\Routes'}$, there are only finitely many $m \in M$ with $\wa^m=\wa$.     \label[thmpart]{lem: Relations:h>0u>0:u>0ExistsCountableM}
    \end{itemize}
\end{lemma}

\begin{lemma}[\FDref{lem: elluPropagation}]\label{lem: elluPropagation}
     Consider an arbitrary walk $\wa$, two edge indices $j_1\leq j_2\leq|\wa|+1$ and $h_\wa \in L(\hori)$ with $\ell^u_{\wa,j'}(h_\wa),j'\in\{j_1,j_2\}$ existing. Then, we have the equality 
     $\ell^u_{\wa,j_2}(h_\wa) = \ell^u_{\wa_{\geq j_1},j_2-j_1+1}(\ell^u_{\wa,j_1}(h_\wa))$. 
\end{lemma}

\iffalse
 \begin{lemma}[\FDref{lem: FLowOnZeroTrav}]\label{lem: FLowOnZeroTrav}
     Consider an arbitrary walk $\wa$, two edge indices $j_1<j_2\leq|\wa|+1$ and $h_\wa \in L(\hori)$ with $\ell^u_{\wa,j'}(h_\wa),j'\in\{j_1,\ldots,j_2\}$ existing. Furthermore, let $\mathfrak D \in \mathcal{B}(\hori)$ be a set for which for almost every $t \in \mathfrak D$, we have  $\arr_{\wa,j_1}(u,t) = \arr_{\wa,j_2}(u,t)$ and $h_\wa(t)>0$.  Then $\ell^u_{\wa,j'}(h_\wa) = \ell^u_{\wa,j_1}(h_\wa)$ on 
     $\bigcup_{\tilde{j}=j_1}^{j_2} \arr_{\wa,\tilde{j}}(u,\cdot)(\mathfrak D)$ for all $j'\in\{j_1,\ldots,j_2\}$. \\
     In particular, if $h_\wa = 0$ on $\hori\setminus \mathfrak D$, then $\ell^u_{\wa,j'}(h_\wa) = \ell^u_{\wa,j_1}(h_\wa)$ on the whole set $\hori$ for all $j'\in\{j_1,\ldots,j_2\}$.
\end{lemma} 

\fi

 \begin{lemma}[\FDref{lem: flowcontrav}]\label{lem: flowcontrav}    For any  $\eflow \in L(\hori)^\GA$, we have  
 \begin{align*}
      \dup{\trav(u,\cdot)}{\eflow} =  \sum_{v \in \GV}\int_{\hori} -\id \di (\op_v   \eflow)
 \end{align*}
  with $-\id$ denoting the identity function in $L(\hori)$.  
 \end{lemma} 

\subsection{Existence of Flow Decomposition} 

\begin{theorem}[\FDref{thm: FlowDecomp}]\label{thm: FlowDecomp}
    Every parameterized $\source$,$\dest$-flow has a parameterized flow decomposition, that is, a vector of walk inflow rates $h \in L_+(\hori)^{\hat{\Routes}}$ together with zero-cycle inflow rates $h \in L_+(\hori)^{\mathcal{C}}$ such that $\eflow = \sum_{\wa \in \hat{\Routes}}\ell^u_\wa(h_\wa) + \sum_{c \in \mathcal{C}}\ell^u_c(h_c)$.
\end{theorem}

 \section{Insights Regarding Tolls for Dynamic Flows Derived in \cite{GHS24TollSODA}}\label{sec:AppTolls}

 \subsection{Adjoint Operator of \texorpdfstring{\boldmath$\ell^u$}{lu}}
 
\begin{lemma}[\Tollref{lem: aggCostsVSwalkCosts}] \label{lem: aggCostsVSwalkCosts} 
Consider  $h \in \edom{\Routes}$ with $\g:= \ell^u(h)$   as well as a measurable function $\prices:\hori\to \R^\GA_+$. 

Then  
the equality  $\dup{\prices}{\g} = \sum_{\wa \in \Routes}\dup{\Pf^{\prices}_\wa(u,\cdot)}{{h}_\wa}$ is valid (with possibly both expressions being equal to $\infty$). 

In particular, for $h\in \wir$ we have $\gamma_i\cdot \dup{\trav(u,\cdot)}{\g^i} = \sum_{\wa \in \Routes_i}\dup{\Psi_\wa(u,\cdot)}{{h}_\wa}$ for any $i\in I$.

\end{lemma}

\subsection{Implementability}

We consider  a single-source, singe-\sink network in this section.

\begin{theorem}[\Tollref{thm: CombiChara}]\label{thm: CombiChara}
    For networks where all commodities share the same source and \sink and the travel times are strictly positive, the following statements are equivalent: 
    \begin{thmparts}
        \item $u$ is implementable. \label[thmpart]{thm: CombiChara:Optimal}
    \item $u$ does not send flow along outgoing edges from the destination, i.e.~$u_\arc = 0$ for all $\arc \in \delta^+(\dest)$. 
\end{thmparts}
\end{theorem}

\section{List of Symbols}

{
	\newcommand{\losEntry}[2]{#1 & #2 \\}
	\renewcommand{\arraystretch}{1.2}
	
	\begin{longtable}{p{4cm}p{10cm}}
		Symbol				& Description \\\hline
		
		\hline\multicolumn{2}{l}{\textbf{General}}\\\hline
		
		\losEntry{$L(\hori)$}{space of integrable functions on $\hori$}
		\losEntry{$L_+(\hori)$}{non-negative functions in $L(\hori)$}
        \losEntry{$\seql[1][M][L(\hori)]$}{vectors  $(h_m)_{m\in M}\in (L(\hori))^M$ for an arbitrary countable set $M$ whose corresponding series $\sum_{m\in M}h_m$ converges absolutely in $L(\hori)$. }
        \losEntry{$\seql[\infty][M][L^\infty(\hori)]$}{vectors  $(h_m)_{m\in M}\in (L^\infty(\hori))^M$  whose entries are uniformly bounded, i.e.~$\sup_{m \in M}\norm{h_m}_{\infty}< \infty$. }
		\losEntry{$L^\infty(\hori)$}{space of measurable essentially bounded functions on $\hori$}
		\losEntry{$L^\infty_+(\hori)$}{non-negative functions in $L^\infty(\hori)$}

		\losEntry{$\sigma$}{the Lebesgue measure on $\hori$}
		\losEntry{$\mathfrak{T}, \mathfrak{D}$}{measurable subsets of $\hori$}

        \losEntry{$1_{\mathfrak T}$}{characteristic function of a  set~$\mathfrak T$, i.e. $1_{\mathfrak T}(t)=1$ if $t \in \mathfrak T$ and $0$, otherwise}
        
        \losEntry{$1_{\arc}:=1_{\{\arc\}}$}{characteristic function of singleton set $\{\arc\}$}        

        \losEntry{$\dup{f}{g}$}{the bilinear form between the dual pair $(\seql[1][M][L(\hori)],\seql[\infty][M][L^\infty(\hori)])$, i.e.\ $\dup{f}{g} \coloneqq \sum_{m \in M}\int_\hori f_m\cdot g_m \sigma$}

		\hline\multicolumn{2}{l}{\textbf{Network}}\\\hline
		
		\losEntry{$G=(\GV,\GA)$}{directed graph with nodes $\GV$ and edges $\GA$}
		\losEntry{$\edgesFrom{v}$}{edge starting from node $v$}
		\losEntry{$\edgesTo{v}$}{edge ending at node $v$}
		\losEntry{$\source_i \in V$}{source node of commodity $i$}
		\losEntry{$\dest_i \in V$}{destination node of commodity $i$}
        \losEntry{$\hori'=[0,t_f']$}{planning horizon on which all flow has to be supported}
        \losEntry{$\hori:=[0,t_f]$}{enlarged (due to technical reasons) planning horizon}
		\losEntry{$t \in \hori$}{time}
		\losEntry{$\hat\Routes$}{set of (finite) \stwalk s}
		\losEntry{$\Routes \coloneqq \bigcup_i\Routes_i$}{set of walk-commodity pairs where $\Routes_i \coloneqq \hat\Routes \times \set{i}$}
		%\losEntry{$\Routes'$}{arbitrary collection of (finite)  walks}  
        %\losEntry{$\mathcal{C}$}{set of simple cycles}
        %\losEntry{$\mathcal{C}^\dest$}{set of $\dest$-cycles (not necessarily simple!), i.e.\ cycles starting at the destination~$\dest$}
  
		\losEntry{$\wa=(\arc_1,\dots,\arc_k)$}{walk consisting of edges $\arc_j=(v_j,v_{j+1})$}
		\losEntry{$\wa[j]$}{$j$-th edge on walk $\wa$}
		
		\losEntry{$I$}{(finite) set of commodities}
		\losEntry{$\gamma_i > 0$}{value-of-time parameter of commodity~$i$}
		\losEntry{$\inflow_i \in L_+(\hori)$}{network inflow rate of commodity $i$}

		\hline\multicolumn{2}{l}{\textbf{Flows}}\\\hline
		\losEntry{$\wir$}{set of admissible walk-inflows}		
		\losEntry{$h \in \wir\subseteq L_+(\hori)^{\Routes}$}{multi commodity walk-inflow} 
		\losEntry{$\g = \ell(h) \in L_+(\hori)^\GA$}{aggregated edge flow of $h$}
 	\losEntry{ $\ell: \wir\to L_+(\hori)^\GA$}{network loading mapping from $h$ to $\g$} 
        \losEntry{$\trav_\arc(\g,t)$}{edge traversal time under $\g$ when entering edge $\arc$ at time $t$ under $\g$ (absolutely continuous)}
		\losEntry{$\exit_\arc(\g,t)\in \hori$}{edge exit time when entering edge $\arc$ at time $t$ under $\g$: $\exit_\arc(\g,t) \coloneqq t+\trav_\arc(\g,t)$ (non-decreasing)}
		\losEntry{$\arr_{\wa,j}(\g,t) \in \hori$}{arrival time in front of the $j$-th edge of walk $\wa$ when entering this walk at time $t$ under $\g$}
		
		\losEntry{$\Psi_{(\hat{\wa},i)}(\g,t)$}{weighted total travel time for particles of commodity $i$ entering walk $\hat{\wa}$ at time $t$ under $\g$}%: $\Psi_{(\hat{\wa},i)}(\g,t) \coloneqq \gamma_i\sum_j \trav_{\wa[j]}(\arr_{w,j}(\g,t))$}
		
		\losEntry{$\prices: \hori \to \Rnn^A$}{time dependent edge-tolls}
		\losEntry{$\Pf^\prices_\wa(\g,t)$}{total toll along walk $\wa$ when entering at time $t$ under $\g$ and $\prices$}
		\losEntry{$\g^i \in L_+(\hori)^\GA$}{edge flow of commodity $i$ under $h$}

        \hline\multicolumn{2}{l}{\textbf{Parameterized  Flows}}\\\hline
        \losEntry{$\wir^u$}{set of admissible walk-inflows for the parameterized network loading}
		\losEntry{$h \in \wir^u\subseteq L_+(\hori)^{\Routes}$}{multi commodity walk-inflow} 
		\losEntry{$\g = \ell^u(h) \in L_+(\hori)^\GA$}{aggregated edge flow of $h$ under $u$}
		\losEntry{$\ell^u: \wir^u \to L_+(\hori)^\GA$}{parametrized network loading mapping from $h$ to parameterized flow $\g$} 
        \losEntry{$\ell^u_{\wa,j}( h_\wa)$}{the flow on the $j$-th edge on walk $\wa$ induced by inflow into that walk under $ h_\wa$ under $u$, without aggregating over multiple occurrences of that edge}
		\losEntry{$\ell^u_{\wa,\arc}( h_\wa)$}{the flow on edge $\arc$ on walk $\wa$ induced by inflow into that walk under $ h_\wa$ under $u$,  aggregated over multiple occurrences of that edge}
%	\losEntry{$\nabla^u_v f$}{a measure denoting the net outflow from node $v$ under flow $f$ and the traversal times induced by $u$}
		
 	\end{longtable}
}

\end{document}